\tikzset{snake it/.style={decorate, decoration={snake, amplitude=0.5mm, segment length=3mm}}}
\tikzset{cross/.style={cross out, draw=black},
cross/.default={1pt}}
\def\Real{{\mathbb R}}
\def\R{{\mathbb{R}}}
\def\bbC{{\mathbb C}}
\def\bbH{{\mathbb H}}
\def\bbZ{{\mathbb Z}}
\def\Z{{\mathbb{Z}}}
\def\bbT{{\mathbb{T}}}
\def\Torus{{\mathbb T}}
\def\Expec{{\mathbb E}}
\def\eps{{\varepsilon}}
\def\One{{\mathbf{1}}}
\DeclareMathOperator{\Div}{div}
\newcommand{\diff}{\mathop{}\!\mathrm{d}}
\newcommand{\Id}{{\operatorname*{Id}}}
\newcommand{\Ft}[1]{{\widehat{#1}}}
\newcommand{\supp}{\operatorname*{supp}}
\newcommand{\Impt}{\operatorname*{Im}}
\newcommand{\Rept}{\operatorname*{Re}}
\def\lsim{{\,\lesssim\,}}
\def\bra#1{\mathinner{\langle{#1}|}}
\def\ket#1{\mathinner{|{#1}\rangle}}
\def\braket#1{\mathinner{\langle{#1}\rangle}}
\def\Braket#1{\langle{#1}\rangle}
\newcommand{\mcal}[1]{{\mathcal{#1}}}
\newcommand{\ldom}{\prec}
\numberwithin{equation}{section}
\newtheorem{theorem}{Theorem}
\numberwithin{theorem}{section}
\newtheorem{lemma}[theorem]{Lemma}
\newtheorem{corollary}[theorem]{Corollary}
\newtheorem{proposition}[theorem]{Proposition}
\title{Quantitative delocalization estimates for the tight-binding model}
\title{The scaling limit of the resolvent in the Anderson tight-binding model}
\title{Self-consistent equations and Quantum Diffusion for the Anderson model}
\author{Adam Black}
\address{Max Planck Institute for Mathematics in the Sciences, Leipzig, Germany}
\email{adam.black@mis.mpg.de}
\author{Reuben Drogin}
\address{Department of Mathematics, Yale University, New Haven, CT}
\email{reuben.drogin@yale.edu}
\author{Felipe Hern\'{a}ndez}
\address{Department of Mathematics, Penn State University, State College, PA}
\email{felipeh@psu.edu}
\date{\today}
\begin{document}
\begin{abstract}
We consider the Anderson tight-binding model on $\Z^d$, $d\geq 2$, with Gaussian noise and at low disorder $\lambda>0$. We derive a diffusive scaling limit for the entries of the resolvent $R(z)$ at imaginary part $\Impt z\sim\lambda^{2+\kappa_d}$, $\kappa_d>0$, with high probability.
As consequences, we establish quantum diffusion (in a time-averaged sense) for the Schr\"{o}dinger propagator at the longest timescale known to date and improve the best available lower bounds on the localization length of eigenfunctions.
Our results for $d=2$ are the first quantum diffusion results for the Anderson model on $\bbZ^2$. The proof avoids the use of diagrammatic expansions and instead proceeds by analyzing certain self-consistent equations for $R(z)$. This is facilitated by new estimates for $\|R(z)\|_{\ell^p\rightarrow \ell^q}$ that control the recollisions.
\end{abstract}
\maketitle
\section{Introduction}

In this paper, we consider the Anderson tight-binding model
\begin{equation}
\label{eq:tbm}
H=\Delta_{\Z^d}+\lambda V,
\end{equation}
where $\Delta_{\Z^{d}}$ is the nearest-neighbor Laplacian on $\Z^{d}$ (normalized as in~\eqref{eq:laplacian-def}), the potential $V:\Z^{d}\to \Real$ has independent standard Gaussian values, and $\lambda>0$ is a coupling constant.

The Hamiltonian~\eqref{eq:tbm} was originally introduced by Anderson~\cite{anderson1958absence} in order to study the propagation of electrons in a disordered environment, and more generally functions as a simple model for wave transport in random media. For $\lambda\gg 1$ or $d=1$, there is a large body of literature showing that $H$ exhibits \textit{Anderson localization}, meaning that almost surely it has an orthonormal basis of exponentially decaying eigenfunctions, see \cite{aizenman2015random} and the references therein.
In contrast, the small coupling regime $\lambda\ll 1$ is the subject of several outstanding conjectures in mathematical physics (\cite{simon1984fifteen,simon2000schrodinger}).
Referred to as the \emph{extended states} and \emph{quantum diffusion} conjectures, they predict the existence of delocalized states whose Schr\"{o}dinger evolution is described by a diffusion process.
We list a few forms of these conjectures informally,
which are expected to hold for $d\geq 3$ and sufficiently small $\lambda$:
\begin{itemize}
\item \emph{Diffusion of the propagator}: For fixed $\psi\in\ell^2(\bbZ^d)$ the probability density of the wavefunction $|(e^{-itH}\psi)(x)|^{2}$ satisfies a heat equation under an appropriate scaling limit.
In particular, if $\psi$ is localized, then
\begin{equation}
\label{eq:dynamical-deloc}
\sum_{x\in\Z^d} |x|^2 |(e^{-itH} \psi)(x)|^2 \sim \lambda^{-2}t ,
\end{equation}
implying that $e^{-itH}\psi$ has most of its $\ell^2$ mass at the \textit{diffusive} scale $\lambda^{-1}\sqrt{t}.$
\item \emph{Diffusion of the resolvent:}
For $E$ in the interior of the spectrum $\sigma(\Delta)$, the squared entries of the resolvent $R(z):=(H-E-i\eta)^{-1}$ satisfy an elliptic equation under an appropriate scaling limit as $\eta\to 0$.  In particular, they scale like the Green's function of the Laplacian:
\begin{equation}
\label{eq:resolvent-scaling-intro}
|R(z)_{xy}|^2 \sim\lambda^2 |x-y|^{2-d}.
\end{equation}
\item \emph{Delocalization of eigenfunctions:} Solutions of $H\psi = E\psi$ are \textit{delocalized} for energies $E$ in the interior of $\sigma(\Delta)$.
On a qualitative level, this means that the spectrum is continuous in the bulk.
Quantitatively, this should be interpreted to mean that most eigenfunctions of some truncation of $H$ satisfy a lower bound on their ``localization length,"
i.e., one of several measures of their spatial extent.
\end{itemize}

The factors of $\lambda^{-2}$ and $\lambda^2$ in the above are significant because they are, respectively, the critical time and energy scales past which diffusion may first be observed. Since the propagator and the resolvent are related by a Fourier transform in the time/energy variables, the first two items are thought of as equivalent, although in practice establishing the correspondence is non-trivial. In dimension $d=2$, it is instead conjectured that the spectrum of $H$ is pure point with dynamically localized eigenfunctions.
Nevertheless, in this dimension, diffusion is still expected up to the timescale $e^{\lambda^{-2}}$ (or dually down to the energy scale $\Impt z\sim e^{-\lambda^{-2}}$).

Progress on these conjectures, reviewed below, has been limited to a handful of results
and has mostly focused on establishing a scaling limit for the propagator \emph{in expectation} and at timescales just beyond the diffusive threshold of $\lambda^{-2}$.
In this paper, we show that for $d\geq 2$ the resolvent $R(z)$ is diffusive down to energy scales $\Impt z\sim\lambda^{2+\kappa_d}$, $\kappa_d>0$, \emph{with high probability}.
As consequences, we establish quantum diffusion (in a time-averaged sense) for the propagator at the longest timescale known to date and improve the best available lower bounds on the localization length of eigenfunctions.
Our results for $d=2$ are the first quantum diffusion results for the Anderson model on $\bbZ^2$.

We now state our main results before comparing them more closely to prior work.  Below we use $(e^{-itH})_{xy}$ and $R(z)_{xy}$ to
refer to the entries of $e^{-itH}$ and $R(z)$ considered as matrices. Note that $H$ is self-adjoint as the sum of a bounded operator and a multiplication operator, so there is no trouble defining its propagator.
\begin{theorem}[Dynamical delocalization]
\label{thm:main-prop}
Let $\delta>0$ and $H$ be as in \eqref{eq:tbm}.
There exists $c_{\delta},C_{\delta}>0$,  independent of $\lambda$,  such that for any $C_{\delta}\leq T\leq \lambda^{-2-\kappa_d+\delta}$ we have
\begin{equation}
\label{eq:averaged-prop-in-ball}
\frac{1}{T}\int_0^{T} \sum_{x\in \mathbb{Z}^d} |x|^2|(e^{-itH})_{0x}|^2
\geq c_{\delta}\lambda^{-2}T
\end{equation}
and
\begin{equation}
\label{eq:lower-bound-prop}
\frac{1}{T}\int_0^{T} \sum_{|x|\geq C_{\delta}\lambda^{-1}\sqrt{T}} |(e^{-itH})_{0x}|^2\leq \delta,
\end{equation}
with probability at least $1-C_{\delta}\lambda^{1000}$ and $\kappa_d>0$ given by
\begin{align*}
\kappa_d =
\begin{cases}
\frac{2}{13}, &d=2, \\
\frac{2}{9}, &d=3 \\
\frac12, &d >12.
\end{cases}
\end{align*}
\end{theorem}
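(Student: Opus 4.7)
The plan is to deduce Theorem~\ref{thm:main-prop} from the main diffusive scaling limit for the resolvent at $\Impt z\sim\lambda^{2+\kappa_d}$ (advertised in the abstract), via a standard Plancherel identity. Writing $R(E+i\eta)_{0x}=i\int_0^\infty e^{i(E+i\eta)t}(e^{-itH})_{0x}\,dt$, Parseval yields
\begin{equation}
\label{eq:proof-plan-plancherel}
\int_{\mathbb R}|R(E+i\eta)_{0x}|^2\,dE \;=\; 2\pi\int_0^\infty e^{-2\eta t}|(e^{-itH})_{0x}|^2\,dt,
\end{equation}
and the identity is preserved after multiplying by any nonnegative weight $w(x)$ and summing over $x$. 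Setting $\eta=T^{-1}$ with $T\sim\lambda^{-2-\kappa_d+\delta}$ puts us exactly in the regime where the scaling limit applies.

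For the lower bound~\eqref{eq:averaged-prop-in-ball}, I apply \eqref{eq:proof-plan-plancherel} with $w(x)=|x|^2$. The scaling limit predicts $|R(z)_{0x}|^2\sim\lambda^2|x|^{2-d}$ on scales $|x|\lesssim L:=\lambda^{-1}\sqrt{T}$ (with appropriate decay beyond), so that for $E$ in a fixed compact subinterval of the bulk of $\sigma(\Delta_{\bbZ^d})$,
\begin{equation*}
\sum_{x\in\bbZ^d}|x|^2|R(E+iT^{-1})_{0x}|^2 \;\gtrsim\; \sum_{|x|\leq L}\lambda^2|x|^{4-d} \;\sim\; \lambda^2 L^4 \;=\; \lambda^{-2}T^2
\end{equation*}
with high probability. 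Integrating in $E$ over the bulk subinterval and using \eqref{eq:proof-plan-plancherel} gives $\int_0^\infty e^{-2t/T}\sum_x|x|^2|(e^{-itH})_{0x}|^2\,dt\gtrsim\lambda^{-2}T^2$. Nonnegativity of the integrand shows that truncating to $t\in[0,T]$ loses at most a constant factor, the tail $t\geq T$ being controlled by a ballistic bound $\sum_x|x|^2|(e^{-itH})_{0x}|^2\lesssim t^2$ of Combes--Thomas / Lieb--Robinson type. This yields \eqref{eq:averaged-prop-in-ball}.

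The upper bound~\eqref{eq:lower-bound-prop} follows by the same mechanism with $w(x)=\mathbf{1}_{|x|\geq C_\delta\lambda^{-1}\sqrt T}$. The scaling limit should show that only a $\delta$-fraction of the $\ell^2$ mass of $R(E+iT^{-1})e_0$ lies outside the diffusive ball, giving $\int_{\mathrm{bulk}}dE\sum_{|x|\geq C_\delta L}|R(E+iT^{-1})_{0x}|^2\lesssim\delta T$ once $C_\delta$ is taken sufficiently large. Plancherel converts this into the desired $\delta$-bound on the time-averaged mass outside the ball, once the exponential window $e^{-2t/T}$ is compared with $\mathbf{1}_{[0,T]}$; nonnegativity of the integrand makes this comparison cost only $\delta$-dependent constants.

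The main obstacle will be producing the resolvent estimates on a single high-probability event that is uniform in $E$ across the bulk subinterval and captures both the diagonal lower bound and the off-diagonal decay simultaneously. Such uniformity is obtained by combining Gaussian concentration for the self-consistent equations (the core technical device of the paper) with a union bound over a discretization in $E$, giving the stated complement probability $C_\delta\lambda^{1000}$. A secondary technical issue is the passage from the exponential weight $e^{-2\eta t}$ to the sharp window $\mathbf{1}_{[0,T]}$, but by nonnegativity and ballistic growth of the integrands this costs only $\delta$-dependent constants.
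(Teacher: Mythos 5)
The upper bound~\eqref{eq:lower-bound-prop} in your plan is essentially the paper's argument, but the lower bound~\eqref{eq:averaged-prop-in-ball} is approached by a genuinely different route, and that route has a real gap. You try to obtain a lower bound on $\int_0^T \sum_x |x|^2|(e^{-itH})_{0x}|^2\,dt$ directly, by lower-bounding $\sum_x |x|^2|R(E+iT^{-1})_{0x}|^2 \gtrsim \lambda^{-2}T^2$ in the bulk, passing through Plancherel, and then truncating the exponentially-weighted time integral to $[0,T]$ using a ballistic (Lieb--Robinson/Combes--Thomas) bound on the tail. But the Plancherel identity
\[
\int_0^\infty e^{-2t/T}\, g(t)\, dt \gtrsim \lambda^{-2}T^2, \qquad g(t):=\sum_x |x|^2|(e^{-itH})_{0x}|^2,
\]
only yields a lower bound on $\int_0^T g$ after subtracting the tail. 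With $g(t)\lesssim t^2$ ballistically, the tail contributes $\int_T^\infty e^{-2t/T} g(t)\,dt \lesssim T^3$, and in the regime $T\gg\lambda^{-2}$ (which is exactly what the theorem permits, $T$ up to $\lambda^{-2-\kappa_d+\delta}$) this $T^3$ swamps the main term $\lambda^{-2}T^2$. So the truncation step fails precisely where the theorem is interesting. A subsidiary issue is that $|x|^2$ is unbounded, so the paper's resolvent profile estimates (Corollary~\ref{cor:main-cor} and Lemma~\ref{lem:radial-cutoff}, stated for $f\in C_c^\infty$ or $a\in\ell^\infty$) do not directly yield your claimed lower bound on $\sum_x|x|^2|R_{0x}|^2$.

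The paper avoids this entirely by never trying to lower-bound the second moment head on. Instead it proves the \emph{upper bound}
\[
\frac{1}{T}\int_0^T \sum_{|x|\leq c\lambda^{-1}\sqrt T}|(e^{-itH})_{0x}|^2\,dt \leq \tfrac12,
\]
which is the right orientation for Plancherel (upper bounds pass through $e^{-2t/T}\geq e^{-2}$ on $[0,T]$ with no tail problem). Combining this with unitarity $\frac{1}{T}\int_0^T\sum_x |(e^{-itH})_{0x}|^2\,dt = 1$ shows that at least half the mass lives at $|x|\geq c\lambda^{-1}\sqrt T$, and then the second-moment lower bound~\eqref{eq:averaged-prop-in-ball} is immediate with $c_\delta = c^2/2$. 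This is both stronger and more robust than what you attempt: you never need quantitative control of $|R_{0x}|^2$ on the far field, only that the mass inside a small ball is bounded away from $1$, which follows from Lemma~\ref{lem:radial-cutoff} with $r\ll 1$. Also note that both bounds require controlling the \emph{full} $E$-integral, not just the bulk subinterval; for the lower bound positivity lets you drop the non-bulk part, but for the upper bound~\eqref{eq:lower-bound-prop} the contribution from $E$ near $\Sigma_d$ and near $\pm 2d$ must be estimated separately. The paper does this in Lemma~\ref{lem:integral-to-max} by a spectral decomposition of $\delta_0$, and you should not skip this step.
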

See the statement of Theorem~\ref{thm:diffusive-profile} for the precise values of $\kappa_d$ (including values for $4\leq d\leq 12$).
Next we state our result about the diffusive scaling of the resolvent. For technical reasons, explained in Section~\ref{sec:prelims}, we exclude the following discrete set of energies (in addition to the energies outside of the spectrum of $\Delta$)
\[
\Sigma_d := (\Real \setminus [-2d,2d]) \cup (2d+4\bbZ)  \cup \{0\}.
\]
\begin{theorem}\label{thm:scaling}
Let $d\geq 3$, $E\not\in\Sigma_d$, and $\kappa < \kappa_d$.
Then there is a constant $\beta_E$ and $\kappa'<\kappa$ such that the following holds:
For any $f\in C_c^\infty(\Real^d)$ and setting $\ell = \lambda^{-2-\kappa'/2}$ we have, almost surely,
\begin{equation}
\label{eq:resolvent-limit}
\lim_{\lambda\to 0}
\lambda^{-2} \ell^{-2}
\sum_{x\in\bbZ^d} f(x/\ell) |R(E+i\lambda^{2+\kappa})_{0x}|^2
=  c_d \beta_E \int f(y) |y|^{2-d} \diff y.
\end{equation}
The analogous result holds in $d=2$ provided $f=\Div \vec{v}$ for a vector field
$\vec{v}\in C_c^\infty(\Real^d;\Real^d)$ and $|y|^{2-d}$ is replaced by $-\log(|y|)$.
\end{theorem}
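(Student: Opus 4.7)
The plan is to deduce Theorem~\ref{thm:scaling} from the pointwise diffusive profile established in Theorem~\ref{thm:diffusive-profile} by a Riemann-sum approximation, together with a Borel--Cantelli argument to upgrade from high probability to almost sure convergence. The key scaling observation is that, by the Drude formula, the effective quantum diffusion constant scales as $D_E \sim \lambda^{-2}$, so the diffusion length $L_D := \sqrt{D_E/\eta}$ satisfies $L_D \sim \lambda^{-2-\kappa/2}$ and hence $\ell/L_D = \lambda^{(\kappa-\kappa')/2} \to 0$. Thus the spatial scale $\ell$ lies deep inside the diffusive regime, where the resolvent of $-D_E\Delta + \eta$ on $\Real^d$ agrees with the Laplacian Green's function $c_d D_E^{-1}|x|^{2-d}$ up to $o(1)$ relative error for $d\geq 3$, and with $(2\pi D_E)^{-1}\log(L_D/|x|)$ for $d=2$.

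Taking Theorem~\ref{thm:diffusive-profile} to deliver an estimate of the form $|R(E+i\eta)_{0x}|^2 \approx \lambda^2 \beta_E D_E^{-1} G_{D_E,\eta}(x)$ with high probability uniformly on a dyadic range $1\lesssim|x|\lesssim \ell$, I would replace $|R_{0x}|^2$ by this profile on an annulus $\epsilon\ell \leq |x| \leq \diam(\supp f)\,\ell$, and use the smoothness of $f$ to convert the sum into a Riemann integral
\[
\lambda^{-2}\ell^{-2}\sum_{|x|\geq \epsilon\ell} f(x/\ell)|R_{0x}|^2 = (1+o(1))\, c_d \beta_E \int_{|y|\geq \epsilon} f(y)|y|^{2-d}\diff y.
\]
For the inner region $|x| \leq \epsilon\ell$, I would use the Ward identity $\sum_x |R_{0x}|^2 = \eta^{-1}\Impt R(z)_{00}$ together with the Wegner-type bound $\Impt R(z)_{00} = O(1)$ to bound the normalized contribution by $O(\epsilon^2)$, which matches the integrability of $|y|^{2-d}$ near the origin in $d\geq 3$; sending $\epsilon\to 0$ after $\lambda\to 0$ closes that case. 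The almost sure upgrade then follows by applying Borel--Cantelli along a sequence $\lambda_n = 2^{-n}$, supplemented by a continuity-in-$\lambda$ interpolation for in-between values.

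The main obstacle I expect is the two-dimensional case. Here the diffusive Green's function behaves like $G_{D_E,\eta}(x) \approx (2\pi D_E)^{-1}(\log L_D - \log|x|)$ for $|x|\ll L_D$, and the divergent constant $\log L_D$ multiplies $\int f\diff y$. This divergence is cancelled precisely by requiring $f = \Div\vec{v}$ with compactly supported $\vec{v}$, so that $\int f\diff y = 0$, leaving the finite remainder $-(2\pi D_E)^{-1}\int f(y)\log|y|\diff y$, equivalently $(2\pi D_E)^{-1}\int \vec{v}\cdot \nabla\log|y|\diff y$ after integration by parts. A secondary subtlety, present in every dimension, is that Theorem~\ref{thm:diffusive-profile} is likely to control the diffusive profile only in some averaged sense; integration against the smooth test function $f(x/\ell)$ should produce the pointwise-in-$y$ accuracy needed after Riemann summation, but one must control the stochastic error terms uniformly across the support of $f$. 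Finally, pinning down the precise constant $\beta_E$ requires carefully matching the diffusion constant arising from the self-consistent analysis to the normalization implicit in the stated limit.
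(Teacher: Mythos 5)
Your overall architecture (deduce the scaling limit from Theorem~\ref{thm:diffusive-profile}, compare to an elliptic operator, then upgrade to almost-sure convergence via Borel--Cantelli and Lipschitz continuity in $\lambda$) is the same as the paper's, and your $d=2$ observation that $\int f = 0$ under $f=\Div\vec{v}$ kills the logarithmic divergence is correct. However, two of your intermediate steps do not hold.

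First, Theorem~\ref{thm:diffusive-profile} does not deliver a pointwise estimate for $|R_{0x}|^2$. It controls $(RAR^*)_{xx}=\sum_y a(y)|R_{xy}|^2$, i.e.\ weighted averages of $|R_{xy}|^2$ over $y$, uniformly in $x$. A high-probability pointwise statement $|R_{0x}|^2\approx \lambda^2\beta_E D_E^{-1}G_{D_E,\eta}(x)$ is in fact not expected to be true: $R_{0x}$ should behave like a complex Gaussian, so $|R_{0x}|^2$ has order-one relative fluctuations. The paper therefore never ``replaces'' $|R_{0x}|^2$ by a profile and takes a Riemann sum. Instead it takes $a(x)=f(x/\ell)$ and works directly with the averaged quantity $\sum_x f(x/\ell)|R_{0x}|^2=(RAR^*)_{00}$, identifies it with $(\Id-\lambda^2\tilde{\mcal{K}})^{-1}\tilde{\mcal{K}}a$ up to the Theorem~\ref{thm:diffusive-profile} error, and then invokes Proposition~\ref{prp:elliptic-limit} to compare the lattice convolution operator to the continuum resolvent $(-\vartheta\Delta_{\Real^d}+\lambda^{-2}\eta\,m)^{-1}$; the power-law Green's function only appears after taking $\lambda\to 0$ in that rescaled PDE.

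Second, your treatment of the inner region is wrong. The Ward identity gives $\sum_{x\in\bbZ^d}|R_{0x}|^2=\eta^{-1}\Impt R_{00}\sim\eta^{-1}$, but this is the \emph{total} mass and does not localize to a ball. Applying it to $|x|\leq\eps\ell$ gives the normalized bound $\lambda^{-2}\ell^{-2}\eta^{-1}=\lambda^{\kappa'-\kappa}$, which diverges as $\lambda\to 0$ because $\kappa'<\kappa$ (equivalently, $\ell\ll\lambda^{-1}\eta^{-1/2}$). Indeed, the entire point of working at the sub-diffusive length $\ell$ is that most of the $\ell^2$-mass of $R(z)\delta_0$ lives outside $B_\ell$, so the global Ward identity wildly overcounts the inner ball. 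To get the $O(\eps^2)$ bound you want there, you need the \emph{local} diffusive control, which is exactly what the paper's Lemma~\ref{lem:radial-cutoff} (the bound $\phi(r)\lsim r^2$) supplies, and that lemma is itself a consequence of Corollary~\ref{cor:main-cor} rather than of the Ward identity alone.
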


The above result establishes an averaged version of the scaling~\eqref{eq:resolvent-scaling-intro}.  Indeed, rearranging the above equation we obtain
\[
\sum_{x\in\bbZ^d} f(x/\ell)|R_{0x}|^2
= \lambda^2 c_d\beta_E \int f(y/\ell) |y|^{2-d}\diff y
+ o(\lambda^2\ell^2).
\]
To understand the scaling of $\ell$, note that with $\eta := \Impt z = \lambda^{2+\kappa}$, we have
$\ell \ll \lambda^{-1}\eta^{-1/2}$.
The scale $\lambda^{-1}\eta^{-1/2}$ is the analogue of the diffusive length scale $\lambda^{-1}T^{1/2}$ for the resolvent under the correspondence $T \leftrightarrow \eta^{-1}$.

Finally, we state our result on the delocalization of eigenfunctions. We consider $H_L$, as in \eqref{eq:tbm},
but restricted to the torus $\bbZ^d_L := \bbZ^d/ (L\bbZ^d)$ with $L =\left\lceil\lambda^{-100}\right\rceil$ (fixed throughout the paper).  We take $\psi_j$ to be an orthonormal basis\footnote{With probability $1$, the eigenvalues are distinct.} of eigenfunctions with eigenvalues $E_j$ so that $H_L\psi_j = E_j\psi_j$.
For any $r>0$, we define the set of localized eigenvalues by
\begin{align*}
\mathfrak{L}(r) & :=
\{E_j \mid \text{ there exists } x_0\in\bbZ^d_L \text{ such that }
\|\psi_j\|_{\ell^2(B_r(x_0))} \geq 1 -r^{-4d}\}.
\end{align*}
Note $\mathfrak{L}(r)$ contains all eigenvalues whose eigenfunction  exponentially decays at scale $\ll r$.
\begin{theorem}[Delocalization bound for most eigenfunctions]
\label{thm:deloc}
Fix any $\eps,\delta>0$ and $E\in [-2d,2d]$ with $d(E,\Sigma_d)\geq 2\eps$.
Then with probability at least $1-\lambda^{1000}$,
\begin{align*}
\frac{\Big | \mathfrak{L}(\lambda^{-2-\kappa_d/2+\delta})\cap [E-\eps, E+\eps]\Big |}{\Big |\sigma(H)\cap [E-\eps, E+\eps]\Big|} \lsim_{\epsilon,\delta} \lambda^{c\delta}
\end{align*}
with $\kappa_d$ as in Theorem~\ref{thm:main-prop}.
\end{theorem}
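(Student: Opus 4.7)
The plan is to convert the diffusive resolvent bound of Theorem~\ref{thm:scaling} into a bound on the spectral mass carried by localized eigenfunctions. Choose $\kappa \in (\kappa_d - 2\delta, \kappa_d)$ and let $\kappa' < \kappa$ be the corresponding exponent from Theorem~\ref{thm:scaling}, arranged (by tracking constants in its proof) so that $\kappa_d - 2\delta < \kappa'$ and $\kappa - \kappa' \gtrsim \delta$; then $r := \lambda^{-2-\kappa_d/2+\delta} \ll \ell := \lambda^{-2-\kappa'/2}$. Set $\eta = \lambda^{2+\kappa}$, $I = [E - \eps, E + \eps]$, and fix a non-negative $f \in C_c^\infty(\R^d)$ with $f(0) > 0$ and $\widehat{f}\geq 0$ (take for instance $f = g\ast\widetilde g$). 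On the event of probability $\geq 1 - \lambda^{1000}$---obtained by invoking the high-probability form of Theorem~\ref{thm:scaling} with a union bound over the $L^d$ translates $x_0 \in \bbZ_L^d$ and an $\eta$-net of $\mu \in I \setminus \Sigma_d$---the diffusive bound $\sum_x f((x-x_0)/\ell)|R(\mu + i\eta)_{x_0 x}|^2 \lsim \lambda^2 \ell^2$ holds uniformly in $x_0$ and $\mu$.

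The key step is a spectral expansion. Writing $R(z) = \sum_j |\psi_j\rangle\langle\psi_j|/(E_j-z)$, averaging $\mu$ over $I$, and summing over $x_0$, the diagonal ($j=k$) contribution is $\frac{\pi}{|I|\eta}\sum_{E_j \in I}\sum_{x_0}|\psi_j(x_0)|^2\, \langle f((\cdot-x_0)/\ell),|\psi_j|^2\rangle$, plus off-diagonal cross-terms and $O(\eta/\eps)$ boundary corrections from $E_j$ near $\partial I$. For each $\psi_j \in \mathfrak{L}(r)\cap I$ with localization center $x_j$, summing $x_0$ over $B_r(x_j)$ gives
\[
\sum_{x_0 \in B_r(x_j)} |\psi_j(x_0)|^2\, \langle f((\cdot-x_0)/\ell),|\psi_j|^2\rangle \geq (1 - r^{-4d})^2\, f(0),
\]
since $r \ll \ell$ means $f((x-x_0)/\ell) \approx f(0)$ throughout the effective support of $|\psi_j|^2$ and the localization property provides $\sum_{B_r(x_j)}|\psi_j(x_0)|^2 \geq 1 - r^{-4d}$. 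Combining with the resolvent upper bound and dividing by $|\sigma(H)\cap I| \gtrsim L^d |I|\rho(E)$ (from the Wegner density-of-states estimate, valid since $d(E,\Sigma_d) \geq 2\eps$) yields
\[
\frac{|\mathfrak{L}(r)\cap I|}{|\sigma(H)\cap I|} \lsim \frac{\eta\lambda^2\ell^2}{\rho(E)f(0)} = O_E(\lambda^{\kappa-\kappa'}) \lsim_{\eps,\delta}\lambda^{c\delta},
\]
provided the off-diagonal contribution is shown to be subleading.

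The main obstacle is controlling that off-diagonal error. After averaging over $x_0 \in \bbZ_L^d$ (using translation invariance of the potential's law on the torus), the spatial factor of each off-diagonal cross-term rewrites via the Fourier identity as $\int \widehat{f}(\xi)\big|\sum_y e^{i\xi\cdot y/\ell}\psi_j(y)\psi_k(y)\big|^2 \diff\xi \geq 0$, thanks to the choice $\widehat{f}\geq 0$. This positivity, combined with orthogonality of eigenfunctions (the leading term $f(0)\langle\psi_j,\psi_k\rangle$ vanishes, leaving only $O(r/\ell + r^{-4d})$-small corrections for localized pairs) and with Hilbert-Schmidt-type bounds on the complex spectral kernel $((E_j - z)(E_k - \bar z))^{-1}$, is what is needed to close the estimate. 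Managing this off-diagonal bookkeeping, together with the union bound delivering the $1 - \lambda^{1000}$ probability, is the most delicate analytic work of the proof.
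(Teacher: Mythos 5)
Your route differs from the paper's and, as you candidly acknowledge at the end, it has not been closed: the off-diagonal contribution is the entire content of the argument and you do not actually control it. The positivity device you propose ($\widehat f\geq 0$, sum over $x_0$, Parseval) does make the \emph{spatial} factor $\sum_{x_0,x}f((x-x_0)/\ell)\psi_j(x_0)\overline{\psi_k(x_0)}\overline{\psi_j(x)}\psi_k(x)$ real and nonnegative, and if one averaged $\mu$ over all of $\Real$ the energy factor $\int\frac{\diff\mu}{(E_j-\mu-i\eta)(E_k-\mu+i\eta)}$ would also have a favorable sign after symmetrizing in $(j,k)$. But you may only average $\mu$ over the set where the diffusive resolvent bound holds, i.e.\ over $[E-\eps,E+\eps]$ (away from $\Sigma_d$ and $\partial[-2d,2d]$); truncating that integral destroys the residue cancellation, and the resulting boundary error for a pair $(j,k)$ with $E_j$ or $E_k$ near $\partial I$ is of size $\eta^{-1}\log$, not small, and has no sign. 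The ``Hilbert--Schmidt bound on the spectral kernel'' that you wave at is exactly the missing piece, and it is not obviously a small perturbation: the number of such pairs near $\partial I$ can be $\gg\eta^{-1}r^d$. There is also a secondary slip: you invoke Theorem~\ref{thm:scaling} as a high-probability statement, but that theorem is an almost-sure limit as $\lambda\to 0$; the high-probability input you actually need is Corollary~\ref{cor:main-cor} together with Lemma~\ref{lem:radial-cutoff}, which is what the paper uses.

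The paper avoids the entire off-diagonal problem by not expanding $\sum_x f\,|R_{x_0x}|^2$ spectrally at all. Instead, in Lemma~\ref{lem:deterministic-localization-length} it fixes $x_0$ and estimates the single (deterministic) quantity $\sum_{x\in B_r(x_0)}\|(1-\One_{x_0,r})G\delta_x\|^2$ in two ways: from below by the Ward identity minus the (small) diffusive mass in $B_r(x_0)$, and from above by a Young-inequality split of the spectral sum into $j\in\mathfrak{B}$ and $j\notin\mathfrak{B}$, where the $j\in\mathfrak{B}$ piece is small \emph{because of the $(1-\One_{x_0,r})$ cutoff and the localization hypothesis} (producing the $r^{-4d}$), and the $j\notin\mathfrak{B}$ piece is controlled again by the Ward identity. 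Subtracting the two sides isolates $\eta^{-2}|\mathfrak{B}|$ with no cross terms ever appearing. This is the device your sketch is missing; you would need either to reproduce something of that flavor, or to carry out a genuine estimate of the boundary error in your $\mu$-average, before the $|\mathfrak{L}\cap I|/|\sigma(H)\cap I|\lsim\lambda^{c\delta}$ conclusion can be drawn.
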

In fact, one can take $\eps = \lambda^c$ above for $c$ small enough depending on $\delta$.  Moreover, we deduce Theorem~\ref{thm:deloc}
from a bound on the number of eigenfunctions localized near any specific $x_0\in\bbZ^d_L$, see Section~\ref{sec:deloc-pf}.

All prior quantum diffusion results on the Anderson model are formulated for the propagator and proceed by expanding $e^{-itH}$ perturbatively as a Dyson series.
This line of attack began with the
work of Spohn~\cite{spohn1977derivation} and was continued by Erd\H{o}s and Yau~\cite{EYkinetic}, who proved a scaling limit for the expected Wigner  transform\footnote{This encodes the distribution in phase space of the wavefunction $\psi$, see~\cite{EYkinetic} for the precise definition.} of $\psi$
up to the kinetic time $\lambda^{-2}$, which thus implies a version of~\eqref{eq:dynamical-deloc} in expectation at that time.  The combinatorial explosion of diagrammatic terms involved in the computation and the need to account for cancellations between the diagrams makes progress along these lines extraordinarily difficult.  Nevertheless, in a technical tour de force, Erd{\H{o}}s, Salmhofer, and Yau~\cite{ESYAnderson, ESYrecollision} established a scaling limit for the expected Wigner transform which again implies~\eqref{eq:dynamical-deloc}  up to a diffusive timescale on $\bbZ^d$, $d\geq 3$ (see also~\cite{ESYRSE,hernandez2024quantum} for analogous results on $\Real^d$).
In comparison to these results, our Theorem~\ref{thm:main-prop} holds with high probability, establishes diffusion up to a much longer timescale (on $\bbZ^3$,  up to $T=\lambda^{-2-\frac29}$ compared to $T=\lambda^{-2-\frac{1}{9800}}$ in~\cite{ESYAnderson}), and treats the hitherto unknown setting of $\bbZ^2$.  The drawback is that we do not prove a scaling limit and we require an average in time, which is a consequence of the fact that we compute primarily with the resolvent instead of the propagator.

For the resolvent, our main object of study, Theorem \ref{thm:scaling} is, to our knowledge, the first available in the diffusive energy regime,
though it is likely that the methods of~\cite{ESYAnderson} could be extended to obtain a diffusive result for $\Expec |R_{xy}(z)|^2$ when $\Impt z>\lambda^{2+\frac{1}{9800}}$.

Mathematically rigorous analysis of delocalization of eigenfunctions began with
the work of Schlag, Shubin, and Wolff~\cite{SSW}.  They showed that for the tight-binding
model on $\bbZ^2$, eigenfunctions must be delocalized to balls of radius at least the kinetic scale $\lambda^{-2}$.  A similar result was obtained by T. Chen~\cite{Chen} on $\bbZ^d$ using diagrammatic estimates along the lines of~\cite{EYkinetic}.  In principle, Chen's argument can be combined with the bounds of~\cite{ESYAnderson}
to prove a localization lower bound on most eigenfunctions of $H$ on the order at least $\lambda^{-2-\frac{1}{19600}}$, but
this does not appear explicitly in the literature to our knowledge.  Aside from this hypothetical bound,
we are not aware of any delocalization lower bounds for eigenfunctions of $H$ beyond the kinetic scale.

While progress on delocalization for the Anderson model itself has been very limited, there have been several breakthroughs in understanding delocalization in a variety of random band matrix ensembles
(for a partial list, see~\cite{bourgade2019random, bourgade2020random, yang2021random, yang2021delocalization,yang2022delocalization,yang2025delocalization,dubova2025delocalization}).
In these random matrix models the asymptotic parameter that replaces the coupling strength $\lambda$ is the ``bandwidth'' $W$ that parametrizes the number of random elements per row.
Thus, they may be seen as interpolations between the Anderson Hamiltonian, which has randomness only on the diagonal, and better understood mean-field models, in which every entry is random.
At a very high level, the framework developed to analyze these ensembles proceeds by studying certain \emph{self-consistent equations} satisfied by averages of the resolvent.

To our knowledge, no works have appeared attempting to address the Anderson Hamiltonian~\eqref{eq:tbm} at weak coupling via these random matrix methods.  The difficulty in doing so stems not only from the comparatively small amount of randomness present in the Anderson model, but also from its structure. Indeed, the Anderson model enjoys fewer symmetries than random band models, and the dispersive nature of the background Laplacian affects the small $\lambda$ phenomena in an essential way.

In this paper, we show that the self-consistent equation framework can nevertheless be used to derive diffusion of the Anderson Hamiltonian, provided that one has suitable dispersive estimates for the resolvent $R(z)$.
The analogous estimates for the free resolvent $(\Delta_{\Z^d}-z)^{-1}$ follow from oscillatory integral estimates in~\cite{ErdosSalmhofer,Taira,CueninSchippa}.
A key novelty in this work is a perturbative argument based on~\cite{BDH} that allows us to extend these bounds to the \emph{full} resolvent $R(z)$ at the energy scale $\Impt z\sim \lambda^2$.
In fact, the dispersive estimate we prove may be seen as a strengthened \emph{crossing estimate}, which is one of the key technical tools used to derive diffusion, for example, in~\cite{ESYAnderson}.
Remarkably, this allows us to completely sidestep the combinatorial difficulties inherent to the diagrammatic approach pursued in \cite{ESYAnderson,hernandez2024quantum}, as we discuss in Section~\ref{sec:crossingdiscussion}.

The remainder of the introduction is organized as follows.  In Section~\ref{sec:sketch} we provide a sketch of the argument including some preliminary calculations which is aimed to present most of the main ideas in a pedagogical way.  In Section~\ref{sec:nodiagrams} we explain why our relatively simple argument can succeed without any diagrammatic expansions.  In Section~\ref{sec:classical} we
compare our work to work on classical models of diffusion.

\subsection{Main ideas and discussion}
\label{sec:sketch}
Each of Theorem~\ref{thm:main-prop}-\ref{thm:deloc} follow from appropriate estimates on the quantity
\[
(R(z)^* A R(z))_{00} = \sum_{x\in\bbZ^d} a(x) |R(z)_{0x}|^2
\]
for some $a\in\ell^\infty(\bbZ^d)$, where $A$ is the diagonal matrix with diagonal specified by $a$.
At a simplified level, there are three steps to understanding the quantity above.

The first step is to establish a concentration inequality for the entries of the resolvent via the aforementioned dispersive estimate, encoded as a bound on $\|R(z)\|_{\ell^1\to\ell^4}$.
In the second step, we derive a self-consistent equation for $\Expec R$ whose error is controlled by the concentration inequality.
The third step is to show that $(RAR^*)_{xx}$ satisfies an elliptic equation and from this derive diffusion.
\subsubsection{Step 1: Concentration inequalities for the resolvent.}
\label{sec:conc-ineq}
We think of $R(z)$ as a function of the Gaussian entries $g_x$ of $V$ and apply the Gaussian Poincar{\'e} inequality to bound the variance.
To facilitate the calculation we write
\[
V = \sum_{x\in\bbZ^d} g_x \ket{x}\bra{x},
\]
where $\ket{x}\bra{x}$ is the rank-one projection onto the site at $x$.  Then we have
\begin{equation}
\label{eq:resolvent-derivative}
\frac{\partial}{\partial g_x}R=-\lambda R\ket{x}\bra{x}R,
\end{equation}
so the Gaussian Poincar{\'e} inequality implies
\begin{align*}
\operatorname*{Var} R_{xy}
&\leq \sum_{w\in\bbZ^d} \Expec |\frac{\partial}{\partial g_w} R_{xy}|^2 \\
&= \lambda^2 \sum_{w\in\bbZ^d} \Expec |R_{xw}R_{wy}|^2 \\
&\leq \lambda^2
\Big(\Expec \sum_{w\in\bbZ^d} |R_{xw}|^4\Big)^{1/2}
\Big(\Expec \sum_{w\in\bbZ^d} |R_{wy}|^4\Big)^{1/2} \\
&\leq \lambda^2 \Expec \|R\|_{1\to 4}^4.
\end{align*}

The main new technical contribution of this work is a bound for $\|R(z)\|_{1\to 4}$ that is effective for $z=E+i\eta$ with $\eta \ll \lambda^2$ in the
diffusive regime. In particular, we will show in  Proposition~\ref{prp:apriori}
\begin{equation}
\label{eq:simplified-onefour}
\|R(z)\|_{1\to 4} \leq  \lambda^{2-c_d} \eta^{-1}.
\end{equation}
with high probability so that for some range of $\eta$ much smaller than the diffusive scale $\lambda^2$, we still have $\operatorname*{Var} R_{xy} \ll 1$.  A similar argument can be used to bound higher moments of $R_{xy}-\Expec R_{xy}$.

\subsubsection{Step 2: The self-consistent equation for $\Expec R$}
\label{sec:sce-intro}
In this section we compute a self-consistent equation for $\Expec R$ by using the ``resolvent method" explained in~\cite{erdos2019matrix}.

The idea is to first renormalize by writing
\[
\Delta + \lambda V - z = (\Delta - (z+\lambda^2\theta)) + (\lambda V + \lambda^2\theta),
\]
for $\theta$ a scalar to be determined.
Setting $M_\theta := (\Delta - (z+\lambda^2\theta))^{-1}$, we have by the resolvent identity
\begin{equation}
\label{eq:resolvent-id}
R = M_\theta - M_\theta (\lambda V + \lambda^2 \theta) R.
\end{equation}
We now take an expectation on both sides of~\eqref{eq:resolvent-id}, and use
Gaussian integration by parts to compute the term involving $\Expec VR$.
Indeed using $V = \sum_x g_x \ket{x}\bra{x}$ and~\eqref{eq:resolvent-derivative} we compute
\begin{align*}
\Expec VR &=\sum_{x\in \Z^{d}_L}\ket{x}\bra{x}\Expec[g_xR]\\
& = -\lambda \sum_{x\in \Z^{d}_L}\ket{x}\bra{x}\mathbb{E}[R\ket{x}\bra{x}R].
\end{align*}

Introducing the superoperator
\[
\mathcal{D}:\mathcal{B}(\ell^2(\mathbb{Z}^d_L))\to \mathcal{B}(\ell^2(\mathbb{Z}^d_L))
\]
defined by
\begin{equation}
\label{eq:D-def}
\mathcal{D}[A] := \sum_{x}A_{xx}\ket{x}\bra{x},
\end{equation}
we have thus shown
\begin{equation}
\label{eq:sce-first-draft}
\mathbb{E}R = M_\theta(\Id+\lambda^2\mathbb{E}\mathcal{D}[R-\theta \Id]R )
\end{equation}
Note we can choose any value of $\theta$, so now we set $\tilde{\theta}(z) := \Expec R_{00}(z)$, so that $\theta\Id=\Expec \mcal{D}[R]$ by translation invariance.  Then writing
$\tilde{M} = \tilde{M}(z) = M_{\tilde{\theta}(z)}(z)$, we have arrived at the following system of equations.
\begin{proposition}
\label{prp:approxSCE}
With $\tilde{\theta}(z) := \Expec R_{00}(z)$ and $\tilde{M}$ as defined above, we have
\begin{equation}\tag*{(SCE)}
\label{eq:appx-sce}
\begin{split}
&\mathbb{E}R = \tilde{M}(\Id+\mathfrak{E}_{\rm loc}) \\
&\mathfrak{E}_{\rm loc} := \lambda^2\mathbb{E}\mathcal{D}[R-\Expec R]R.
\end{split}
\end{equation}
\end{proposition}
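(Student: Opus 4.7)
Since most of the derivation already appears in the discussion preceding the statement, the plan is simply to carefully execute the three steps that were sketched there: apply the renormalized resolvent identity with a free parameter $\theta$, evaluate the expectation via Gaussian integration by parts, and finally specialize $\theta = \tilde{\theta}(z)$ using translation invariance of the law of $V$ to simplify the resulting expression.

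First I would fix $z$ with $\Impt z > 0$, so that $\|R(z)\|_{\mathrm{op}} \le (\Impt z)^{-1}$ and every polynomial in the entries of $R$ has finite expectation. Then I would take expectations in the renormalized resolvent identity~\eqref{eq:resolvent-id} and apply Stein's lemma entry by entry using the derivative formula~\eqref{eq:resolvent-derivative}. Writing $V = \sum_x g_x \ket{x}\bra{x}$ and using the direct combinatorial identity $\sum_x \ket{x}\bra{x}\, A\, \ket{x}\bra{x} = \mathcal{D}[A]$ coming from the definition~\eqref{eq:D-def}, this yields $\mathbb{E}[VR] = -\lambda\,\mathbb{E}[\mathcal{D}[R]R]$. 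Rearranging produces the ``pre-equation''~\eqref{eq:sce-first-draft}, valid for every scalar $\theta$.

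Next I would specialize $\theta = \tilde{\theta}(z) := \mathbb{E} R_{00}(z)$, so that $M_\theta = \tilde{M}$. The key observation here is that, because $V$ has i.i.d.\ entries, the law of $H$ is translation invariant, and therefore $(\mathbb{E}R)_{xx}$ is independent of $x$; in particular $\mathcal{D}[\mathbb{E} R] = \tilde{\theta}\,\Id$. Combined with linearity of $\mathcal{D}$ this gives
\[
\mathcal{D}[R - \tilde{\theta}\,\Id] = \mathcal{D}[R] - \mathcal{D}[\mathbb{E}R] = \mathcal{D}[R - \mathbb{E}R],
\]
and inserting this back into~\eqref{eq:sce-first-draft} is exactly~\eqref{eq:appx-sce}.

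The only point requiring real care is the justification of the Gaussian integration by parts on $\bbZ^d$, since the Gaussian field $V$ is infinite-dimensional. I would handle this by first proving the identity on the torus $\bbZ^d_L$ (the setting used anyway in Theorem~\ref{thm:deloc}), where $V$ has only finitely many components and every step is elementary, and then extending to $\bbZ^d$ by approximation; alternatively, one can use $\|R(z)\|\le (\Impt z)^{-1}$ directly to check absolute convergence of the sum over $x$ in $\mathbb{E}[VR]$. Past this bookkeeping, the proof presents no essential obstacle.
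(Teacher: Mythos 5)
Your proof is correct and follows exactly the route sketched in the paper (renormalized resolvent identity, Gaussian integration by parts giving $\Expec[VR]=-\lambda\Expec[\mathcal{D}[R]R]$, then substituting $\theta=\tilde\theta$). You also make explicit the translation-invariance observation $\mathcal{D}[\Expec R]=\tilde\theta\,\Id$ that the paper uses implicitly to pass from $\mathcal{D}[R-\tilde\theta\,\Id]$ to $\mathcal{D}[R-\Expec R]$, which is a worthwhile clarification; note that since Sections 3--5 already work on the finite torus $\bbZ^d_L$, the finite-dimensional Gaussian integration by parts is all that is needed.
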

If $\mathfrak{E}_{\rm loc} \approx 0$, we could approximate $\tilde{\theta} = \Expec R_{00}$ by taking the diagonal entry of either
side of~\ref{eq:appx-sce} to obtain that $\tilde{\theta}$ solves an approximate version of the self-consistent equation
\begin{equation}
\label{eq:theta-appx-sce}
\theta = (\Delta - (z+\lambda^2 \theta))^{-1}_{00}.
\end{equation}

In Theorem~\ref{thm:local-law} we perform an error analysis to show that indeed $\tilde{\theta}$ is close to the solution $\theta$ of the equation above.
To do this we first note  that $\mcal{D}[R-\Expec R]$ is a diagonal
matrix with entries $R_{xx}-\Expec R_{xx}$.   These entries are controlled by Step 1, and thus $\mathfrak{E}_{\rm loc}$ is small.  To conclude, we analyze the stability of the self-consistent equation~\eqref{eq:theta-appx-sce} to ensure that the small perturbation $\mathfrak{E}_{\rm loc}$ also changes $\tilde{\theta}$ by a small amount.

\subsubsection{Step 3: The diffusive profile of $R$}
\label{sec:Teqintro}
The approximate self-consistent equation~\ref{eq:appx-sce} can be used to estimate $\Expec R_{xy}$, but this is insufficient to compute averages of the form $\Expec\sum_{y\in\Z^d}a(y)|R_{xy}|^2$.
For this, we follow
the \textit{T-equation} strategy developed by Erd\H{o}s, Knowles, Yau and Yin~\cite{Y3Teq} in the context of random band matrices.

Again using the resolvent identity~\eqref{eq:resolvent-id} and Gaussian integration by parts, one may derive the T-equation
\begin{equation}
\label{eq:T-eq}
\Expec RAR^* = \tilde{M}A\tilde{M}^* + \lambda^2\tilde{M}\mathcal{D}(\Expec RAR^*)\tilde{M}^* + \mathfrak{E}_{\rm T},
\end{equation}
where $\mathfrak{E}_{\rm T}$ is an error term specified in Lemma~\ref{lem:T-eq-full}.

Writing $f(x) = \Expec (RAR^*)_{xx}$, the T-equation implies
\begin{equation}
\label{eq:elliptic-pde-intro}
f = \tilde{K} \ast a + \lambda^2 \tilde{K} \ast f + \mathfrak{e}
\end{equation}
where $\mathfrak{e}(y) = (\mathfrak{E}_{\rm T})_{yy}$ and the convolution kernel $\tilde{K}$ is given by
\begin{equation*}
\tilde{K}(x) = |\Braket{0| \tilde{M}|x}|^2.
\end{equation*}
Let $\tilde{\mcal{K}} u = \tilde{K}\ast u$ be the convolution operator with kernel $\tilde{K}$.  Then
\begin{equation*}
\begin{aligned}
(\Id - \lambda^2\tilde{\mcal{K}}) f
&=  \mcal{\tilde{K}}a + \mathfrak{e}.
\end{aligned}
\end{equation*}
As we explain in Section~\ref{sec:RAR}, the scaling limit of the equation above is an elliptic PDE.
This is because $\sum \lambda^2 \tilde{K}(x) \approx 1$ so that $(\Id - \lambda^2\tilde{\mcal{K}})g$ is the difference
between $g$ and a local symmetric average of $g$. Therefore, it approximately acts as an elliptic operator when applied
to functions that are smooth to scale $\lambda^{-2}$.  Alternatively, by iterating~\eqref{eq:elliptic-pde-intro} one obtains an expansion for $f$ that can be interpreted in terms of a random walk.

\subsection{Diagrammatics and the $\|R\|_{1\to 4}$ estimate.}
\label{sec:nodiagrams}
The proof sketched above is most notable for a feature it \textit{lacks} -- any mention of diagrammatic
expansion.  There are two reasons we are able to sidestep diagrammatic techniques with our argument.
The most straightforward reason is that we adapt the framework of~\cite{Y3Teq} to perform our calculations, using``self-consistent equations'' for the
moments of the resolvent to circumvent perturbative series expansions.
What allows our calculations to go through at a technical level however is our estimate on $\|R\|_{1\to 4}$, and we explain in this section how this encodes a bound on all of the recollision diagrams
of~\cite{ESYrecollision}.

\subsubsection{Formal perturbation theory and renormalization}
\label{sec:formal-perturbation}
First we briefly review the traditional diagrammatic approach to understanding the Anderson model, following a formal perturbative calculation of Spencer~\cite[Chapter 1, Appendix B]{frohlich2012quantum}.
For simplicity we work with the expected resolvent $\Expec R(z)$.
First we write $R(z)$ as a Born series expansion, perturbing around the free resolvent $R_0(z) := (\Delta - z)^{-1}$:
\[
\mathbb{E}R(z) = R_0(z) + \sum_{k=1}^\infty (-\lambda)^k\mathbb{E} R_0(z) (VR_0(z))^{k}.
\]
When $\eta:= \Impt(z)\gg\lambda^2$, the series converges and is dominated by the first term which is of order $\lambda \eta^{-1/2}$.
To prove this convergence rigorously, one can establish that the bound
\begin{equation}
\label{eq:sqrt-R}
\|R_0^{1/2}(z)VR_0^{1/2}(z)\|_{2\to 2}\lsim \eta^{-1/2}
\end{equation}
holds with high probability.
A bound along these lines of~\eqref{eq:sqrt-R} follows from the arguments in any of~\cite{bourgain2001random, SSW, BDH} (though see also the computations in~\cite{van1954quantum, davies1974markovian, spohn1977derivation, EYkinetic} for alternative explanations of this square-root cancellation phenomenon).

If $\eta \lsim \lambda^2$ the sum over $k$ is no longer perturbative and one needs to incorporate cancellations in the sum. Omitting $z$ from notation and writing $\ket{x} = \delta_x$ and $\ket{x}\bra{x}$ for the corresponding rank-one projection we can write the next non-vanishing term as follows
\[
\lambda^2 \mathbb{E}R_0VR_0VR_0
= \lambda^2 \sum_{x\in \mathbb{Z}^d}R_0\ket{x}\braket{x|R_0|x}\bra{x}R_0= \lambda^2(R_0)_{00}R_0^2.
\]
The right hand size has operator norm $\eta^{-1}(\lambda^2\eta^{-1})$
which is much larger than $\eta^{-1} = ||R_0||_{2\to 2}$ if $\eta\ll\lambda^2$.
Similar calculations for larger $k$ show that to sum the series one needs to take advantage of  cancellations.
This is done by renormalizing, that is, by perturbing around $M_\theta = (\Delta- (z+\lambda^2\theta))^{-1}$ rather than $R_0(z)$.
Performing a Born series expansion about $M_\theta$ we obtain
\begin{equation}
\label{eq:renormalized-born}
\mathbb{E}R = M_\theta + \sum_{k=1}^\infty \Expec M_\theta((\lambda V + \lambda^2 \theta)M_\theta)^k
\end{equation}
Now we choose $\theta$ so that the second order terms in $\lambda^2$ vanish:
\[
\lambda^2\tilde{R}_0\theta \tilde{R}_0
- \lambda^2 \mathbb{E}\tilde{R}_0 V\tilde{R}_0V\tilde{R}_0 = 0.
\]
By the computation for $k=2$ above this is equivalent to taking $\theta_{\rm sce}(z)$ to solve the self-consistent equation
\begin{equation}
\label{eq:theta-sce}
\theta_{\rm sce}(z) = (\Delta-z-\lambda^2\theta_{\rm sce}(z))^{-1}_{00},
\end{equation}
which we observe is the same equation derived above~\eqref{eq:theta-appx-sce}.

With this renormalization, one expects that (setting $M_{\rm sce} = M_{\theta_{\rm sce}}$),
\begin{equation}
\label{eq:ER-guess}
\mathbb{E}R \approx M_{\rm sce}.
\end{equation}
To establish~\eqref{eq:ER-guess} one could in principle attempt to bound the remaining terms in~\eqref{eq:renormalized-born}
and indeed, this is essentially the strategy of~\cite{ESYRSE,ESYAnderson}.

\subsubsection{Resolvent estimates as crossing integrals}
\label{sec:crossingdiscussion}
It is instructive to consider the first error term in~\eqref{eq:renormalized-born} that does not cancel exactly as a consequence
of the choice of $\theta_{\rm sce}$.  This term comes from the term in the Wick expansion of the fourth order term which involves a crossing, that is,
\begin{equation}
\label{eq:crossing-term}
\mcal{E}_{\rm cross} := \lambda^4 \sum_{x,y} M_{\rm sce} \ket{x}\bra{x} M_{\rm sce} \ket{y}\bra{y}M_{\rm sce}\ket{x}\bra{x} M_{\rm sce}
\ket{y}\bra{y} M_{\rm sce}.
\end{equation}
The operator $M_{\rm sce}$ is a Fourier multiplier with symbol
\[
m(p) = \frac{1}{\omega(p) - (z+\lambda^2\theta_{\rm sce})},
\]
so writing out the entries of $\mcal{E}_{\rm cross}$ using the Fourier transform we obtain
\begin{align*}
\braket{w_1 | \mcal{E}_{\rm cross} | w_2}
&= \lambda^4  (2\pi)^{-4d} \iint_{(\Torus^d)^5} \sum_{x,y} e^{i(w_1-x)\cdot q_1} e^{i(x-y)\cdot (q_2-q_3+q_4)} e^{i(y-w_2)\cdot q_5}
\prod_{j=1}^5 m(q_j) \diff \vec{q} \\
&= \lambda^4  (2\pi)^{-4d} \iint_{(\Torus^d)^4} \delta(-q_1+q_2-q_3+q_4)
e^{i(w_1-w_2) \cdot q_1} m(q_1) \prod_{j=1}^4 m(q_j) \diff \vec{q}.
\end{align*}
Thus one has
\[
\max_{w_1,w_2} |\braket{w_1 | \mcal{E}_{\rm cross} | w_2}| \lsim
\lambda^4 (\Impt (z + \lambda^2\theta_{\rm sce}))^{-1}
I_4(z+\lambda^2\theta_{\rm sce}),
\]
where $I_4(z)$ is the
\textit{four denominator crossing integral} (see~\cite{ErdosSalmhofer,lukkarinen2007asymptotics})
\[
I_4(z) :=
\int_{(\Torus^d)^4}
\delta(q_1-q_2+q_3-q_4)\prod_{j=1}^4 \frac{1}{|\omega(q_j)-z|}
\diff \vec{q}.
\]
We note that similar ``crossing integrals''
appear in the derivation of kinetic limits of random wave equations~\cite{lukkarinen2007kinetic}
and also appear in the wave kinetic theory of nonlinear problems~\cite{deng2021derivation,staffilani2021wave}.  Note that
there is a trivial bound $I_4(z) \ll \eta^{-1}$ which comes from using the $L^\infty$ norm on $\frac{1}{|\omega(q_4)-z|}$ and using Fubini's
theorem for the integrals on $q_1$, $q_2$, and $q_3$.  This trivial estimate shows that the maximum entry of $\mcal{E}_{\rm cross}$
is $O(1)$.  Thus any nontrivial estimate on $I_4$ shows that the first error term is of subleading order.   In~\cite{ESYAnderson}
an intricate diagrammatic argument is developed to show that indeed a nontrivial estimate on $I_4(z)$ implies that the
sum over the remaining error terms is also of subleading order.

On the other hand one has by Plancherel's theorem the identity
\[
\sum_{x\in\bbZ^d} |\Braket{x|(\Delta_{\Z^d}-z)^{-1}|0}|^4
= \int_{(\Torus^d)^4}
\frac{\delta(p-q+r-v)\diff p\diff q \diff r \diff v}
{(e(p)-z)(e(q)-\bar{z})(e(r)-z)(e(v)-\bar{z})}.
\]
That is, the four-denominator estimate controls the $\ell^1\to \ell^4$ norm of the resolvent of the Laplacian,
\[
\| (\Delta_{\Z^d} -z)^{-1}\|_{\ell^1\to\ell^4}^4\leq I_4(z).
\]
From this perspective, our estimate on $\|(H-z)^{-1}\|_{\ell^1\to \ell^4}$  is a strict strengthening of the crossing estimate that~\cite{ESYAnderson} uses.
One way to interpret our approach is that we choose not to perform a perturbative expansion all the way down to the free resolvent but instead
use bounds on the full resolvent $(H-z)^{-1}$ itself.  This is what allows us to stop the expansion and directly estimate the error term.

\subsubsection{As a dispersive estimate}
Another way of understanding the quantity $\|R\|_{1\to 4}$ is as a dispersive estimate.  A calculation using Plancherel's theorem and
the Cauchy-Schwartz inequality shows that
\begin{equation}
\label{eq:onefourrecollision}
\begin{split}
\sum_{x\in\bbZ^d} \lambda^2 \Big(\int_0^{\eta^{-1}} |e^{itH} \psi(x)|^2 \diff t \Big)^2
&\leq e^2
\sum_{x\in\bbZ^d} \lambda^2 \Big(\int_0^\infty e^{-\eta t} |e^{itH} \psi(x)|^2 \diff t \Big)^2  \\
&= e^2
\sum_{x\in\bbZ^d} \lambda^2 \Big(\int |R(E+i\eta)\psi(x)|^2 \diff E \Big)^2  \\
&\lsim \lambda^2 \Big(\int \|R(E+i\eta)\psi\|_{\ell^4}^2 \diff E\Big)^2.
\end{split}
\end{equation}
To interpret the left hand side, note that $\int_0^T |e^{itH}\psi(x)|^2\diff t$ measures the total probability of finding the particle at
location $x$ for some time in $[0,T]$, and so $\lambda \int_0^T |e^{itH}\psi(x)|^2\diff t$ measures the contribution to the probability amplitude
 that the wavefunction $\psi$ ever interacts with the potential at site $x$ within time $[0,T]$.  The square then heuristically measures the
probability that the particle revisits a given site $x$ \textit{twice} within $[0,T]$.  Thus, the left hand side measures the contribution from
all recollisions up to time $T$.

Now taking $\psi = \delta_0$, one observes that $\|R(E+i\eta)\delta_0\|_{\ell^4} \leq \|R(E+i\eta)\|_{\ell^1\to\ell^4}$.  In conclusion, $\|R\|_{1\to 4}$ can be used to estimate the likelihood that an initially localized wavefunction experiences any recollision up to time $T=\eta^{-1}$.  That is, it encodes the sum over all recollision diagrams in the sense of~\cite{ESYrecollision}.

From another perspective, the inequality above is some kind of dispersive estimate, which in particular shows that $\psi_t$ does not interact
too strongly with any particular potential site.  Therefore, one expects a concentration phenomenon for $\psi_t(x)$.  This suggests some heuristic
that concentration inequalities and crossing estimates should be related, as they both come from estimating the strength of interaction with single
sites.

\subsection{Comparison to classical models}
\label{sec:classical}

The Anderson model is a discrete cousin of the random Schrodinger equation (not addressed in this paper),
\[
i\partial_t \psi = -\frac12 \Delta_{\Real^d} \psi + \lambda V\psi.
\]
The semiclassical limit of this equation, corresponding to initial data $\psi_0$ which is highly oscillatory,  is the stochastic acceleration model
\begin{equation}
\label{eq:stochastic-acceleration}
\frac{d^2}{dt^2} x = -\lambda \nabla V(x).
\end{equation}
A diffusive limit for stochastic acceleration was first established in $d>2$ by Kesten and Papanicolaou~\cite{kesten1980limit} (and see also~\cite{durr1987asymptotic,komorowski2006diffusion} for results in $d=2$).  The idea in this proof is to establish a coupling between the motion described
by~\eqref{eq:stochastic-acceleration} and a diffusive Markov process.  This coupling can be established easily up until the first time that the particle trajectory intersects itself.  Similar analysis can be performed on other models of walks in random environments (see for example~\cite{boldrighini1983boltzmann} for an analysis on the wind-tree model).
The diffusive limit of stochastic acceleration has also been used directly to obtain a quantum diffusion result for highly oscillatory initial data~\cite{bal2003self}.

In the Anderson model, the self interactions that break the Markov property correspond to crossing diagrams
in the expansion~\eqref{eq:renormalized-born} (which is also the error term in the self-consistent equation).  Our proof proceeds by simply bounding the total contribution of these terms (in the form of a bound on $\|R\|_{1\to 4}$, see~\eqref{eq:onefourrecollision} above), so in this sense is analogous to the ``Kesten-Papanicolaou argument''. In particular, it is limited to the timescale $\lambda^{-4}$ at which
point the crossing diagrams become nonnegligible without further renormalization.  We note that previous approaches to quantum diffusion~\cite{ESYRSE, ESYAnderson,hernandez2024quantum} required bounds on \textit{all} crossing diagrams to reach the diffusive timescale.  In the analogy with the classical problems, this would correspond to bounding not just the contribution of a single self-interaction but also the contributions from all possible $K$-fold self-interactions for $1\leq K\leq \lambda^2T$.   Thus the innovation of this paper is that we manage to complete the analysis without considering these higher order self-interactions.

More recent work on classical models of diffusion improve upon the Kesten-Papanicolaou framework by generalizing the notion of the coupling to a Markov process and by ``renormalizing'' the coupled Markov process.  In~\cite{lutsko2020invariance} a longer diffusive timescale is reached by allowing self-interactions but controlling their effect on the endpoint of the walk.  To reach even longer timescales some renormalization is necessary, in the sense that one modifies the limiting Markov process as one progressively reaches longer timescales.  An analysis to this effect was used in~\cite{elboim2022infinite} on the interchange process to reach infinite timescales, resolving a long-standing conjecture of Toth.    More recently, renormalization has been applied to a discrete model more analagous
to stochastic acceleration~\cite{elboim2025lorentz}.

We stress that at this point it is not clear how to execute a renormalization argument analogous to that of~\cite{elboim2022infinite,elboim2025lorentz} (in the classical setting) or of~\cite{yang2021delocalization,yang2021random,yang2021delocalization,dubova2025delocalization} (in the random matrix setting) to reach superpolynomial time scales in the Anderson model.  However, we are
hopeful that ideas in the present work might prove useful to do so.

\subsection*{Organization of the paper}
The rest of the paper is organized as follows.  In Section~\ref{sec:prelims} we lay out notation and some simple facts that will be used throughout. In Section~\ref{sec:resolvents} we prove bounds on $\|R\|_{p\to q}$.  Then in Section~\ref{sec:sce} we prove bounds on the entries of the resolvent using the self consistent equation for $R$.  This is applied in Section~\ref{sec:diffusion-profile} to perform an error analysis in the T-equation.  Then in Section~\ref{sec:RAR} we derive a comparison between the T-equation and an elliptic PDE on $\Real^d$.  Finally, in Section~\ref{sec:proofs} we use these results to complete the proofs of the main theorems.

\subsection*{Acknowledgements}
The authors would like to thank Antoine Gloria, Toan Nguyen, Lenya Ryzhik, Wilhelm Schlag, Charles Smart, Tom Spencer, and Gigliola Staffilani for clarifying discussions and encouragement.  This material is based upon work supported by the National Science Foundation under Awards No. 2503339 and 2303094.

\section{Preliminaries and Conventions}
\label{sec:prelims}
In this section we collect the notational conventions used throughout the paper, as well as some preliminary results that are proven in the appendix and used throughout the proof.

\subsection{The Laplacian and the Fourier transform}
We normalize the Laplacian as
\begin{equation}
\label{eq:laplacian-def}
(\Delta_{\bbZ^{d}}f)(x)=\sum_{|y-x|=1}f(y).
\end{equation}
For a function $f\in \ell^2(\bbZ^{d})$, we take its Fourier transform to be
\begin{align*}
	\hat{f}(\xi)=\sum_{x\in \bbZ^{d}}e^{ix\xi}f(x),
\end{align*}
and therefore
\begin{align*}
f(x)=(2\pi)^{-d}\int_{\bbT^{d}}e^{-ix\xi}\hat{f}(\xi)\:\diff\xi,
\end{align*}
where $\Torus^d = \Real^d / (2\pi \bbZ^d)$.
The symbol of $\Delta_{\Z^d}$ is thus
\begin{align*}
\omega(\xi):=\sum_{j=1}^{d}2\cos(\xi_j),
\end{align*}
so that its spectrum is $[-2d,2d]$.
The critical points of $\omega$ correspond
to values of $\xi$ for which $\xi_j \in\{0,\pi\}$ for each $j\in[d]$.  Therefore
the critical \textit{values} of $\omega$ is given by the set $\Sigma_{\rm crit} = [-2d,2d] \cap (2d+4\bbZ)$.  We define\footnote{Technically we only need to include $0$ in $d=3$, and the proof could be modified to take $\Sigma_d=\Real\setminus[-2d,2d]$ in $d\geq 4$, but excluding $\Sigma_{\rm crit}$ makes the analysis simpler.}
\[
\Sigma_d := (\Real \setminus [-2d,2d])\cup \Sigma_{\rm crit} \cup \{0\},
\]
with the value $0$ being included because in $d=3$ there is a flat umbilic point on the level set of $\{\omega = 0\}$ (as explained in~\cite{ErdosSalmhofer}).

\subsection{Density of states and the self-consistent equation}

The \textit{density of states} for the Laplacian
$\Delta_{\bbZ^d}$ is the measure $\diff \gamma$ such that
\[
\int_{\mathbb{R}} f(\sigma) \diff \gamma
= (2\pi)^{-d} \int_{\Torus^d} f(\omega(\xi))\diff\xi.
\]
As shown in Lemma~\ref{lem:free-dos}, $\diff \gamma = \rho(x)\diff x$ has a density supported on $[-2d,2d]$ which is strictly positive in $(-2d,2d)$.  Aside from a logarithmic singularity at $0$ in $d=2$,  the density is bounded.  The density $\rho$ also has the expression
\begin{equation}
\label{eq:rho-def}
\rho(E) := (2\pi)^{-d}\int_{\Torus^d}
\frac{1}{|\nabla \omega(\xi)|} \diff\xi.
\end{equation}

The significance of the function $\rho$ is that
\[
\rho(E) = \lim_{\eta\to 0} \frac{1}{\pi} \Impt ((\Delta_{\bbZ^d}-z)^{-1}_{00}).
\]
Since $(\Delta_{\bbZ^d}-z)^{-1}$ is analytic in $z$, its imaginary part is the harmonic extension of $\rho$ to the upper half plane.
 Similarly, its real part is the harmonic extension of $\rm{H}\rho$, where $\rm{H}$ is the Hilbert transform.

Finally, we note that the self-consistent
equation
\[
\theta = (\Delta_{\bbZ^d}-(z+\lambda^2\theta))^{-1}_{00}
\]
has a unique solution in the upper half-plane $\bbH$.  This is proven in Proposition~\ref{pr:thetaAPriori}.

\subsection{Truncation to $\Z^d_L$}
In Sections~\ref{sec:resolvents}-\ref{sec:diffusion-profile} we work with a truncation of the Hamiltonian $H$ to the torus $\Z^d_L:=\Z^d/(L\Z^d)$, with $L=2\left\lceil \lambda^{-100}\right\rceil$.
Specifically, let $\Delta_L$ be the nearest-neighbor Laplacian on this graph, normalized as in \ref{eq:laplacian-def}, and let $H_L=\Delta_L+\lambda V$, where by a slight abuse of notation $V$ denotes the potential restricted to the box $\Z^d\cap[-L/2,L/2]$ and thought of as periodic. Here and throughout, we identify elements of $\Z^d_L$ with $\Z^d\cap [-L/2,L/2]^d$ in the natural way.
Below, unlike in the introduction, $R(z)=(H_L-z)^{-1}$ denotes the resolvent of this truncated operator whereas $R_{\Z^d}(z)=(H-z)^{-1}$ is the full resolvent.

The large value of $L$ relative to the energy scale under consideration, $\Impt z>\lambda^{2+\kappa_d}$, allows us to pass freely between $R$ and $R_{\Z^d}$. By regarding elements of $\Z^d_L$ as elements of $\Z^d$ with $\ell^\infty$ norm at most $L$, we have the following:
\begin{proposition}
\label{prp:Zd-to-ZdL}
For any $x,y\in \Z^d_L$ with $|x|_\infty,|y|_\infty\leq L/4$ and $z=E+i\eta\in \bbH$ with $\eta>\lambda^{4}$, we have that
\begin{align*}
    \left|\braket{x|R(z)|y}-\braket{x|R_{\Z^d}(z)|y}\right|\leq C \exp(-c\lambda^{-10}),
\end{align*}
with constants that are uniform for all realizations of $V$.
\end{proposition}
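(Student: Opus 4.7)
The bound follows by combining a Combes-Thomas estimate for the two resolvents with a cutoff argument that compares them. First I would establish the Combes-Thomas decay bound: for any $y_0$ and either $R_{\Z^d}(z)$ on $\Z^d$ or $R(z)$ on $\Z^d_L$,
\begin{equation*}
|\langle x | R(z) | y_0 \rangle| \leq \frac{2}{\eta} e^{-c \eta \, d(x, y_0)},
\end{equation*}
where $d$ is the graph distance on the relevant space. The standard argument is to conjugate by the $1$-Lipschitz weight $W_\alpha := e^{\alpha d(\cdot, y_0)}$; since $V$ is diagonal it commutes with $W_\alpha$, so the only error is $\|W_\alpha \Delta W_\alpha^{-1} - \Delta\| \leq C\alpha$, and combined with $\|R(z)\| \leq \eta^{-1}$ from self-adjointness a Neumann series converges for $\alpha = c \eta$. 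Because $V$ plays no role in the conjugation, the bound is uniform in the realization of $V$.

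Next, I would compare the two resolvents via a cutoff supported strictly inside a fundamental domain of the torus. Fix $x, y$ with $|x|_\infty, |y|_\infty \leq L/4$ and let $\chi : \Z^d \to \{0, 1\}$ be the indicator of $\{|w|_\infty \leq L/3\}$. Because $\supp \chi$ lies in a single fundamental domain, any function supported there is unambiguously the same on $\Z^d$ and on $\Z^d_L$, and the two Laplacians agree on such functions. Setting $u := R_{\Z^d}(z) \delta_y$ on $\Z^d$ and using $\chi(y) = 1$, the identity $(H - z) u = \delta_y$ becomes
\begin{equation*}
(H_L - z)(\chi u) = \delta_y + [\Delta_{\Z^d}, \chi] u
\end{equation*}
on $\Z^d_L$. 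Subtracting from $(H_L - z) R(z) \delta_y = \delta_y$, applying $R(z)$, and using $\chi(x) = 1$ gives
\begin{equation*}
\langle x | R(z) | y\rangle - \langle x | R_{\Z^d}(z) | y\rangle = -\sum_{w} R(z)(x, w)\, ([\Delta_{\Z^d}, \chi] u)(w).
\end{equation*}

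Finally I would show the right-hand side is exponentially small. The commutator $[\Delta_{\Z^d}, \chi] u$ is supported where $\chi$ varies, namely in the shell $\{|w|_\infty \in [L/3, L/3+1]\}$, and for such $w$ both the torus distance $d_L(x, w)$ and $|y - w'|$ (for any nearest neighbor $w'$ of $w$) are at least $L/12$. Bounding $|([\Delta_{\Z^d}, \chi] u)(w)| \lesssim \max_{|w - w'|\leq 1} |u(w')|$ and inserting the Combes-Thomas estimates for both $R(z)(x, w)$ and $u(w')$, each summand is at most $C \eta^{-2} e^{-c \eta L / 12}$. There are $O(L^{d-1})$ nonzero contributions, and with $\eta \geq \lambda^4$, $L \geq \lambda^{-100}$ the exponent is bounded by $-c\lambda^{-96}$, which dominates all polynomial factors and yields the desired $C \exp(-c \lambda^{-10})$ bound. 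I do not foresee a serious obstacle; the only technical points are ensuring the uniformity of Combes-Thomas in $V$ (which holds since $V$ commutes with $W_\alpha$) and placing $\supp \chi$ strictly inside a fundamental domain so that the torus wrap-around is harmless.
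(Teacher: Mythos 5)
Your proof is correct, and it takes a mildly different route from the paper's. Both arguments reduce the difference $\braket{x|R(z)|y}-\braket{x|R_{\Z^d}(z)|y}$ to a sum localized near the boundary of a fundamental domain and then kill it with Combes-Thomas decay, but the decomposition differs. The paper expands $\braket{x|R(z)|y}=\sum_{u\in\Z^d}\braket{u|R_{\Z^d}(z)|y}\sum_{w\in\Z^d_L}\braket{x|R(z)|w}\braket{w|(H-z)|u}$, observes the inner sum collapses to $\One_{x=u}$ for $|u|_\infty<L/2$, and bounds the leftover boundary sum via Combes-Thomas applied only to $R_{\Z^d}(z)$ (together with a crude $\eta^{-1}$ bound on the other factor). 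You instead pick a cutoff $\chi$ supported strictly inside the fundamental domain and compare $\chi R_{\Z^d}(z)\delta_y$ with $R(z)\delta_y$ through the commutator $[\Delta,\chi]$, obtaining $\braket{x|R(z)|y}-\braket{x|R_{\Z^d}(z)|y}=-\sum_w R(z)(x,w)\,([\Delta,\chi]u)(w)$ and then applying Combes-Thomas symmetrically to both resolvents. The commutator identity is essentially the same boundary sum in different clothing, but your version is cleaner: it avoids the delicate bookkeeping of which lattice sites $u$ survive the collapse, makes the support of the boundary term manifest, and gains a factor of exponential decay from $R(z)$ as well (which is not needed here but costs nothing). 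Your self-contained derivation of Combes-Thomas, with its observation that the diagonal $V$ commutes with the weight $W_\alpha$, correctly justifies the uniformity in $V$ that the statement requires; the paper simply cites the estimate.
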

\begin{proof}
Recall the Combes-Thomas estimate \cite[Thm 10.5]{aizenman2015random}, valid for any self-adjoint Schr\"{o}dinger operator on $\Z^d$: there exists dimensional constants $c,C>0$ such that if $|x-y|>C\eta^{-1}$ then
\begin{align}
\label{eq:CT}
|\braket{x|R_{\Z^d}(z)|y}|\leq C\exp(-c\eta|x-y|).
\end{align}
Observe that if either of $u,w\in \Z^d$ has $\infty$-norm less than $L/2$ then it is immediate from the definitions that
\begin{align*}
\braket{u|H_L|w}=\braket{u|H|w}.
\end{align*}
So, we write
\begin{align*}
\braket{x|R(z)|y}&=\sum_{w\in \Z^d_L}\braket{x|R(z)|w}\braket{w|(H-z)R_{\Z_d}|y}\\
&=\sum_{u\in \Z^d}\braket{u|R_{\Z_d}|y}\sum_{w\in \Z^d_L}\braket{x|R(z)|w}\braket{w|(H-z)|u}.
\end{align*}
If $|u|_\infty<L/2$, then we have that $\sum_{w\in \Z^d_L}\braket{x|R(z)|w}\braket{w|(H-z)|u}=\One_{x=u}$. Therefore, by using \eqref{eq:CT}
\begin{align*}
\braket{x|R(z)|y}&=\braket{x|R_{\Z^d}(z)|y}+\sum_{|u|_\infty=L/2}\braket{u|R_{\Z^d}(z)|y}\\
&=\braket{x|R_{\Z^d}(z)|y}+O(e^{-c\lambda^{-10}}),
\end{align*}
as desired.
\end{proof}

\subsection{Constants and asymptotic notation}
Throughout, $z=E+i\eta$ always denotes a complex parameter with real part $E$ and imaginary part $\eta$.

We regard $\eps$ and $\delta$ as fixed small parameters throughout, and we will always take $d(E,\Sigma_d)>\eps$.
We let $c$ and $C$ denote a constant that may change from instance to instance and depend only on $\delta$, $\eps$, and the dimension $d$. In particular, it is understood that they are independent of $\lambda$, $\eta$ and $z$.  We write $A\lsim B$ to mean that $A\leq CB$ for some large $C$ depending on $\eps$ and $\delta$, and $A\ll B$ to mean that $A \leq cB$ for a sufficiently small $c$ depending on $\eps$ and $\delta$.  For example, throughout the paper we think of $\lambda \ll 1$, meaning that $\lambda \leq \lambda_0(\eps,\delta)$. Subscripts such as $C_\delta$ or $\ll_\eps$ are merely meant to emphasize dependence.

It is also convenient to use a variant of \emph{stochastic domination} notation commonly used in random matrix theory, see, e.g. \cite{erdHos2017dynamical}.  For random quantities $A$ and $B$,
perhaps depending additionally on $z$ and $\lambda$, we write $A\ldom B$ (read as ``$A$ is stochastically dominated by $B$'') if for any small $\gamma>0$ and large $N>1$ there exists a constant $C(\gamma,N)$ (which, again, may also implicitly depend on $\eps$ and $\delta$) such that the event
\[
A \leq C(\gamma,N) \lambda^{-\gamma} B
\]
holds with probability at least $1-C(\gamma,N)\lambda^N$. When $A$ and $B$ depend on $z$ within a specified range, it is implicit that these constants are uniform in such energies.

\section{A priori estimates for the resolvent}
\label{sec:resolvents}
In this section we establish $\ell^p\to\ell^q$ mapping properties of the resolvent on $\Z^d_L$, denoted $R(z)=(H_L-z)^{-1}$.
To state the result we define the sequence of exponents $p_d$,
\begin{align*}
p_d=\begin{cases}
6&d\in\{2,4\}\\
\frac{14}{3}&d=3\\
\frac{2d}{d-3}&d> 4,
\end{cases}
\end{align*}
and we write $p_d' = \frac{p_d}{p_d-1}$.  We can now state the main result of the section.
\begin{proposition}[A priori resolvent bounds]
\label{prp:apriori}
For $p\leq p_d'<p_d\leq q$ and $z=E+i\eta$ with $\eta>\lambda^{10}$, $E\in[-2d,2d]$,
and $d(E,\Sigma_d)>\eps$, we have that
\begin{equation}
\label{eq:pq-apriori}
\|R(z)\|_{p\to q} \ldom \lambda^2\eta^{-1} + 1.
\end{equation}
Moreover, we have the bounds
\begin{align}
\label{eq:one-two-apriori}
&\|R(z)\|_{2\to q} = \|R(z)\|_{p\to 2} \ldom \eta^{-1/2}(\lambda^2\eta^{-1}+1)^{1/2},
\end{align}
with implicit constants depending only on $\eps$, $\delta$, and $d$.
\end{proposition}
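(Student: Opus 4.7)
I would reduce everything to the single core estimate $\|R(z)\|_{p_d'\to p_d} \ldom \lambda^2\eta^{-1} + 1$, from which both the general $\ell^p\to\ell^q$ bound \eqref{eq:pq-apriori} and the $\ell^2$ endpoints \eqref{eq:one-two-apriori} follow by elementary embedding and $TT^\ast$ duality. The core estimate itself is obtained by perturbing around the renormalized free resolvent $\tilde M(z) = (\Delta_L - z - \lambda^2 \tilde\theta(z))^{-1}$ introduced in Proposition~\ref{prp:approxSCE}.

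\textbf{Step 1 (Deterministic dispersive bound).} The first ingredient is the uniform Fourier-multiplier bound $\|\tilde M\|_{p_d'\to p_d} \lesssim 1$. Since $\Impt\tilde\theta$ is bounded below by the density of states $\rho(E)>0$, the effective spectral parameter $z + \lambda^2 \tilde\theta$ has imaginary part $\gtrsim \lambda^2$ while its real part stays at distance $\gtrsim \eps$ from $\Sigma_d$; in particular, the level set $\{\omega = \Rept(z+\lambda^2\tilde\theta)\}$ has non-degenerate curvature in the sense required by the oscillatory integral estimates of \cite{ErdosSalmhofer, Taira, CueninSchippa}, which then yield the claimed bound uniformly in $\lambda$ and $\eta$.

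\textbf{Step 2 (From $\tilde M$ to $R$ via concentration).} The renormalized resolvent identity
\[
R = \tilde M - \tilde M W R, \qquad W := \lambda V - \lambda^2 \tilde\theta,
\]
yields $R = (I+\tilde M W)^{-1}\tilde M$. A naive Neumann expansion does not converge deterministically, since $\|\lambda V\|_{2\to 2}$ is logarithmic in $L$ while $\|\tilde M\|_{2\to 2}\sim \lambda^{-2}$. Following the perturbative strategy of \cite{BDH}, I would instead exploit the mean-zero Gaussian structure of $V$ to control a symmetrized operator $S := |\tilde M|^{1/2} W |\tilde M|^{1/2}$ interpreted through the polar decomposition of $\tilde M$. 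Combining the Gaussian Poincar\'e inequality (as in Section~\ref{sec:conc-ineq}) with the dispersive estimate from Step 1 should give $\|S\|_{2\to 2}\leq 1/2$ with overwhelming probability, so that $I + \tilde M W$ is invertible with controlled norm. Transferring this control back to the $\ell^{p_d'}\to\ell^{p_d}$ norm through the factorization $R = (I+\tilde M W)^{-1}\tilde M$ delivers $\|R\|_{p_d'\to p_d} \ldom \lambda^2\eta^{-1} + 1$; the factor $\lambda^2\eta^{-1}$ accounts for the mismatch between the renormalized scale $\lambda^2$ at which $\tilde M$ effectively operates and the true imaginary part $\eta$ carried by $R$.

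\textbf{Step 3 (Finishing: embedding and $TT^\ast$).} For the remaining exponents in \eqref{eq:pq-apriori}, the contractive embeddings $\ell^p\hookrightarrow\ell^{p_d'}$ and $\ell^{p_d}\hookrightarrow\ell^q$ on the sequence space $\ell(\bbZ^d_L)$ (which use $p\leq p_d'$ and $q\geq p_d$) immediately give $\|R\|_{p\to q}\leq \|R\|_{p_d'\to p_d}$. For \eqref{eq:one-two-apriori}, I use the self-adjointness of $H$ in the form $RR^\ast = \eta^{-1}\Impt R$ to obtain
\[
\|R\|_{2\to q}^2 = \|RR^\ast\|_{q'\to q} \leq \eta^{-1}\|R\|_{q'\to q} \ldom \eta^{-1}\bigl(\lambda^2\eta^{-1}+1\bigr),
\]
and the equality $\|R(z)\|_{p\to 2} = \|R(z)\|_{2\to q}$ follows from duality together with the fact that $R(z)^\ast = R(\bar z)$ has the same $\ell^a\to\ell^b$ norms as $R(z)$.

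\textbf{Main obstacle.} The crux is Step 2. Step 1 is by now classical and Step 3 is entirely standard once Step 2 is in hand. The difficulty is to make the \cite{BDH} perturbative argument work uniformly for $\eta$ all the way down to $\lambda^{10}$: one must track how Gaussian fluctuations interact with the non-self-adjoint structure of $\tilde M$ and ensure the bound on $R$ degrades by no more than the factor $\lambda^2\eta^{-1}$ relative to the deterministic bound for $\tilde M$. Making sense of the square root $|\tilde M|^{1/2}$ and showing it still carries enough dispersion to close the concentration estimate is the most delicate technical point.
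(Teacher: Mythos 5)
Your Step 3 is fine, and Step 1 follows (up to logarithmic losses, which the $\ldom$ notation absorbs) from the restriction estimates the paper proves in Section~\ref{sec:restriction}. The gap is in Step 2, which is where you correctly identify the difficulty lies, but your proposed mechanism cannot work at the scales the proposition requires.

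Concretely: write $z' := z + \lambda^2\tilde\theta(z)$, so $\tilde M = (\Delta_L - z')^{-1}$ has $\Impt z' = \eta + \lambda^2\Impt\tilde\theta \gtrsim \lambda^2$ \emph{independently of how small $\eta$ is}. A bound of the form $\|\,|\tilde M|^{1/2}\lambda V |\tilde M|^{1/2}\|_{2\to 2} \lesssim \lambda(\Impt z')^{-1/2}$ is the best one can hope for from the Gaussian concentration of $V$ combined with dispersion (this is what~\cite{BDH,SSW,bourgain2001random} give), and with $\Impt z'\sim\lambda^2$ this is $O(1)$, not $\leq 1/2$. There is no small parameter with which to close the Neumann series once $\eta\lesssim\lambda^2$ — which is exactly the regime the proposition needs to cover (down to $\eta>\lambda^{10}$). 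And even setting this aside, the claim $\|S\|\leq 1/2$ would lead you astray: if $I+\tilde M W$ were uniformly invertible, the factorization $R=(I+\tilde M W)^{-1}\tilde M$ would give $\|R\|_{p_d'\to p_d}\lesssim\|\tilde M\|_{p_d'\to p_d}\lesssim 1$, with no $\lambda^2\eta^{-1}$ factor at all. The growing factor $\lambda^2\eta^{-1}$ in the statement is precisely the signature that $\tilde M$ and $R$ are \emph{not} comparable as operators in this regime, so a multiplicative (Neumann) reduction of $R$ to $\tilde M$ cannot produce it, and your heuristic that the factor "accounts for the mismatch" has no mechanism behind it.

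The paper sidesteps this entirely. Instead of trying to invert $I + \tilde M W$, it factors $R = R^{1/2}\cdot R^{1/2}$ and compares $R^{1/2}(z)$ to $(\Delta_L - z)^{-1/2}$ \emph{additively} via a Littlewood–Paley decomposition of the spectrum around $E$ into dyadic annuli of widths $2^k\eta$. On each scale $2^k\eta$ one invokes Proposition~\ref{prp:zero-bd} (the BDH spectral-projection comparison), whose error $\lambda|\log\lambda|^2(2^k\eta)^{-1/2}$ degrades gracefully as scales shrink; after weighting by the resolvent factors $2^{k/2}\eta^{1/2}$ and $2^{-\ell/2}\eta^{-1/2}$ the off-diagonal terms are summable in $k,\ell$, yielding $\|(\Delta_L-z)^{1/2}R^{1/2}\|_{2\to 2}\prec \lambda\eta^{-1/2}+1$. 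This is the source of the $(\lambda^2\eta^{-1}+1)^{1/2}$ factor per square-root, and squaring recovers $\lambda^2\eta^{-1}+1$ for $R$. No invertibility or smallness of a perturbation is ever required. If you want to pursue your route, you would need to replace Step 2 with an argument that produces an estimate that \emph{grows} in the small-$\eta$ regime rather than one that requires contractivity; the additive spectral comparison is one way to do that, and it appears to be essential, not cosmetic.
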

We remark that the resolvent bounds above imply estimates on the $\ell^p$ norms of $\ell^2$-normalized eigenfunctions of $H$ on $\bbZ^d_L$.  In particular, if $H\psi = E\psi$ for $d(E,\Sigma_d)>\eps$, then $R(z)\psi = \eta^{-1}\psi$.  Therefore, for an $\ell^2$-normalized eigenfunction $\psi$ and  $q > p_d$ we can conclude that $\|\psi\|_{\ell^q} \ldom \lambda$.

There are two main ingredients involved in the proof of Proposition~\ref{prp:apriori}.  The first is an argument that allows us to transfer
$\ell^2\to \ell^q$ bounds (and, by duality, $\ell^p\to\ell^2$ bounds) from spectral projections of the Laplacian $\Delta_L$ to spectral
projections of the random Hamiltonian $H_L$.  This is based on the main result of~\cite{BDH}, which establishes a comparison in $\ell^2$
operator norm for these spectral projections that is effective down to intervals of width $\lambda^2$ (see Proposition~\ref{prp:zero-bd} below).  We can then recover bounds for the resolvent square-root $R^{1/2}$
by decomposing into spectral projections, and we conclude by estimating (for $p\leq 2\leq q$):
\[
\|R\|_{p\to q} \leq \|R^{1/2}\|_{p\to 2} \|R^{1/2}\|_{2\to q}.
\]

The next ingredient we need is an $\ell^2\to\ell^q$ bounds for the spectral projections of $\Delta$.  This is equivalent to a Tomas-Stein type restriction estimate~\cite{tomas1975restriction}, and follow from appropriate decay bounds for the Fourier transform of the uniform measure on the level sets of the dispersion relation.  In $d=2$ this is a simple consequence of the curvature of the level sets (for levels away from $\Sigma_2$), but in $d\geq 3$ this is made more complicated by the fact that these level sets have curves of vanishing Gaussian curvature.
The correct decay bound was established by Erd{\H o}s and Salmhofer~\cite{ErdosSalmhofer} for the
dispersion relation on $\bbZ^3$ up to logarithmic losses. In~\cite{Taira}, Taira established the correct pointwise decay without logarithmic losses, and a simpler proof was recently given by Cuenin and Schippa in~\cite{CueninSchippa}.  In dimensions $d>3$ we
are not aware of any work analyzing pointwise bounds for the decay of the Fourier transform of the level sets of the dispersion relation of the Laplacian.  Instead we use a simple argument in which we write the spectral projection as an integral
of the propagator $e^{it\Delta}$ and use a Strichartz estimate (as explained for example in~\cite[Section 5.4]{germain}).  This gives a suboptimal bound since we do not take into account cancellations in the time integral, but it suffices for our result.

A few simplifying comments are in order before we commence with the proof.  First, note that the stochastic
domination notation hides powers of $\lambda^{-\eps}$, so it suffices to consider $p<p_d'$ and $q>p_d$ (that is, we do not need to treat the endpoint exponents).  Second, we note that the free resolvent on $\bbZ^d$ is functionally equivalent to the free resolvent on $\bbZ^d_L$ (in
particular, their $\ell^p\to\ell^q$ mapping properties are the same)
by Corollary~\ref{cor:resolvent-truncation}.

\subsection{The transfer argument.}
\label{sec:transfer}

First, we record the following result comparing the spectral projections of $H$ to those of $\Delta$.
\begin{proposition}[Corollary 1.3 of~\cite{BDH}]
\label{prp:zero-bd}
Let $\chi \in C_c^\infty(\R)$ be a fixed smooth bump function.  Then for
$K\geq \sqrt{|\log \lambda|}$ the bound
\[
\|\chi( (H_L-E) / \alpha) - \chi( (\Delta_L-E)/\alpha) \|_{2\to 2}
\leq C K \lambda |\log \lambda|^2 \alpha^{-1/2}
\]
holds for all $E\in \R$ and $\delta>0$ with probability at least $1-e^{-cK^2}$.
\end{proposition}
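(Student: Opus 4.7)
The plan is to reduce the comparison of smoothed spectral projections to a pointwise resolvent-difference bound via the Helffer--Sj\"ostrand calculus, and then to prove that difference bound by a Born series argument whose convergence is controlled by a single concentration inequality. Let $\tilde\chi$ be an almost-analytic extension of $\chi$ of sufficiently high order, so that $|\bar\partial\tilde\chi(z)| \lsim_N |\Impt z|^N$ on its compact support. Applying the Helffer--Sj\"ostrand formula to $A = H_L$ and $A = \Delta_L$ and subtracting, and writing $R(w) := (H_L - w)^{-1}$ and $R_0(w) := (\Delta_L - w)^{-1}$,
\[
\chi((H_L-E)/\alpha) - \chi((\Delta_L-E)/\alpha) = -\tfrac{\alpha}{\pi}\int_{\bbC} \bar\partial\tilde\chi(z)\bigl[R(E+\alpha z) - R_0(E+\alpha z)\bigr]\,dA(z),
\]
so it suffices to control $\int \alpha |\bar\partial\tilde\chi(z)|\,\|R(E+\alpha z)-R_0(E+\alpha z)\|_{2\to 2}\,dA(z)$.

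The key probabilistic input I propose is the square-root cancellation estimate
\[
\|\lambda R_0(w)^{1/2} V R_0(w)^{1/2}\|_{2\to 2} \lsim K\lambda |\log\lambda|\,(\Impt w)^{-1/2},
\]
valid with probability $\geq 1 - e^{-cK^2}$ simultaneously for all $w$ on a polynomially fine mesh covering the support of $\bar\partial\tilde\chi$. The bound at a single $w$ follows by combining Gaussian concentration for $V$ with an expectation bound for the sum of independent rank-one Gaussian matrices $\sum_x g_x \lambda R_0(w)^{1/2}|x\rangle\langle x|R_0(w)^{1/2}$ via the non-commutative Khintchine inequality; the simultaneous bound then follows by a union bound paying a $|\log\lambda|$ loss, with interpolation between mesh points handled by the deterministic continuity $\|R_0(w_1) - R_0(w_2)\|_{2\to 2}\leq |w_1-w_2|/(\Impt w_1\cdot \Impt w_2)$.

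Writing $X := \lambda R_0(w)^{1/2} V R_0(w)^{1/2}$, algebraic manipulation of the Born series for $R$ around $R_0$ gives
\[
R(w) - R_0(w) = R_0(w)^{1/2}\Bigl(\sum_{k=1}^\infty (-X)^k\Bigr) R_0(w)^{1/2},
\]
which, whenever $\|X\|_{2\to 2}\leq 1/2$ (i.e.\ whenever $\Impt w \geq cK^2\lambda^2|\log\lambda|^2$), yields $\|R(w)-R_0(w)\|_{2\to 2}\lsim K\lambda|\log\lambda|(\Impt w)^{-3/2}$; on the complementary set we fall back on the trivial bound $\|R(w)-R_0(w)\|_{2\to 2}\leq 2(\Impt w)^{-1}$. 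Substituting into the Helffer--Sj\"ostrand integral, splitting at $\alpha|\Impt z|\sim K^2\lambda^2|\log\lambda|^2$, and using $|\bar\partial\tilde\chi(z)|\lsim |\Impt z|^2$, the perturbative piece contributes at most $\int \alpha |\Impt z|^2\cdot K\lambda(\alpha|\Impt z|)^{-3/2}\,dA(z) \lsim K\lambda|\log\lambda|\alpha^{-1/2}$, while the trivial piece contributes $O((K\lambda|\log\lambda|)^2/\alpha)$, absorbed into $K\lambda|\log\lambda|^2\alpha^{-1/2}$ provided $K\lambda|\log\lambda|\leq \alpha^{1/2}$ (in the opposite regime the claim is trivial since $\|\chi(\cdot)\|_{2\to 2}\lsim 1$ automatically).

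The main technical obstacle is establishing the sharp square-root cancellation bound with the required $(\Impt w)^{-1/2}$ dependence: a naive Gaussian concentration argument based on the operator-norm Lipschitz constant of $V\mapsto \|\lambda R_0^{1/2}VR_0^{1/2}\|_{2\to 2}$ produces only a fluctuation of order $K\lambda/\Impt w$, which is weaker than the claim when $\Impt w<1$. The sharper bound must exploit the specific rank-one sum structure of $\lambda R_0^{1/2}VR_0^{1/2}$, e.g.\ by computing high moments via Wick's theorem (which produces a loop graph expansion that is directly estimated using $\|R_0\|_{2\to 2}=(\Impt w)^{-1}$) and then upgrading to a high-probability statement via Markov's inequality and an $\eps$-net argument on the unit sphere, which is where the $|\log\lambda|$ factors originate.
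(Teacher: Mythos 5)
The paper does not actually prove this proposition internally: it is imported wholesale from Corollary 1.3 of~\cite{BDH}, and the only argument supplied in the text is the adaptation from $\Z^d$ (with truncated potential) to the torus $\Z^d_L$, which rests on the ``local dispersive bound'' for $e^{it\Delta}$ of~\cite{BDH} remaining valid for times $\lsim L$ by periodicity. Your proposal is therefore a genuinely different, self-contained route, carried out on the energy side rather than in the time domain: Helffer--Sj\"ostrand reduces the smoothed-projection difference to $\|R(w)-R_0(w)\|_{2\to2}$; the exact factorization $R-R_0=R_0^{1/2}\big(\sum_{k\geq1}(-X)^k\big)R_0^{1/2}$ with $X=\lambda R_0^{1/2}VR_0^{1/2}$ (valid because $R_0^{1/2}$ is a function of $\Delta_L$) converts this to the square-root cancellation bound $\|X\|\lsim K\lambda|\log\lambda|(\Impt w)^{-1/2}$; and your splitting of the Helffer--Sj\"ostrand integral at $\alpha|\Impt z|\sim K^2\lambda^2|\log\lambda|^2$, with the trivial fallback when $\alpha^{1/2}\lsim K\lambda|\log\lambda|$, correctly lands inside the $CK\lambda|\log\lambda|^2\alpha^{-1/2}$ budget. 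Conceptually this is the resolvent-space counterpart of the dispersive input in~\cite{BDH} (the introduction of the paper itself notes that such square-root bounds follow from the arguments of any of Bourgain, Schlag--Shubin--Wolff, or~\cite{BDH}). What your route buys: it works directly on $\Z^d_L$ with no adaptation step, the only $L$-dependence entering through $\log(L^d)\lsim|\log\lambda|$ in the Khintchine/net dimension factor, and it only uses that $\int\rho(s)|s-w|^{-1}\,ds\lsim\log(1/\Impt w)$, so it is uniform in $E$, including near $\Sigma_d$, as the statement requires. What the paper's citation buys is brevity and reuse of an input already sharpened in~\cite{BDH}.

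Two remarks on the step you flag as the main obstacle. First, it is easier than you fear: the Euclidean Lipschitz constant of $g\mapsto\|\sum_x g_xA_x\|$, $A_x=\lambda R_0^{1/2}\ket{x}\bra{x}R_0^{1/2}$, is not $\lambda(\Impt w)^{-1}$ but the weak variance $\sigma_*=\sup_{\|u\|=\|v\|=1}\big(\sum_x|\langle u,A_xv\rangle|^2\big)^{1/2}\leq\lambda\big(\max_x\langle x|\,|R_0|\,|x\rangle\big)^{1/2}\|R_0\|^{1/2}\lsim\lambda(\Impt w)^{-1/2}\log^{1/2}(1/\Impt w)$, by Cauchy--Schwarz in $x$; so plain Gaussian (Borell--TIS) concentration about the noncommutative-Khintchine mean already gives the $e^{-cK^2}$ tail at the claimed scale, and no Wick-moment bootstrap is needed (though that route also works). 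Second, a few routine points should be tightened: the mesh interpolation must be performed for $X(w)$ or $R(w)-R_0(w)$, not just $R_0(w)$, which costs a factor $\|V\|\lsim K\sqrt{|\log\lambda|}$ but is harmless on a polynomially fine mesh; the union bound over the mesh yields failure probability $\lambda^{-C}e^{-cK^2}$, and it is precisely the hypothesis $K\geq\sqrt{|\log\lambda|}$ that absorbs this into $e^{-c'K^2}$; and the uniformity over all $E$ (and over the width parameter) asserted in the statement needs one more standard net-plus-Lipschitz step, using that $E\mapsto\chi((H_L-E)/\alpha)$ is $\alpha^{-1}\|\chi'\|_\infty$-Lipschitz in operator norm and that both terms vanish for $E$ far from the spectra.
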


This result follows directly from the proof of Corollary $1.3$ in~\cite{BDH}. That result is stated for a Hamiltonian on $\Z^d$ as in \eqref{eq:tbm}, but with the potential truncated to a box of size $L$. The proof can easily be adapted to the Hamiltonian $H_L$ since the key input is the ``local dispersive bound'' of Lemma 2.2 in~\cite{BDH}. It is easy to see that such a bound still holds for
$s\lsim L$ because the Schrodinger evolution is now periodic with period $2\pi L$.  Since we take $L \geq \lambda^{-100} \gg \lambda^{-2}$, this does not affect the proof.

Using Proposition~\ref{prp:zero-bd} we may now transfer bounds from the square-root resolvent of the Laplacian to bounds
on the square-root resolvent of $H_L$.  We always take $z\in\bbH$, so one can define the square root by taking a branch cut down the negative imaginary axis (really we can define $R^{1/2}(z) =\chi(z)$ by applying the functional calculus to any root $\chi(z)$ such that $\chi(z)^2 = \frac{1}{z}$ for $z\in\bbH$).
For the sake of simplicity we record the bound using the stochastic domination notation, but strictly speaking this gives a weaker tail bound than what follows from Proposition~\ref{prp:zero-bd}.

\begin{lemma}
\label{lem:two-to-p-transfer}
Let $X$ be any norm on the vector space of functions of $\bbZ^d_L$.
Then for any $z = E+i\eta$ with $\eta>\lambda^{100}$ we have
\begin{align*}
\|R^{1 /2}(z)\|_{2\to X} \ldom (\lambda \eta^{-1/ 2}+1)\|(\Delta_L-z)^{-1 /2}(z)\|_{2\to X}.
\end{align*}
\end{lemma}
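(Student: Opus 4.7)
The plan is to reduce the $X$-norm estimate to an $\ell^2\to\ell^2$ operator-norm bound via the algebraic identity
\[
R^{1/2}(z) = (\Delta_L - z)^{-1/2}\cdot T(z), \qquad T(z) := (\Delta_L - z)^{1/2}(H_L - z)^{-1/2},
\]
valid on the finite-dimensional space $\ell^2(\bbZ^d_L)$ since $(\Delta_L - z)^{-1/2}(\Delta_L - z)^{1/2} = \Id$ by the functional calculus of $\Delta_L$ (with a branch cut for the square root below the negative imaginary axis). Taking the $X$-norm,
\[
\|R^{1/2}(z) f\|_X \leq \|(\Delta_L - z)^{-1/2}\|_{2\to X}\,\|T(z) f\|_2,
\]
so it suffices to establish $\|T(z)\|_{2\to 2}\ldom 1 + \lambda \eta^{-1/2}$.

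Using $(\Delta_L - \bar z)^{1/2}(\Delta_L - z)^{1/2} = |\Delta_L - z|$ (since $|\sigma - z|^2 = (\sigma - z)(\sigma - \bar z)$ for real $\sigma$), one computes $T(z)^* T(z) = (H_L - \bar z)^{-1/2}\,|\Delta_L - z|\,(H_L - z)^{-1/2}$, hence
\[
\|T(z) f\|_2^2 = \||\Delta_L - z|^{1/2}\,R^{1/2}(z)\,f\|_2^2.
\]
I would bound the right-hand side by dyadically decomposing in the spectral parameter of $\Delta_L$. For $j\geq 0$, let $A_j$ be a smoothed spectral projector of $\Delta_L$ onto the annulus $\{\sigma : |\sigma - E| \sim 2^j\eta\}$ (with $A_0$ projecting onto the ball $\{|\sigma - E|\lesssim \eta\}$), and let $B_j$ be the analogous projector of $H_L$. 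On $\operatorname{range}(A_j)$, $|\Delta_L - z|^{1/2}$ is multiplication by a scalar of size $\sim(2^j\eta)^{1/2}$, so
\[
\||\Delta_L - z|^{1/2} R^{1/2}(z) f\|_2^2 \lesssim \sum_j 2^j\eta\,\|A_j R^{1/2}(z) f\|_2^2.
\]
Writing $A_j = B_j + (A_j - B_j)$, the first piece satisfies $\|B_j R^{1/2}(z) f\|_2 \leq (2^j\eta)^{-1/2}\|B_j f\|_2$ because $B_j$ commutes with $R^{1/2}(z)$ and $|R^{1/2}(z)|\lesssim (2^j\eta)^{-1/2}$ on $\operatorname{range}(B_j)$. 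For the second piece, Proposition~\ref{prp:zero-bd} at scale $\alpha = 2^j\eta$ yields $\|A_j - B_j\|_{2\to 2}\ldom \lambda(2^j\eta)^{-1/2}$, which combined with the trivial $\|R^{1/2}(z)\|_{2\to 2}\leq \eta^{-1/2}$ gives $\|(A_j - B_j) R^{1/2}(z) f\|_2 \ldom \lambda(2^j\eta)^{-1/2}\eta^{-1/2}\|f\|_2$. Summing over the $O(\log\lambda^{-1})$ relevant $j$ (only finitely many $j$ contribute since $\sigma(\Delta_L)$ and $\sigma(H_L)$ are bounded) produces
\[
\sum_j 2^j\eta\|A_j R^{1/2}(z) f\|_2^2 \ldom \sum_j \|B_j f\|_2^2 + \lambda^2\eta^{-1}\|f\|_2^2 \ldom (1 + \lambda^2\eta^{-1})\|f\|_2^2,
\]
which yields $\|T(z)\|_{2\to 2}\ldom 1 + \lambda\eta^{-1/2}$ after taking square roots.

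The main technical point is applying Proposition~\ref{prp:zero-bd} to the annular projections $A_j, B_j$, since the proposition is stated for smooth bumps centered at $E$ rather than annuli; I would handle this by writing each $A_j$ as a difference of two smooth bumps at scales $2^{j-1}\eta$ and $2^j\eta$ and applying the proposition to each. The tail with $2^j\eta$ exceeding the spectral diameter is absorbed using the high-probability bound $\|V\|_\infty \lesssim \sqrt{\log L}$, which ensures $\sigma(H_L)$ lies in a uniformly bounded interval.
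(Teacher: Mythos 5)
Your proof is correct and follows essentially the same route as the paper's: reduce to a $2\to 2$ bound on $(\Delta_L-z)^{1/2}R^{1/2}(z)$, decompose dyadically in the spectral parameter around $E$, and control the mismatch between spectral projections of $\Delta_L$ and $H_L$ via Proposition~\ref{prp:zero-bd}. The only difference is organizational — the paper expands in a double sum over annuli indices for $\Delta_L$ and $H_L$ and uses a disjoint-support trick to produce the difference of projections, whereas you use a single dyadic sum with the direct splitting $A_j = B_j + (A_j - B_j)$ and exploit the almost-orthogonality of the $B_j$ on the diagonal piece; the mechanism and the resulting bound (up to the logarithmic losses absorbed by $\ldom$) are the same.
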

\begin{proof}
By writing
\begin{align*}
	R^{1 /2}(z)=(\Delta_L-z)^{-1 /2}(\Delta_L-z)^{1 /2}R^{1 /2}(z),
\end{align*}
it suffices to show that
\begin{align*}
\|(\Delta_L-z)^{1 /2}R^{1/2}\|_{2\to 2}\prec \lambda \eta^{-1/ 2} + 1.
\end{align*}
To this end, choose a smooth partition of unity $1 = \sum_{k=0}^\infty \chi_k$ where $\chi_0$ and $\chi_j$ are positive bump functions with  supports satisfying
\begin{align*}
\supp \chi_k \subset \begin{cases}
	\{t\mid |t-E|\leq\eta\} &k=0\\
	\{t \mid 2^{k-2}\eta\leq |t-E|\leq 2^{k+1}\eta\} &\text{otherwise}.
\end{cases}
\end{align*}
Moreover, we choose the partition so that the functions $\chi_j$ with $j\geq 1$ are translations and dilations of $\chi_1$ so that Proposition~\ref{prp:zero-bd} can be applied to all $\chi_j$ uniformly.
Note that this implies
\begin{align*}
	\Id=\sum_{k=0}^N\chi_k(\Delta_L)=\sum_{k=0}^N\chi_k(H_L),
\end{align*}
for $N := \log_2 (W/\eta)$ where $W$ is the diameter of the spectrum of $H_L$.
Since $W\ldom \lambda^{-10}$ and $\eta > \lambda^{100}$ we have
$N\ldom 1$.

Using this decomposition we may write,
\begin{align*}
	(\Delta_L-z)^{1 /2}R^{1 /2}(z)=\sum_{k,\ell}(\Delta_L-z)^{1 /2}\chi_k(\Delta_L)\chi_\ell(H_L)R^{1 /2},
\end{align*}
where we have abbreviated $R=R(z)$.
Now observe that for each $k,\ell$ we have
\begin{align*}
&\|(\Delta_L-z)^{1 /2}\chi_k(\Delta_L)\chi_\ell(H_L)R^{1 /2}\|_{2\to 2} \\
& \quad\leq\|(\Delta_L-z)^{1/2}\chi^{1 /2}_k(\Delta_L)\|_{2\to 2}\|\chi^{1 /2}_k(\Delta_L)\chi^{1 /2}_\ell(H_L)\|_{2\to 2}
\|\chi_{\ell}^{1 /2}(H_L)R\|_{2\to 2}\\
&\quad\lsim 2^{k / 2}\eta^{1 /2} \cdot 2^{-\ell / 2}\eta^{-1 / 2}\cdot \|\chi^{1 /2}_k(\Delta_L)\chi^{1 /2}_\ell(H_L)\|_{2\to 2}.
\end{align*}
If $|k-\ell|<3$ then this is bounded by an absolute constant.

Otherwise if $|k-\ell|\geq 3$ then
the support conditions on the $\chi_k$ guarantee that
\begin{align*}
\chi_k^{1 / 2}	(\Delta_L)\chi_{\ell}^{1 /2}(H_L)=-\chi^{1/ 2}_{k}(\Delta_L)\left( \chi^{1 /2}_\ell(\Delta_L)-\chi^{1 /2}_{\ell}(H_L)\right) \\
=\left( \chi^{1 /2}_k(\Delta_L)-\chi^{1 /2}_{k}(H_L)\right) \chi^{1 /2}_{\ell}(H_L).
\end{align*}

By Proposition~\ref{prp:zero-bd}, we have for $K\geq \sqrt{|\log\lambda|}$ the bound
\begin{align*}
\|\chi^{1 /2}_k(\Delta_L)-\chi^{1 /2}_{k}(H_L)\|_{2\to 2}\leq CK|\log\lambda|^22^{-k /2}\eta^{-1 /2},
\end{align*}
for all $k$ with probability at least $1-e^{-cK^2}$. Therefore, for $|k-\ell|\geq 3$ and $K$ in the range given above we have
\begin{align*}
	\|\chi_k^{1 / 2}(\Delta_L)\chi_{\ell}^{1 /2}(H_L)\|_{2\to 2}\leq CK\lambda|\log\lambda|^2 2^{-\max(k,\ell) / 2}\eta^{-1 /2}.
\end{align*}
with probability at least $1-e^{-cK^2}$. Combining these estimates, we conclude that for each $K\geq \sqrt{|\log\lambda|}$
\begin{align*}
\|(\Delta_L-z)^{1 /2}R^{1/2}\|_{2\to 2}&\leq
\sum_{\substack{k,\ell\in[N] \\ |k-\ell|<3}} C +
CK\lambda|\log\lambda|^2 \eta^{-1 /2}\sum_{\substack{k,\ell\in[N]\\|k-\ell|\geq 3}}2^{k/ 2-\ell / 2}2^{-\max(k,\ell) / 2}\\
&\leq CK\lambda |\log \lambda|^2\eta^{-1/2}N + CN,
\end{align*}
with probability at least $1-e^{-cK^2}$.
Using $N\ldom 1$ and the definition of stochastic domination this implies
\begin{align*}
\|(\Delta_L-z)^{1 /2}R^{1/2}\|_{2\to 2}\prec \lambda \eta^{-1 /2} + 1.
\end{align*}
\end{proof}

\subsection{Resolvent bounds for $\Delta_L$}
\label{sec:restriction}
Next we need a slight generalization of the Tomas-Stein restriction estimate in~\cite{tomas1975restriction} (see also~\cite[Section 11.1]{MuscaluSchlag} for a detailed exposition) to the lattice setting.  This was done in~\cite{kachkovskiui2013stein}, and we also include an abbreviated proof below for the sake of completeness.
\begin{proposition}\label{pr:decayToExtension}
Let $\sigma$ be the surface measure of a hyper-surface inside $\bbT^{d}$ satisfying
\begin{align}\label{eq:sigmaHatPtwise}
	|\hat{\sigma}(x)|\leq C\left<x \right>^{-k},
\end{align}
for some $k>0$. Then
\begin{align}\label{eq:TSExtension}
\sup_{f\in\ell^2(\sigma)}\left\|\int_{\bbT^{d}}e^{ix\xi}f(\xi)\,\sigma(d\xi)\right\|_{\ell^p(\Z^{d})}\leq C\|f\|_{\ell^2(\sigma)}
\end{align}
for $p>2(k+1)/k$.
The constant $C$ in \eqref{eq:TSExtension} depends only on the constant in \eqref{eq:sigmaHatPtwise} and the constant
\begin{equation}
\label{eq:ballgrowth}
\sup_{\xi\in\bbT^d, r>0}r^{1-d}\sigma(B(\xi,r)).
\end{equation}
\end{proposition}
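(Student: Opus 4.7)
The plan is to execute a standard Stein--Tomas $TT^{\ast}$ argument in the torus/lattice setting. Writing the extension operator as $Ef(x)=\int_{\bbT^d}e^{ix\xi}f(\xi)\,\sigma(d\xi)$, duality gives $\|E\|_{\ell^2(\sigma)\to\ell^p(\Z^d)}^2=\|EE^{\ast}\|_{\ell^{p'}(\Z^d)\to\ell^p(\Z^d)}$, and a direct computation identifies $EE^{\ast}$ as convolution with $\hat\sigma$ on $\Z^d$, that is, $EE^{\ast}g(x)=\sum_{y\in\Z^d}\hat\sigma(x-y)g(y)$. The task thus reduces to bounding this convolution operator from $\ell^{p'}$ to $\ell^p$.

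To that end, I would dyadically decompose $\hat\sigma$ in the space variable: write $\hat\sigma=\sum_{j\geq 0}K_j$ where $K_j(x):=\phi_j(x)\hat\sigma(x)$ and $\phi_j$ is a smooth bump localized to the annulus $|x|\sim 2^j$ (with $\phi_0$ handling the origin and $\sum_j\phi_j=1$). For each piece I would establish two endpoint operator bounds. First, the pointwise decay \eqref{eq:sigmaHatPtwise} gives $\|K_j\|_{\ell^\infty(\Z^d)}\lesssim 2^{-jk}$ and hence $\|K_j\ast\cdot\|_{\ell^1\to\ell^\infty}\lesssim 2^{-jk}$. Second, Plancherel on $\Z^d$ gives $\|K_j\ast\cdot\|_{\ell^2\to\ell^2}=\|\widehat{K_j}\|_{L^\infty(\bbT^d)}$; since $\widehat{K_j}$ equals the convolution on $\bbT^d$ of $\sigma$ with the approximate identity $\hat\phi_j$ (concentrated at scale $2^{-j}$ with $\|\hat\phi_j\|_{L^1}=O(1)$), the ball-growth hypothesis \eqref{eq:ballgrowth} together with a dyadic estimate using the rapid decay of $\hat\phi_j$ yields $\|\widehat{K_j}\|_{L^\infty(\bbT^d)}\lesssim 2^j$.

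The final step is Riesz--Thorin interpolation between the pairs $(\ell^1,\ell^\infty)$ and $(\ell^2,\ell^2)$. Setting $\theta=2/p$ (so that $p'=2/(2-\theta)$) one obtains
\[
\|K_j\ast\cdot\|_{\ell^{p'}\to\ell^p}\lesssim (2^{-jk})^{1-\theta}(2^j)^\theta=2^{-j(k-\theta(k+1))},
\]
which is summable in $j\geq 0$ precisely when $\theta<k/(k+1)$, equivalently $p>2(k+1)/k$. Combining the geometric series bound with the $TT^{\ast}$ reduction yields \eqref{eq:TSExtension}, and by construction the constant depends only on the two constants in \eqref{eq:sigmaHatPtwise} and \eqref{eq:ballgrowth}.

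The only substantive step is the $\ell^2\to\ell^2$ bound on $K_j$: one must see that although $\hat\phi_j$ has $L^\infty$ norm of size $2^{jd}$, the $(d-1)$-dimensional nature of $\sigma$ quantified by \eqref{eq:ballgrowth} reduces this by a factor $2^{-j(d-1)}$, leaving the $2^j$ loss that the decay hypothesis can absorb after interpolation. The rest of the argument is bookkeeping, and the result is essentially the classical Stein--Tomas theorem transplanted from $\R^d$ to the setting $\bbT^d\times\Z^d$.
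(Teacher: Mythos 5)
Your argument is correct and is essentially the same as the paper's: both reduce to the $\ell^{p'}\to\ell^p$ bound for convolution with $\hat\sigma$ on $\Z^d$ via the $TT^*$/duality identity, perform the same dyadic decomposition of $\hat\sigma$, prove the same two endpoint bounds ($\ell^1\to\ell^\infty$ from the pointwise decay, $\ell^2\to\ell^2$ from Plancherel and the ball-growth bound on $\sigma$), and conclude by Riesz--Thorin interpolation and geometric summation. The only difference is the labeling of the interpolation parameter ($\theta=2/p$ versus $\gamma=1-2/p$), which is purely cosmetic.
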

\begin{proof}
By a standard duality argument, it suffices to show that
\begin{align*}
	\|f*\hat{\sigma}_E\|_{\ell^{p}(\Z^{d})}\leq C \|f\|_{\ell^{p'}(\Z^{d})}
\end{align*}
for all $f\in \ell^{p'}(\Z^{d})$, where $p'$ is the H\"{o}lder conjugate of $p$.\par
Let $1=\chi_0+\sum_{j=1}^{\infty}\chi_j$ be a Littlewood-Payley partition of unity, i.e., for $j>0 $, $\chi_j(x)=\chi(2^{-j}x)$ for some $\chi\in C^\infty_c(\Real^{d})$.
Restricting to $\Z^{d}$, we may then write
\begin{align*}
f*\hat{\sigma}=f*\chi_0\hat{\sigma}+\sum_{j\geq 1}f*\chi_j \hat{\sigma}.
\end{align*}
From Young's inequality and \eqref{eq:sigmaHatPtwise}, we have that
\begin{align*}
\|f*\chi_j\hat{\sigma}\|_{\ell^\infty(\Z^{d})}\leq C 2^{-j k}\|f\|_{\ell^1(\Z^{d})}.
\end{align*}
On the other hand, we have that
\begin{align*}
\|f*\chi_j \hat{\sigma}\|_{\ell^{2}(\Z^{d})}
&= \|f\|_{\ell^{2}(\Z^{d})}\cdot\|\hat{\chi}_j*\sigma\|_{L^\infty(\bbT^{d})} \\
&\lsim 2^j \|f\|_{\ell^{2}(\Z^{d})}.
\end{align*}
To get the bound $\|\hat{\chi}_j \ast \sigma\|_{L^\infty(\bbT^d)}\lsim 2^j$, note that $\hat{\chi}_j$
is $L^1$-normalized and localized to a ball of radius $2^{-j}$.  More precisely,
$\hat{\chi_j}$ satisfies
\[
|\hat{\chi_j}(\xi)| \lsim 2^{jd} |2^j d(\xi,0) + 1|^{-100d},
\]
for all $\xi\in\bbT^d$ so the bound $\|\hat{\chi_j}\ast \sigma\|_{L^\infty} \lsim 2^j$ follows from~\eqref{eq:ballgrowth}.

Interpolating yields
\begin{align*}
	\|f*\chi_j \hat{\sigma}\|_{\ell^{p}(\Z^{d})}\leq C 2^{j(-k\gamma +(1-\gamma))}\|f\|_{\ell^{p'}(\Z^{d})},
\end{align*}
when $p=\frac{2}{1-\gamma}$. Thus,
\begin{align*}
\|f*\hat{\sigma}\|_{\ell^{p}(\Z^{d})}&\leq C\sum_{j\geq 0}2^{j(-k\gamma +(1-\gamma))}\|f\|_{\ell^{p'}(\Z^{d})}\\
&\leq C \|f\|_{\ell^{p'}(\Z^{d})},
\end{align*}
when $-k\gamma+(1-\gamma)<0$, or equivalently $p>2(k+1)/k$, as desired.
\end{proof}

Next we use Proposition~\ref{pr:decayToExtension} to obtain estimates for spectral projections by
integrating over level sets of the dispersion relation.

\begin{lemma}\label{lm:spectralProj}
Let $\epsilon>0$, $d\ge 2$ and define
\begin{align*}
	\Pi_{E,\alpha}(\cdot) :=\One_{[E-\alpha,E+\alpha]}(\cdot),
\end{align*}
for any $E\in \mathbb{R}$, $\delta>0$. Then for any $E_0\in\mathbb{R}$ with $d(E_0,\Sigma_d)>\epsilon$ and any $\alpha<\epsilon$
\begin{align*}
	\|\Pi_{E_0,\delta}(\Delta_L)\|_{2\to p}\lsim  \sqrt{\alpha}.
\end{align*}
\end{lemma}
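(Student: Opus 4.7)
My approach is to combine a $TT^*$ argument with the extension estimate from Proposition~\ref{pr:decayToExtension}, after identifying the spectral projection as a Fourier multiplier and relating it to the surface measure on the relevant level set. The relevant range of $p$ will be $p\ge p_d$ (i.e.\ the same range appearing in the $\ell^p\to\ell^q$ bounds of Proposition~\ref{prp:apriori}); the statement only makes sense in this range.

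First I would observe that $\Pi_{E_0,\alpha}(\Delta_L)$ is convolution with the kernel
\[
K_\alpha(x) = \frac{1}{(2\pi)^d}\int_{\omega(\xi)\in[E_0-\alpha,E_0+\alpha]} e^{-ix\xi}\,\diff\xi,
\]
and use the co-area formula to decompose the $\xi$-integral by level sets of $\omega$. Because $d(E_0,\Sigma_d)>\eps$, the dispersion relation $\omega$ has no critical values on $[E_0-\alpha,E_0+\alpha]$ (for $\alpha<\eps$), so $|\nabla\omega|$ is bounded below on the corresponding $\xi$-region and the level sets $\{\omega=E\}$ for $E\in[E_0-\alpha,E_0+\alpha]$ are smooth hypersurfaces that vary smoothly in $E$. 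This gives
\[
K_\alpha(x) = \frac{1}{(2\pi)^d}\int_{E_0-\alpha}^{E_0+\alpha}\hat\sigma_E(x)\,\diff E,
\]
where $\sigma_E$ is the surface measure on $\{\omega=E\}$ weighted by $|\nabla\omega|^{-1}$. In particular, $|K_\alpha(x)|\lsim \alpha\sup_{|E-E_0|\leq\alpha}|\hat\sigma_E(x)|$ uniformly on $\bbZ^d_L$.

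Next I would apply $TT^*$. Since $\Pi_{E_0,\alpha}(\Delta_L)$ is a self-adjoint orthogonal projection,
\[
\|\Pi_{E_0,\alpha}(\Delta_L)\|_{2\to p}^2 \;=\; \|\Pi_{E_0,\alpha}(\Delta_L)\Pi_{E_0,\alpha}(\Delta_L)^*\|_{p'\to p} \;=\; \|\Pi_{E_0,\alpha}(\Delta_L)\|_{p'\to p},
\]
so it suffices to show that convolution with $K_\alpha$ is bounded $\ell^{p'}\to\ell^p$ with norm $\lsim \alpha$. By the bound above on $K_\alpha$ and the $TT^*$ dual of Proposition~\ref{pr:decayToExtension} applied to each $\sigma_E$ (i.e.\ the fact that convolution with $\hat\sigma_E$ is bounded $\ell^{p'}\to\ell^p$ uniformly in $E$), this reduces to verifying the pointwise decay hypothesis $|\hat\sigma_E(x)|\lsim\langle x\rangle^{-k}$ and the ball-growth condition \eqref{eq:ballgrowth} uniformly for $E$ near $E_0$, with exponent $k$ large enough that $p_d=2(k+1)/k$.

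The ball-growth condition is immediate from the lower bound on $|\nabla\omega|$. The decay hypothesis is the main point and requires separate input by dimension, matching the cases in the definition of $p_d$: in $d=2$ the level sets have nonvanishing curvature away from $\Sigma_2$, so stationary phase yields $k=1/2$ and $p_2=6$; in $d=3$ the level sets contain curves of vanishing Gaussian curvature, and I would cite the sharp pointwise bounds of \cite{Taira, CueninSchippa} to obtain the decay exponent yielding $p_3=14/3$; for $d\ge 4$, where no such sharp pointwise decay is available in the literature, I would instead bypass Proposition~\ref{pr:decayToExtension} at this step and bound $K_\alpha*f$ by writing $\Pi_{E_0,\alpha}$ as a smoothed time integral of the Schr\"odinger propagator $e^{it\Delta_L}$ and applying the Strichartz estimate for the discrete Laplacian (as in \cite[Section~5.4]{germain}), which produces the exponent $p_d=2d/(d-3)$. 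The main obstacle throughout is therefore the decay of $\hat\sigma_E$, which is genuinely dimension-dependent; the $TT^*$ reduction itself is routine once the correct kernel identification is made.
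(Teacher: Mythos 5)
Your proposal is correct, and the core ingredients — co-area decomposition into level sets of the dispersion relation, Fourier decay of the surface measures (curvature in $d=2$, the Cuenin--Schippa bound in $d=3$, the Strichartz fallback in $d\geq 4$), and the uniformity of constants for $E$ near $E_0$ — are exactly those used in the paper. The one genuine structural difference is how the $\sqrt{\alpha}$ is extracted. The paper applies the extension estimate of Proposition~\ref{pr:decayToExtension} directly inside the Minkowski integral and then performs Cauchy--Schwarz in the energy variable $E$, using the co-area identity $\int_{E_0-\alpha}^{E_0+\alpha}\|\hat f\|_{L^2(\sigma_E)}^2\diff E \lsim \|f\|_{\ell^2}^2$ to close. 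You instead $TT^*$ the projection itself, $\|\Pi\|_{2\to p}^2=\|\Pi\|_{p'\to p}$, and then only need the $\ell^{p'}\to\ell^p$ boundedness of convolution with $\hat\sigma_E$ (the intermediate step of the proof of Proposition~\ref{pr:decayToExtension}) together with Minkowski to get the linear bound $\lsim\alpha$. Both routes exploit the same orthogonality in energy, just packaged differently; yours arguably makes the origin of the square root more transparent (it comes from $\Pi^2=\Pi$ rather than from an explicit Plancherel argument). Two small points of hygiene: the pointwise bound $|K_\alpha(x)|\lsim\alpha\sup_E|\hat\sigma_E(x)|$ that you state is a red herring and not what your argument actually uses — you need the integral representation $K_\alpha=\int\hat\sigma_E\diff E$ together with Minkowski, not the pointwise estimate alone; and you should reduce from $\Delta_L$ to $\Delta_{\bbZ^d}$ first (as the paper does via the Combes--Thomas comparison in Lemma~\ref{lem:exp-decay}), since the co-area formula and surface measures live naturally on the continuous torus $\bbT^d$, not the discrete Fourier dual of $\bbZ^d_L$.
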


\begin{proof}
We prove the analogous bound for $\Delta_{\Z^d}$, which then implies the statement by Lemma~\ref{lem:exp-decay}. Fix $\epsilon>0$. First, for $d=3$, let $\sigma_E$ be the surface measure on $\omega^{-1}(\{E\})$.
From Section 4.3 of \cite{CueninSchippa}, there exists
a $C>0$ such that for any $E$ with $d(E,\Sigma_3)>\eps$
\begin{align*}
|\hat{\sigma}_E(x)|\leq C\left<x \right>^{-3 /4}.
\end{align*}

Moreover, by continuity and compactness, the quantity
\begin{align*}
\sup_{r>0,\xi\in \bbT^{d}}\frac{\sigma_E(B(\xi,r))}{r^{d-1}}
\end{align*}
is bounded uniformly for $E$ in this range as well so that by Proposition \ref{pr:decayToExtension} we have
\begin{align}\label{eq:extensionEst}
\sup_{f\in\ell^2(\sigma)}\left\|\int_{\bbT^{d}}e^{ix\xi}f(\xi)\,\sigma(d\xi)\right\|_{\ell^p(\Z^{d})}\leq C\|f\|_{\ell^2(\sigma)},
\end{align}
for any $p>14/ 3$.  Using the coarea formula, we write
\begin{align*}
(\Pi_{E_0,\delta}(\Delta)f)(x)=
\int_{|\omega(\xi)-E_0|<\delta}e^{ix\xi}\hat{f}(\xi)\:d\xi
=\int_{E_0-\delta}^{E_0+\delta}\int e^{ix\xi}\hat{f}(\xi)|\nabla \omega(\xi)|^{-1} \:\sigma_E(d\xi) dE,
\end{align*}
where $|\nabla\omega|$ is uniformly bounded below on the domain of integration.
Thus, from Minkowski's inequality and \eqref{eq:extensionEst}, we have that
\begin{align*}
\|\Pi_{E_0,\delta}(\Delta)f\|_{\ell^p(\Z^{3})}&\leq
\int_{E_0-\delta}^{E_0+\delta}\left\|\int e^{ix\xi}\hat{f}(\xi)|\nabla \omega(\xi)|^{-1}\:\sigma_E(d\xi)\right\|_{\ell^{p}(\Z^{3})}\:dE\\
&\leq C\int_{E_0-\delta}^{E_0+\delta}\|\hat{f}\|_{L^2(\sigma_E)}\:dE\\
&\leq C\sqrt{\delta} \|f\|_{\ell^2(\Z^{3})},
\end{align*}
which establishes the result for $d=3$.

For $d=2$, the proof is the same except that one instead has the estimate
\begin{align*}
|\hat{\sigma}_E(x)|\leq C\left<x \right>^{-1 /2},
\end{align*}
yielding the same result for $p>6$.
This follows from elementary stationary phase because the curvature of $\omega^{-1}(\{E\})$ is uniformly bounded away from $0$ for all $E$ with $d(E,\Sigma_2)>\eps$.  Indeed, this same bound holds in
any $d\geq 2$ because there is always at least one direction of nonvanishing Gaussian curvature.  In particular, we can take $p_d=6$ in $d=2$ and $d=4$

For $d>4$, we instead rewrite  $\Pi_{E_0,\delta}$ in terms of the unitary propagator:
\begin{align*}
\Pi_{E_0,\delta}(\Delta_{\Z^{d}})=\int_{\R}\hat{\One}_{[E_0-\delta,E_0+\delta]}(t)e^{it\Delta_{\Z^{d}}}\:dt
\end{align*}
so that by Cauchy-Schwarz
\begin{align*}
\|\Pi_{E_0,\delta}(\Delta_{\Z^{d}})\|_{2\to p} \lsim
\sqrt{\delta}\|e^{it\Delta_{\Z^{d}}}\|_{L^2(\R\to \ell^p(\Z^{d}))}\\
\end{align*}
Using the Strichartz estimate for $\Delta_{\Z^{d}}$ \cite[Thm 1]{stefanov2005asymptotic},
\begin{align*}
\|e^{it\Delta_{\Z^{d}}}\|_{L^2_t\ell^p_x}\lsim 1,
\end{align*}
valid for $d>3$ and $p=\frac{2d}{d-3}$, completes the proof.
\end{proof}

Finally we decompose the square root of the resolvent $R^{1/2}$ using spectral projections and complete the proof of Proposition~\ref{prp:apriori}.

\begin{proposition}
\label{prp:R_0Bounds}
Let $d\geq 2$ and $\epsilon>0$ and $p>p_d$.
For any $z=E+i\eta$ with $d(E,\Sigma_d)>\epsilon$
\begin{align}\label{eq:freeResolvent2Top}
\|(\Delta_L-z)^{-1 /2}\|_{2\to p}\lsim |\log \eta| + 1
\end{align}
and
\begin{align}\label{eq:freeResolventpTo2}
\|(\Delta_L-z)^{-1 /2}\|_{p'\to 2}\lsim |\log \eta| + 1.
\end{align}
The implicit constant may depend on $p$.
\end{proposition}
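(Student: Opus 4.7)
The plan is to prove \eqref{eq:freeResolvent2Top} by decomposing $(\Delta_L-z)^{-1/2}$ using a dyadic partition of unity in the spectral variable around $E$, exactly as in the proof of Lemma~\ref{lem:two-to-p-transfer}. Specifically, choose $1=\sum_{k=0}^{N}\chi_k$ with $\chi_0$ supported in $\{|t-E|\leq \eta\}$, $\chi_k$ for $k\geq 1$ a translation/dilation of a fixed bump supported on $\{2^{k-2}\eta\leq |t-E|\leq 2^{k+1}\eta\}$, and $N\lsim |\log\eta|+1$. Since $\Delta_L$ is self-adjoint, we split each piece symmetrically as
\[
\chi_k(\Delta_L)(\Delta_L-z)^{-1/2}=\chi_k^{1/2}(\Delta_L)\cdot\bigl[\chi_k^{1/2}(\Delta_L)(\Delta_L-z)^{-1/2}\bigr].
\]
The second factor is a function of $\Delta_L$ supported spectrally on $\operatorname{supp}\chi_k$, so by the functional calculus its $\ell^2\to\ell^2$ norm is bounded by $\sup_{t\in\operatorname{supp}\chi_k}|t-z|^{-1/2}\lsim (2^k\eta)^{-1/2}$ (and $\eta^{-1/2}$ for $k=0$).

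For the first factor I would split based on whether the dyadic scale $\alpha_k:=2^k\eta$ is smaller or larger than $\eps$. In the range $\alpha_k<\eps$, the support of $\chi_k$ sits inside $[E-\eps,E+\eps]$, so Lemma~\ref{lm:spectralProj} applies (with a uniform constant, since the $\chi_k$ are rescalings of a fixed bump dominated by an indicator on a window of size $\asymp\alpha_k$), giving $\|\chi_k^{1/2}(\Delta_L)\|_{2\to p}\lsim \sqrt{\alpha_k}$. Multiplying yields an $O(1)$ bound per scale, and there are at most $\log_2(\eps/\eta)\lsim |\log\eta|+1$ such scales. For the remaining scales $\alpha_k\geq \eps$, I instead use the elementary nesting $\|f\|_{\ell^p(\Z^d_L)}\leq \|f\|_{\ell^2(\Z^d_L)}$ (valid for $p\geq 2$ on counting measure), which gives $\|\chi_k^{1/2}(\Delta_L)\|_{2\to p}\leq 1$ trivially, and the product is $\lsim \alpha_k^{-1/2}$. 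This geometric tail sums to $O(\eps^{-1/2})=O(1)$, so combining the two ranges gives the claimed bound.

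For the dual estimate \eqref{eq:freeResolventpTo2}, I would simply use that the adjoint of $(\Delta_L-z)^{-1/2}$ is (up to the choice of branch) $(\Delta_L-\bar z)^{-1/2}$, since $\Delta_L$ is self-adjoint, so
\[
\|(\Delta_L-z)^{-1/2}\|_{p'\to 2}=\|(\Delta_L-\bar z)^{-1/2}\|_{2\to p},
\]
and the first bound applies with $\bar z$ in place of $z$ (noting $\Impt\bar z=-\eta$ only enters through $|\log\eta|$, which is unchanged).

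The main obstacle I anticipate is a bookkeeping one rather than a conceptual one: verifying that the constants in Lemma~\ref{lm:spectralProj} really are uniform across the dyadic scales $k$ (this is why I chose the $\chi_k$ for $k\geq 1$ to all be rescalings of a single bump), and handling the outermost dyadic shells cleanly, where the support of $\chi_k$ may cross the excluded set $\Sigma_d$. The latter is resolved simply by throwing away the restriction theorem on those shells and relying on $|t-z|\gtrsim \eps$ to make the resolvent factor small enough that the geometric sum converges without needing any restriction estimate at all. The logarithmic loss is genuine and comes precisely from the number of dyadic scales between $\eta$ and $\eps$.
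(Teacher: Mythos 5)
Your argument is correct and essentially the same as the paper's: a dyadic decomposition in the spectral variable around $E$, with the Tomas–Stein input (Lemma~\ref{lm:spectralProj}) applied to each small shell and a trivial bound on the rest, followed by duality via $(\Delta_L-z)^{-1/2})^*=(\Delta_L-\bar z)^{-1/2}$. The only (minor) difference is the treatment of the spectral region near $\Sigma_d$: the paper introduces a separate projection $\Pi^c$ onto $\{d(E',\Sigma_d)\leq\eps\}$ and bounds it in $\ell^2\to\ell^2$, and also formally applies the restriction estimate to the large dyadic annuli, while you instead use the trivial $\ell^2\subset\ell^p$ nesting on all shells with $2^k\eta\gtrsim\eps$ (which automatically covers the $\Sigma_d$-neighborhoods) and let the $\alpha_k^{-1/2}$ resolvent factor supply geometric decay there. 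Your version is arguably a touch cleaner, since applying Lemma~\ref{lm:spectralProj} to a wide annulus that may straddle $\Sigma_d$ requires decomposing it further into $\sim 2^k\eta/\eps$ subwindows; your trivial bound sidesteps that bookkeeping.

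Two small points to tidy in a final write-up: (i) the threshold for ``small'' shells should be $2^{k+1}\eta\leq\eps$ (so that the full support $\{|E'-E|\leq 2^{k+1}\eta\}$ lies in $[E-\eps,E+\eps]$ and hence $d(E',\Sigma_d)>0$ uniformly there); this is only a constant-factor adjustment. (ii) When quoting Lemma~\ref{lm:spectralProj} for $\chi_k^{1/2}$ rather than a sharp indicator, use $\chi_k^{1/2}(\Delta_L)=\chi_k^{1/2}(\Delta_L)\Pi_{E,2^{k+1}\eta}(\Delta_L)$ and $\|\chi_k^{1/2}\|_\infty\lsim 1$ to reduce to the indicator case, so the constant is indeed uniform in $k$.
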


\begin{proof}
For $k>0$, let $\Pi_k$ be the indicator function of
\begin{align*}
\{E' \mid 2^{k-1}\eta<|E-E'|\leq 2^{k}\eta\} \cap \{E' \mid d(E',\Sigma_d)>\eps\},
\end{align*}
let $\Pi_0$ be the indicator function of
\begin{align*}
\{E' \mid \eta|E-E'|\leq \eta\} \cap \{E' \mid d(E',\Sigma_d)>\eps\},
\end{align*}
and let $\Pi^c$ be the indicator function of
\begin{align*}
\{E' \mid d(E',\Sigma_d)\leq \eps\}.
\end{align*}
We may therefore write
\begin{align*}
(\Delta_L - z)^{-1 /2}=(\Delta_L - E - i\eta)^{-1 /2}
\left( \Pi^c(\Delta)+ \sum_{k=0}\Pi_k(\Delta)\right) ,
\end{align*}
where the sum over $k$ contains at most $\log \eta$ terms.

Since $p\geq 2$, we have that
\begin{align*}
\|(\Delta_L - z)^{-1 /2}\Pi^c(\Delta_L)\|_{p \to 2}&\leq
\|(\Delta_L - z)^{-1 /2}\Pi^c(\Delta_L)\|_{2 \to 2}\\
&=\sup_{d(E',\Sigma_d)\leq \eps}\left( |E-E'|^2+\eta^2 \right)^{-1 /4},
\end{align*}
which is bounded in terms of $\eps$, independently of $\eta$, by the assumption on $E$.

Otherwise, we use Lemma \ref{lm:spectralProj} to see that
\begin{align*}
\|(\Delta_L-z)^{-1 /2}\sum_{k\geq 0}\Pi_k(\Delta_L)\|_{2\to p}&\leq
\sum_{k\geq 0}\|(\Delta-z)^{-1 /2}\Pi_k(\Delta)\|_{2\to 2}\|\Pi_k(\Delta)\|_{2\to p}\\
&\leq C \sum_{k\geq 0}(2^{-k /2}\eta^{-1 /2})\cdot (2^{k /2} \eta^{1 /2} )\\
&\leq C|\log\eta|,
\end{align*}
which establishes \eqref{eq:freeResolvent2Top}.
The estimate \eqref{eq:freeResolventpTo2} now follows by duality since $R^*(z)=R(\overline{z})$.
\end{proof}

\begin{proof}[Proof of Proposition~\ref{prp:apriori}]
We simply apply Lemma~\ref{lem:two-to-p-transfer} and Proposition~\ref{prp:R_0Bounds} successively:
\begin{align*}
\|R(z)\|_{p\to q}&\leq \|R^{1 /2}(z)\|_{p\to 2}\|R^{1 /2}(z)\|_{2\to q}\\
&\prec (\lambda\eta^{-1/2}+1)^{2}\|(\Delta_L-z)^{-1/2}\|_{p \to 2}\|(\Delta_L-z)^{-1/2}\|_{2\to q}\\
&\prec \lambda^2\eta^{-1} + 1 ,
\end{align*}
as desired.  To bound $\|R\|_{p\to 2}$ we use instead
\begin{align*}
\|R(z)\|_{p\to 2}
&\leq \|R^{1/2}(z)\|_{p\to 2} \|R^{1/2}(z)\|_{2\to 2} \\
&\ldom (\lambda \eta^{-1/2}+1)\|(\Delta_L-z)^{-1/2}(z)\|_{p\to 2} \|R^{1/2}(z)\|_{2\to 2} \\
&\ldom \eta^{-1/2}(\lambda \eta^{-1/2}+1),
\end{align*}
where we used that $\|R^{1/2}(z)\|_{2\to 2} \leq \eta^{-1/2}$, which holds deterministically.
\end{proof}

\section{Analysis of the self-consistent equation: proof of the local law}
\label{sec:sce}
In this section, we follow the approach outlined in Section~\ref{sec:sce-intro}
and show $R(z)$ is entrywise close to $M(z)$, defined as the solution to the self-consistent equation
\begin{align}
\label{eq:def-of-M}
M(z) = (\Delta_L- (z+\lambda^2M_{00}(z)))^{-1}.
\end{align}
Note that for any $z\in \mathbb{H}$, $M(z)$ exists and is unique by Proposition \ref{pr:thetaAPriori}, and that by Proposition~\ref{prp:approxSCE}, $\mathbb{E} R$ approximately solves a similar equation.

\begin{theorem}[Local law]
\label{thm:local-law}
Fix $\epsilon,\delta>0$ and $d\geq 2$, and let $z = E+i\eta$ with $d(E,\Sigma_d)\geq \epsilon$.
Then
\begin{itemize}
\item \textbf{In any} $\mathbf{d \geq 2}$,
and $\eta > \lambda^{2+2/9-\delta}$,
\begin{align*}
\sup_{x,y\in \Z^{d}_{L}}|\Expec R_{xy} - M_{xy}| &\lsim
\lambda^{\frac12-\delta} (\lambda^2\eta^{-1}+1)^{9/4} \\
\sup_{x,y\in \bbZ^{d}_{L}}|R_{xy} - \Expec R_{xy}| &\ldom \lambda^{\frac12} (\lambda^2\eta^{-1}+1)^{7/4}.
\end{align*}
\item \textbf{In} $\mathbf{d=3}$ and $\eta > \lambda^{2+6/19-\delta}$,
\begin{align*}
\sup_{x,y\in \Z^{d}_{L}}|\Expec R_{xy} - M_{xy}| &\lsim \lambda^{\frac34-\delta}
(\lambda^2\eta^{-1}+1)^{19/8} \\
\sup_{x,y\in \Z^{d}_{L}}|R_{xy}-\Expec R_{xy}| &\ldom \lambda^{\frac34}
(\lambda^2\eta^{-1}+1)^{15/8}.
\end{align*}
\item \textbf{In} $\mathbf{d\geq 7}$ and $\eta > \lambda^{2+\frac{2d-12}{3d-12} - \delta}$,
\begin{align*}
\sup_{x,y\in \Z^{d}_{L}}|\Expec R_{xy} - M_{xy}| &\lsim \lambda^{1-\delta} (\lambda^2 \eta^{-1}+1)^{1/2}.
\\
\sup_{x,y\in \Z^{d}_{L}}|R_{xy}-\Expec R_{xy}| &\ldom \lambda.
\end{align*}
\end{itemize}
All implicit constants above depend only on $\delta, \epsilon$, and $d$, and $M=M(z)$
is defined by~\eqref{eq:def-of-M}.
\end{theorem}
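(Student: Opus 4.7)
The plan is to execute the three-step program outlined in Section~\ref{sec:sce-intro}: (i) concentration of $R_{xy}$ around $\mathbb{E}R_{xy}$, (ii) smallness of the error $\mathfrak{E}_{\mathrm{loc}}$ in the approximate self-consistent equation \ref{eq:appx-sce}, and (iii) a stability analysis of the scalar self-consistent equation to upgrade $\mathbb{E}R_{xy}\approx \tilde{M}_{xy}$ to $\mathbb{E}R_{xy}\approx M_{xy}$.

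\textbf{Step 1 (concentration).} Following the Gaussian Poincar\'e calculation of Section~\ref{sec:conc-ineq}, one has $\mathrm{Var}(R_{xy})\leq \lambda^2\,\mathbb{E}\|R\|_{1\to 4}^4$, and higher-order variants (Brascamp--Lieb, or iterating Poincar\'e on the chain rule formula \eqref{eq:resolvent-derivative}) upgrade this to
\[
|R_{xy}-\mathbb{E}R_{xy}|\,\ldom\,\lambda\,\bigl(\mathbb{E}\|R\|_{1\to 4}^{4}\bigr)^{1/2},
\]
with a union bound over $x,y\in\Z^d_L$ absorbed by the $\ldom$ notation. The crucial input is bounding $\|R\|_{1\to 4}$ by interpolating Proposition~\ref{prp:apriori}. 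For $d=2$ one interpolates $\|R\|_{1\to 2}$ (equal to $\|R\|_{2\to\infty}$ by duality) against $\|R\|_{1\to p_d}$ with $p_d=6$; for $d=3$ against $p_d=14/3$; and for $d\geq 7$ against $p_d=2d/(d-3)$. The three different interpolation weights give exactly the three different powers $(\lambda^2\eta^{-1}+1)^{7/4}, (\lambda^2\eta^{-1}+1)^{15/8}, (\lambda^2\eta^{-1}+1)^{1/2}$ appearing in the theorem.

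\textbf{Step 2 (bounding $\mathfrak{E}_{\mathrm{loc}}$).} Since $\mathcal{D}$ extracts the diagonal, $(\mathfrak{E}_{\mathrm{loc}})_{xy}=\lambda^2\mathbb{E}[(R_{xx}-\mathbb{E}R_{xx})R_{xy}]$, and Cauchy--Schwarz combined with the concentration from Step~1 yields an entrywise bound
\[
|(\mathfrak{E}_{\mathrm{loc}})_{xy}|\,\lsim\,\lambda^2\,(\mathrm{Var}\,R_{xx})^{1/2}(\mathbb{E}|R_{xy}|^2)^{1/2},
\]
and hence $\ell^\infty\to\ell^\infty$ and $\ell^2\to\ell^2$-type control of $\mathfrak{E}_{\mathrm{loc}}$ using Proposition~\ref{prp:apriori}. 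The key point here is just to feed the Step~1 estimate through the definition.

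\textbf{Step 3 (SCE stability).} Taking the $(0,0)$ entry of \ref{eq:appx-sce} and using $\tilde{M}_{00}=\tilde\theta$ by construction, I would write
\[
\tilde\theta=(\Delta_L-(z+\lambda^2\tilde\theta))^{-1}_{00}+(\tilde M\mathfrak{E}_{\mathrm{loc}})_{00},
\]
i.e.\ $\tilde\theta$ satisfies a perturbed version of the scalar self-consistent equation $\theta=(\Delta_L-(z+\lambda^2\theta))^{-1}_{00}$ whose unique upper-half-plane solution is $M_{00}$ (Proposition~\ref{pr:thetaAPriori}). A quantitative implicit-function / fixed-point argument then yields $|\tilde\theta-M_{00}|\lesssim S(z)\cdot|(\tilde M\mathfrak{E}_{\mathrm{loc}})_{00}|$, where $S(z)=|1-\lambda^2\partial_\theta(\Delta_L-(z+\lambda^2\theta))^{-1}_{00}|^{-1}$ is bounded uniformly for $d(E,\Sigma_d)\geq\eps$ and $\eta$ in the stated range. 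The bound $|\tilde M_{xy}-M_{xy}|$ then follows from the resolvent identity for $\tilde M-M$ (whose Fourier symbols differ only in the scalar shift $\lambda^2(\tilde\theta-M_{00})$), combined with the $\ell^p\to\ell^q$ estimates of Proposition~\ref{prp:apriori}. Finally, $\mathbb{E}R-M=(\tilde M-M)+\tilde M\mathfrak{E}_{\mathrm{loc}}$ gives the bias bound of the theorem.

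\textbf{Main obstacle.} I expect the delicate part to be Step~3, namely tracking the stability constant $S(z)$ uniformly over the range $\eta>\lambda^{2+\kappa}$ and verifying it does not grow too fast, together with threading the dimension-dependent exponents from Step~1 through the SCE stability so that the final bounds match the three cases ($d=2$, $d=3$, $d\geq 7$). The concentration and $\mathfrak{E}_{\mathrm{loc}}$ estimates in Steps~1--2 are more-or-less mechanical given Proposition~\ref{prp:apriori}, but the bootstrap that closes the argument near the critical energy scale $\eta\sim\lambda^2$ (and its worsening near $\Sigma_d$) is the genuine technical core.
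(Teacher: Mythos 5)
Your high-level program (concentration via Gaussian Poincar\'e, feeding it into the SCE error $\mathfrak{E}_{\rm loc}$, then SCE stability) is indeed the skeleton of the paper's proof, but there are two concrete gaps that prevent it from closing as written.

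First, the interpolation claim in Step~1 fails for $d\geq 7$. In that range $p_d=\tfrac{2d}{d-3}$, which is $\leq \tfrac{7}{2}<4$ for all $d\geq 7$; since $4\notin[2,p_d]$ you cannot obtain $\|R\|_{1\to 4}$ by interpolating $\|R\|_{1\to 2}$ against $\|R\|_{1\to p_d}$. The paper's Lemma~\ref{lem:R-flucs-highd} uses a genuinely different mechanism: it splits $\sum_y |R_{xy}|^4$ into the set $S_x$ where $|M_{xy}|$ is above a threshold (using $\|M\|_{1\to 4}\lsim 1$, valid only for $d\geq 6$) and its complement (controlled by the weak fluctuation bound $|R_{xy}-M_{xy}|\ldom F$ raised to a power times $\|R\|_{1\to p_d}^{p_d}$), then iterates the resulting self-improving inequality finitely many times. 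That iteration is where the exponent $\tfrac{2d-12}{3d-12}$ comes from; plain interpolation would give nothing.

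Second, even in $d=2,3$ the exponents you quote do not come out of interpolating Proposition~\ref{prp:apriori} directly. If one plugs $\|R\|_{1\to 2}\ldom\eta^{-1/2}(\lambda^2\eta^{-1}+1)^{1/2}$ from that proposition into the interpolation, one gets an extra factor of $(\lambda^2\eta^{-1}+1)^{\theta/2}$ (with $\theta$ the interpolation weight), which at the stated critical scale $\eta\sim\lambda^{2+\kappa_d}$ is a genuine loss. The paper instead uses the deterministic Ward identity $\|R\|_{1\to 2}\leq\eta^{-1/2}\|R\|_{1\to\infty}^{1/2}$ (Lemma~\ref{lem:five-plus-twelve}), which leaves a free factor of $\|R\|_{1\to\infty}^{1/16}$. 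Controlling that factor is exactly what forces the paper's bootstrap: one must show a priori that $\|\Expec R-M\|_{1\to\infty}\ll 1$ and $\Expec\|R\|_{1\to\infty}\lsim 1$, which the paper establishes by a continuity argument in $\eta$ starting from $\eta=1$ (where the Born series converges) and using $\lambda^{-10}$-Lipschitz continuity of $\|\Expec R-M\|_{1\to\infty}$. This continuity-in-$\eta$ bootstrap is the mechanism that quantitatively closes Step~3 of your outline; the ``implicit-function / fixed-point argument'' you gesture at would in principle give scalar stability of $\tilde\theta$ once $\mathfrak{E}_{\rm loc}$ is small, but the circularity is that smallness of $\mathfrak{E}_{\rm loc}$ itself needs an a priori bound on $\|R\|_{1\to\infty}$, and that bound only becomes available through the $\eta$-continuity argument. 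Note also that Lemma~\ref{lem:fluct-to-er} is stated with the explicit hypothesis $|\Expec R_{00}-M_{00}|\ll 1$, which is the content the bootstrap supplies.

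A minor structural point: in Step~2 you estimate $(\mathfrak{E}_{\rm loc})_{xy}$ entrywise, but the relevant object is $(\tilde M\mathfrak{E}_{\rm loc})_{xy}$ (since \ref{eq:appx-sce} reads $\Expec R-\tilde M=\tilde M\mathfrak{E}_{\rm loc}$), which involves a sum over the intermediate index. The paper handles this with $\|\tilde M\ket{x}\|_2\cdot\|\mcal{D}[R-\Expec R]\|_{2\to 2}\cdot\|R\|_{1\to 2}$, i.e.\ an $\ell^2$-based splitting, not a per-entry bound.
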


In the course of the proof it will be clear that an improvement to the Tomas-Stein exponent $p_d$ from the previous section would lead to an improvement in the exponent of $\lambda$ in $d>3$.
Moreover, for simplicity we treat $d\in\{2,4,5,6\}$ in a unified way, although in principle Proposition~\ref{prp:apriori} allows for an improvement in $d\in\{5,6\}$.

The fluctuations of $R_{xx}$ are at least of order $\lambda$ as can be seen by varying $V_{xx}$.
So the bound on $|R_{xx}-\mathbb{E}R_{xx}|$ that we obtain in $d\geq 7$ is sharp to leading order.

\subsection{The key ingredients}

We use two principles in the proof of Theorem~\ref{thm:local-law} (which are also applied
to understand $\Expec RAR^*$):
\begin{enumerate}
\item $\Expec R$ solves \eqref{eq:def-of-M} up to error terms which are
controlled by the entrywise fluctuations of $R$.
\item The entrywise fluctuations of $R$ are controlled by $\|R\|_{1\to 4}$.
\end{enumerate}
The first principle combined with the \textit{stability} of \eqref{eq:def-of-M}
implies that to bound $\|\mathbb{E}R-M\|_{1\to \infty}$, one just needs to control the fluctuations of $R$.
This idea is by now standard in random matrix theory. The second principle is specific to random
Schrodinger operators and comes from the fact that the randomness only appears on the diagonal.

Below, we will frequently make use of the following special case of the resolvent identity, often known as the \emph{Ward identity}:
\begin{align}\label{eq:Ward}
R(z) R^*(z)=\frac{\Impt R(z)}{\Impt z}\\
\nonumber\Impt R=\frac{R-R^*}{2i},
\end{align}
valid for $R$ the resolvent of any self-adjoint operator and $\Impt z > 0$.  In particular, taking a diagonal entry we have for any $x\in\bbZ^d_L$ and $z=E+i\eta$
\begin{equation}
\label{eq:explicit-Ward}
\sum_{y\in \mathbb{Z}^d_L} |R(E+i\eta)_{xy}|^2 = \eta^{-1} \Impt R(z)_{xx}.
\end{equation}
This allows us to transfer of $\ell^2$-based quantities to entrywise ones.
For instance, taking a supremum over $x$ in~\eqref{eq:explicit-Ward} and taking a square root we have
\begin{align}\label{eq:2toInftyWard}
\|R\|_{1\to 2}\leq (|\Impt z|)^{- 1/2}\|R\|_{1\to \infty}^{1 /2}.
\end{align}
Here and below we sometimes use the $1\to \infty$ operator norm as a convenient shorthand for the entrywise supremum.

First we state a calculation showing $\mathbb{E} R_{xy} - M_{xy}$ is bounded by the fluctuations of $R_{xy}$.
\begin{lemma}[Fluctuations control expectations]
\label{lem:fluct-to-er}
Let $\delta,\epsilon>0$, $d\geq 2$, and $z$ be as in Theorem ~\ref{thm:local-law}.
If $ \lambda\ll 1$ and
\begin{align}\label{eq:ERMInitial}
   |\Expec R_{00}-M_{00}|\ll 1
\end{align}
then
\begin{align*}
\|\Expec R - M\|_{1\to \infty}
\lsim(\lambda^2\eta^{-1})^{1 /2} \left(  \Expec \|R-\Expec R\|_{1\to \infty}^{2}\right) ^{1/ 2}
\left( \Expec\|R\|_{1\to \infty} \right)^{1 /2}.
\end{align*}

\end{lemma}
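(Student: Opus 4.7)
The plan is to combine the approximate self-consistent equation \ref{eq:appx-sce} with the resolvent identity comparing $\tilde M$ to $M$, and then estimate the error $\mathfrak{E}_{\rm loc}$ entrywise using the Ward identity.

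Since $\tilde M$ and $M$ are resolvents of $\Delta_L$ at the shifted spectral parameters $z+\lambda^2\Expec R_{00}$ and $z+\lambda^2 M_{00}$ respectively, the resolvent identity gives
\[
\tilde M - M = \lambda^2\bigl(\Expec R_{00}-M_{00}\bigr)\tilde M M,
\]
and combining with $\Expec R = \tilde M(\Id + \mathfrak{E}_{\rm loc})$ from \ref{eq:appx-sce} yields the basic identity
\[
\Expec R - M = \lambda^2\bigl(\Expec R_{00}-M_{00}\bigr)\tilde M M + \tilde M\mathfrak{E}_{\rm loc}.
\]
Taking the $(0,0)$ entry produces the scalar equation
\[
\bigl(1-\lambda^2(\tilde M M)_{00}\bigr)\bigl(\Expec R_{00}-M_{00}\bigr) = (\tilde M\mathfrak{E}_{\rm loc})_{00}.
\]
I will use that $|1-\lambda^2(\tilde M M)_{00}|\gtrsim 1$, which is the standard stability of the SCE away from $\Sigma_d$; thanks to \eqref{eq:ERMInitial}, it is enough to verify this at $\tilde M = M$, reducing to a lower bound on $|1-\lambda^2(M^2)_{00}|$ of the type guaranteed by Proposition~\ref{pr:thetaAPriori}. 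Granting this, I obtain $|\Expec R_{00}-M_{00}|\lsim |(\tilde M\mathfrak{E}_{\rm loc})_{00}|$.

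The core estimate is a uniform entrywise bound on $\tilde M\mathfrak{E}_{\rm loc}$. Expanding and applying Cauchy--Schwarz in the inner sum,
\[
\bigl|(\tilde M\mathfrak{E}_{\rm loc})_{xy}\bigr|
\leq \lambda^2\,\Expec\Bigl[\|R-\Expec R\|_{1\to\infty}\;\bigl\|\tilde M_{x\cdot}\bigr\|_{\ell^2}\;\bigl\|R_{\cdot y}\bigr\|_{\ell^2}\Bigr].
\]
The Ward identity \eqref{eq:2toInftyWard} gives $\|R_{\cdot y}\|_{\ell^2}\leq \eta^{-1/2}\|R\|_{1\to\infty}^{1/2}$. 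For $\tilde M$, the effective imaginary part is $\eta+\lambda^2\Impt\Expec R_{00}\gtrsim \lambda^2$ under \eqref{eq:ERMInitial}, because $\Impt M_{00}\approx \pi\rho(E)$ is bounded below thanks to $d(E,\Sigma_d)>\eps$; Ward then gives $\|\tilde M_{x\cdot}\|_{\ell^2}\lsim \lambda^{-1}$. A final Cauchy--Schwarz in expectation yields
\[
\bigl|(\tilde M\mathfrak{E}_{\rm loc})_{xy}\bigr|
\lsim \lambda\,\eta^{-1/2}\bigl(\Expec\|R-\Expec R\|_{1\to\infty}^2\bigr)^{1/2}\bigl(\Expec\|R\|_{1\to\infty}\bigr)^{1/2},
\]
which is exactly the claimed right-hand side since $\lambda\,\eta^{-1/2}=(\lambda^2\eta^{-1})^{1/2}$.

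To close, I take the entrywise supremum of the key identity and observe that $\|\tilde M M\|_{1\to\infty}\leq \|\tilde M\|_{1\to 2}\|M\|_{2\to\infty}\lsim \lambda^{-2}$ by the same Ward computation, so the term $\lambda^2|\Expec R_{00}-M_{00}|\,\|\tilde M M\|_{1\to\infty}\lsim |\Expec R_{00}-M_{00}|$ is already controlled by $|(\tilde M\mathfrak{E}_{\rm loc})_{00}|$ via the scalar identity, hence by the target bound. The main obstacle I anticipate is the stability bound $|1-\lambda^2(\tilde M M)_{00}|\gtrsim 1$: it must be uniform in $\lambda$, and must be transferred from the SCE solution $M$ to the perturbed object $\tilde M$ through \eqref{eq:ERMInitial}. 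This is the one place where the hypotheses $d(E,\Sigma_d)>\eps$ and \eqref{eq:ERMInitial} are genuinely used, and is standard in the SCE framework.
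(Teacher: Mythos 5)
Your proof is correct and is essentially the paper's argument, merely repackaged: the paper applies the triangle inequality $|\Expec R_{00}-M_{00}|\leq|\Expec R_{00}-\tilde M_{00}|+|\tilde M_{00}-M_{00}|$ and absorbs a $C\lambda^2$ factor, which is identical to your scalar identity $(1-\lambda^2(\tilde M M)_{00})(\Expec R_{00}-M_{00})=(\tilde M\mathfrak{E}_{\rm loc})_{00}$ once one notes that $(\tilde M M)_{00}$ is the bounded divided difference of $\phi_L(\cdot):=(\Delta_L-\cdot)^{-1}_{00}$ between the two shifted spectral parameters. The one small inaccuracy is the reference for the stability bound: the boundedness of $(\tilde M M)_{00}$ (and hence $|1-\lambda^2(\tilde M M)_{00}|\geq 1-C\lambda^2$) comes from Lemma~\ref{lem:dos-regularity}, not Proposition~\ref{pr:thetaAPriori}, together with \eqref{eq:ERMInitial} to ensure the shifted parameters stay inside $D_\eps$.
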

\begin{proof}
First we use the approximate self-consistent equation to estimate $\mathbb{E}R_{xy}-\tilde{M}_{xy}$. Recall from Proposition~\ref{prp:approxSCE} that
\begin{equation}\tag*{(SCE)}
\label{eq:apxsce}
\begin{split}
&\mathbb{E}R = \tilde{M}(\Id+\mcal{E}_{\rm loc}) \\
&\mathfrak{E}_{\rm loc} := \lambda^2\mathbb{E}\mathcal{D}[R-\Expec R]R.
\end{split}
\end{equation}
where the diagonalizing superoperator $\mcal{D}$ was defined in~\eqref{eq:D-def} and
\begin{align*}
\tilde{M}(z)=(\Delta_L-(z+\lambda^{2}\Expec R_{00}(z)))^{-1}.
\end{align*}

For any $x,y\in \Z^{d}_{L}$, we may write using \ref{eq:apxsce}
\begin{align*}
|\Expec R_{xy} - \tilde{M}_{xy}|
&= \lambda^2 |\braket{x|\tilde{M} \Expec (\mcal{D}[R-\Expec R])R|y}| \\
&\leq \lambda^2\|\tilde{M}\ket{x}\|_{2}(\Expec \|\mcal{D}[R-\Expec R]\|_{2\to 2}^2)^{1/2}
(\Expec \|R\|_{1\to 2}^2)^{1/2} \\
&\lsim \lambda\eta^{-1 / 2} \left(  \Expec \|R-\Expec R\|_{1\to \infty}^{2}\right) ^{1/ 2}
\left( \Expec\|R\|_{1\to \infty} \right)^{1 /2}.
\end{align*}
To pass to the last line we used the estimates $\mathbb{E}\|R\|_{1\to 2}^2\lsim \eta^{-1}\mathbb{E}\|R\|_{1\to\infty}$
and $\|\tilde{M}\|_{1\to 2}\lsim \lambda^{-1}$. The first follows directly by
\eqref{eq:2toInftyWard}. For the second we first apply the Ward identity \eqref{eq:explicit-Ward}
to $\tilde{M}$ to estimate
\begin{align*}
\|\tilde{M}\ket{x}\|_{2}^2
& = \frac{\Impt \tilde{M}_{xx}}{\eta + \lambda^2\Impt\mathbb{E}R_{00}(z)}.
\end{align*}
Now note that, since $d(E,\Sigma_d)>\eps$,  Lemma \ref{lem:dos-regularity} and \eqref{eq:ERMInitial} imply $\Impt(\mathbb{E}R_{00})\gtrsim 1$ for $\lambda$
small enough.  Moreover, $\text{Re}(\mathbb{E}R_{00})\lsim 1$, so that
taking $\lambda$ sufficiently small we can apply Lemma \ref{lem:dos-regularity} to $\tilde{M}$ and obtain that $\Impt{\tilde{M}_{xx}}\lsim 1$.

Now we estimate $|\tilde{M}_{00} - M_{00}|$. The function $(\Delta_{L}-z)^{-1}_{00}$ is  $C^1$ in the $z$ variable away from $\Sigma_d$
(see Lemma~\ref{lem:dos-regularity}), so
\begin{align*}
    |\tilde{M}_{00} - M_{00}|
&= |(\Delta_{L} - (z+\lambda^2\Expec R_{00}))^{-1}_{00}
- (\Delta_L - (z+\lambda^2 M_{00}))^{-1}_{00}| \\
&\lsim \lambda^2 |\Expec R_{00} - M_{00}|.
\end{align*}
Putting the above two bounds together gives us the claim for diagonal entries:
\begin{align*}
|\Expec  R_{00} - M_{00}|
&\leq |\Expec R_{00} - \tilde{M}_{00}| + |\tilde{M}_{00}-M_{00}| \\
&\lsim \lambda\eta^{-1 / 2} \left(  \Expec \|R-\Expec R\|_{1\to \infty}^{2}\right) ^{1/ 2}
\left( \Expec\|R\|_{1\to \infty} \right)^{1 /2}
 + \lambda^2 |\Expec R_{00} - M_{00}|,
\end{align*}
which can be rearranged to give
\[
|\Expec R_{00} - M_{00}| \lsim
\lambda\eta^{-1 / 2} \left(  \Expec \|R-\Expec R\|_{1\to \infty}^{2}\right) ^{1/ 2}
\left( \Expec\|R\|_{1\to \infty} \right)^{1 /2},
\]
provided that $\lambda$ is sufficiently small, depending on $\epsilon$, $\delta$ and $d$.

For the off-diagonal entries we apply the resolvent identity, the Ward identity,
and the above to estimate
\begin{align*}
\|\tilde{M}-M\|_{1\to\infty}
& =\lambda^{2}\|\tilde{M}(\Expec R_{00}-M_{00})M\|_{1\to\infty}\\
& \lsim |\mathbb{E}R_{00}-M_{00}|\\
& \lsim \lambda\eta^{-1 / 2} \left(  \Expec \|R-\Expec R\|_{1\to \infty}^{2}\right) ^{1/ 2}
\left( \Expec\|R\|_{1\to \infty} \right)^{1 /2},
\end{align*}
where we used that $\|M\|_{1\to 2}, \|\tilde{M}\|_{1\to 2}\lsim \lambda^{-1}.$
Writing
\begin{align*}
\|\Expec R-M\|_{1\to \infty}\leq C\lambda\eta^{-1 / 2}
\left(  \Expec \|R-\Expec R\|_{1\to \infty}^{2}\right) ^{1/ 2}
\left( \Expec\|R\|_{1\to \infty} \right)^{1 /2}
+\|M-\tilde{M}\|_{1\to \infty},
\end{align*}
and applying the previous inequality thus completes the proof.
\end{proof}

Next, we show that $\|R\|_{1\to 4}$ controls the fluctuations of the entries of $R$.
We use the following variant of the Poincar{\'e} inequality for Gaussian random
variables.
\begin{lemma}[Proposition 5.4.2 of~\cite{bakry2014analysis}]
\label{lem:lsi-poincare}
Let $f:\Real^N\to\bbC$ be a function of independent standard Gaussian random variables.  Then
for $k\geq 1$
\begin{equation}
\label{eq:lsi-poincare}
\Big(\Expec |f|^{2k}\Big)^{1/k} \leq \Expec |f|^2
+ C(2k-2)\Big(\Expec |\nabla f|^{2k}\Big)^{1/k}.
\end{equation}
and in particular
\begin{equation}
\label{eq:lsi-moments}
(\Expec |f-\mathbb{E}f|^{2k})^{1/2k} \leq C \sqrt{k} (\Expec |\nabla f|^{2k})^{1/2k}.
\end{equation}
\end{lemma}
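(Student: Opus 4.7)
The lemma is cited from Bakry--Gentil--Ledoux, so the plan is to sketch one self-contained path rather than grind through details. My plan is to prove the sharper inequality~\eqref{eq:lsi-poincare} directly and obtain~\eqref{eq:lsi-moments} as an immediate corollary by centering. The reduction is the easy half: I would substitute $f \mapsto f - \mathbb{E}f$ into~\eqref{eq:lsi-poincare}, noting that the gradient is unchanged while the constant $\mathbb{E}f$ drops out. The residual $\mathbb{E}|f-\mathbb{E}f|^2$ is then bounded by $\mathbb{E}|\nabla f|^2$ via the standard Gaussian Poincaré inequality, which in turn is absorbed by Hölder into $(\mathbb{E}|\nabla f|^{2k})^{1/k}$, giving $(\mathbb{E}|f-\mathbb{E}f|^{2k})^{1/k} \lsim k\,(\mathbb{E}|\nabla f|^{2k})^{1/k}$; taking a square root produces the $C\sqrt{k}$ constant.

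For~\eqref{eq:lsi-poincare} itself, the quickest self-contained route is to apply the scalar Gaussian Poincaré inequality to $g=|f|^k$, using Kato's inequality $\bigl|\nabla|f|^k\bigr|\leq k|f|^{k-1}|\nabla f|$ to handle complex-valued $f$. This yields $\mathbb{E}|f|^{2k} \leq (\mathbb{E}|f|^k)^2 + k^2\,\mathbb{E}\bigl[|f|^{2k-2}|\nabla f|^2\bigr]$. Two applications of Hölder then bound the right-hand side: $(\mathbb{E}|f|^k)^2 \leq \mathbb{E}|f|^2\cdot(\mathbb{E}|f|^{2k})^{(k-1)/k}$ (writing $|f|^k=|f|\cdot|f|^{k-1}$ and using Cauchy--Schwarz followed by Hölder on $\mathbb{E}|f|^{2k-2}$), and $\mathbb{E}[|f|^{2k-2}|\nabla f|^2]\leq (\mathbb{E}|f|^{2k})^{(k-1)/k}(\mathbb{E}|\nabla f|^{2k})^{1/k}$. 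Dividing through by the common factor $(\mathbb{E}|f|^{2k})^{(k-1)/k}$ yields the desired inequality with a constant of size $O(k^2)$.

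The main obstacle is cosmetic rather than substantive: this shortcut produces the constant $Ck^2$ rather than the sharp $C(2k-2)$ stated in~\eqref{eq:lsi-poincare}. Extracting the sharp linear-in-$k$ constant requires a more careful interpolation along the Ornstein--Uhlenbeck semigroup, as carried out in~\cite[\S 5.4]{bakry2014analysis}, where one replaces the naïve bound on $\mathbb{E}[|f|^{2k-2}|\nabla f|^2]$ by an integral identity that recovers the sharp dependence. However, the moment bounds are only used in this paper to convert variance-type estimates into stochastic-domination statements via Markov's inequality, where any polynomial dependence of the constant on $k$ suffices, so the simpler Poincaré-plus-Hölder approach above would already be adequate for all downstream applications.
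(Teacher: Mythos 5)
The paper itself gives no proof of this lemma --- it is cited directly to Bakry--Gentil--Ledoux, Proposition 5.4.2 --- so there is nothing to compare on the paper's side. Your proposal supplies a genuine, self-contained argument, and it is correct. The reduction from~\eqref{eq:lsi-poincare} to~\eqref{eq:lsi-moments} (replace $f$ by $f-\mathbb{E}f$, absorb the Poincar\'e variance term by Jensen into $(\mathbb{E}|\nabla f|^{2k})^{1/k}$, take square roots) is exactly the standard step. Your elementary route to~\eqref{eq:lsi-poincare} --- Gaussian Poincar\'e applied to $|f|^k$, the pointwise bound $|\nabla|f|^k|\leq k|f|^{k-1}|\nabla f|$, two H\"older steps, and division by $(\mathbb{E}|f|^{2k})^{(k-1)/k}$ --- checks out, and you are right that it only yields a constant of size $O(k^2)$ rather than the sharp $C(2k-2)$, which the semigroup interpolation in the cited reference recovers. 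Your closing observation is also correct and worth stating: the only downstream use of this lemma is to convert moment bounds into $\prec$-statements (immediately after the lemma, and in Lemma~\ref{lem:R-lsi}), where $k$ is chosen once in terms of the fixed tolerances $(\gamma,N)$ of the stochastic-domination definition, so any fixed-$k$ polynomial constant is absorbed into $C(\gamma,N)$. Two small remarks for a polished write-up: (i) the division by $(\mathbb{E}|f|^{2k})^{(k-1)/k}$ requires that this quantity be finite; a standard truncation handles the general case; (ii) when stating the result with your constant, the exponent in~\eqref{eq:lsi-moments} becomes $Ck$ rather than $C\sqrt{k}$, which, as you note, is still adequate for the applications in the paper.
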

To simplify matters, we observe that~\eqref{eq:lsi-moments} implies the
stochastic domination relationship
\[
f \ldom |\nabla f|.
\]

Applied to $R$ and $RAR^*$ (which we use later in Section~\ref{sec:diffusion-profile}),
we obtain the following concentration inequalities.
\begin{lemma}
\label{lem:R-lsi}
For any $x,y\in\bbZ^d_L$ and $z=E+i\eta$ with $\eta>0$ we have
\begin{equation}
\label{eq:xRy-conc}
|R_{xy}-\mathbb{E}R_{xy}|\prec \lambda \|R\|_{1\to 4}^2
\end{equation}
and for any $A\in\mcal{B}(\ell^2(\bbZ^d_L))$, we have
\begin{equation}
\label{eq:xRARy-conc}
|(RAR^{\ast})_{xy} - \mathbb{E}(RAR^{\ast})_{xy}|\prec \lambda  \|A\|_{2\to 2} \|R\|_{1\to 4}
\|R\|_{1\to 2}
\|R\|_{2\to 4}
\end{equation}
\end{lemma}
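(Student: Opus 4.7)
The plan is to apply the Gaussian LSI moment bound from Lemma~\ref{lem:lsi-poincare} to $f_1=R_{xy}$ and $f_2=(RAR^*)_{xy}$, viewed as smooth functions of the independent standard Gaussians $\{g_w\}_{w\in\bbZ^d_L}$. Since the derivatives in $g_w$ are computed by \eqref{eq:resolvent-derivative}, both claims reduce to bounding $|\nabla f_i|^2 = \sum_w |\partial_{g_w} f_i|^2$ in terms of the operator norms appearing on the right-hand sides. The main book-keeping step is choosing a Cauchy--Schwarz split in $w$ that produces exactly the exponents appearing in the statement.

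For \eqref{eq:xRy-conc}, \eqref{eq:resolvent-derivative} gives $\partial_{g_w} R_{xy} = -\lambda R_{xw}R_{wy}$, and one Cauchy--Schwarz in $w$ yields
\[
|\nabla R_{xy}|^2 = \lambda^2 \sum_w |R_{xw}|^2|R_{wy}|^2 \leq \lambda^2 \|R^*\delta_x\|_{\ell^4}^2\|R\delta_y\|_{\ell^4}^2 \leq \lambda^2 \|R\|_{1\to 4}^4,
\]
where I use $\|R^*(z)\|_{1\to 4} = \|R(\bar z)\|_{1\to 4}$ and that the a priori bound from Proposition~\ref{prp:apriori} is symmetric under $z\mapsto\bar z$. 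Thus $|\nabla R_{xy}|\lsim \lambda\|R\|_{1\to 4}^2$, as needed.

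For \eqref{eq:xRARy-conc}, the product rule together with $\partial_{g_w}R^* = -\lambda R^*\ket{w}\bra{w}R^*$ gives
\[
\partial_{g_w}(RAR^*)_{xy} = -\lambda\bigl(R_{xw}(RAR^*)_{wy} + (RAR^*)_{xw}(R^*)_{wy}\bigr).
\]
Applying Cauchy--Schwarz in $w$ to the first term gives
\[
\sum_w |R_{xw}|^2\,|(RAR^*)_{wy}|^2 \leq \|R^*\delta_x\|_{\ell^4}^2\,\|RAR^*\delta_y\|_{\ell^4}^2,
\]
and then factoring $\|RAR^*\delta_y\|_{\ell^4} \leq \|R\|_{2\to 4}\|A\|_{2\to 2}\|R^*\delta_y\|_{\ell^2} \leq \|R\|_{2\to 4}\|A\|_{2\to 2}\|R\|_{1\to 2}$ yields the pointwise gradient bound
\[
|\nabla(RAR^*)_{xy}| \lsim \lambda\|A\|_{2\to 2}\|R\|_{1\to 4}\|R\|_{1\to 2}\|R\|_{2\to 4};
\]
the symmetric second term is handled identically.

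To transfer these pointwise $|\nabla f_i|$ bounds into the stated $\prec$-inequalities, I would apply \eqref{eq:lsi-moments} at arbitrary moment $k$ and use Chebyshev. The right-hand sides in \eqref{eq:xRy-conc}--\eqref{eq:xRARy-conc} are themselves random, so the cleanest route is first to replace each random $\|R\|_{p\to q}$ factor by the deterministic dominator furnished by Proposition~\ref{prp:apriori}, producing an intermediate $\prec$-inequality with a deterministic right-hand side; on the (very) high-probability event where all the relevant $\|R\|_{p\to q}$ are comparable to their dominators up to $\lambda^{-\gamma}$, this can then be rewritten in the pathwise form claimed. I expect this last conversion from a moment bound to a $\prec$-inequality with a random right-hand side to be the most delicate point, since one needs to track that the truncation to the high-probability event is compatible with taking arbitrarily large $k$ and arbitrarily small $\gamma$ in the definition of $\prec$.
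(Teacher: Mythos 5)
Your proof is correct and follows the paper's argument exactly: compute $\partial_{g_w}$ via \eqref{eq:resolvent-derivative}, Cauchy--Schwarz in $w$, factor through the operator norms, then invoke Lemma~\ref{lem:lsi-poincare}. One simplification: since $H_L$ is a real symmetric matrix, $R(z)$ is complex symmetric ($R^T=R$), so $R^*=\overline{R}$ and $\|R^*\|_{p\to q}=\|R\|_{p\to q}$ deterministically, which is cleaner than appealing to the $z\mapsto\bar z$ symmetry of the a priori bound. Your caution about converting the LSI moment bound into a $\prec$-statement with a \emph{random} right-hand side is well placed — the paper disposes of this by the brief remark after Lemma~\ref{lem:lsi-poincare} that \eqref{eq:lsi-moments} implies $f\ldom|\nabla f|$ — and the resolution you sketch (first replace the random $\|R\|_{p\to q}$ factors by their deterministic dominators from Proposition~\ref{prp:apriori}, using the crude deterministic bound $\|R\|\leq\eta^{-1}$ to control the tail event, then take $k$ large in Markov) is exactly how this is made rigorous; it is also how the lemma is actually used downstream, e.g.\ in~\eqref{eq:RBoundsFluctuations}, where only moments appear.
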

\begin{proof}
First we prove~\eqref{eq:xRy-conc}.  Write the potential as
\[
V = \sum_{x\in\bbZ^d_L} g_x \ket{x}\bra{x}.
\]
Viewing $R_{xy}$ as a function of the Gaussians $\{g_x\}_{x\in\bbZ^d_L}$, the following formula follows from the resolvent identity:
\[
\partial_{g_w} R_{xy} = -R_{xw}R_{wy}.
\]
Therefore, writing $\nabla_V$ for the gradient with respect to all of the entries of $V$,
we have
\begin{align*}
|\nabla_{V} R_{xy}|^2
& = \lambda^2 \sum_{\alpha\in\bbZ^d_L} |R_{x\alpha}R_{\alpha y}|^2\\
& \leq \lambda^2 ||R||_{1\to 4}^4,
\end{align*}
where the final inequality follows from Cauchy-Schwarz.
Now the claim follows by~\eqref{eq:lsi-poincare} and the
definition of our stochastic domination notation.

The proof of ~\eqref{eq:xRARy-conc} is similar. We compute
\[
\partial_{g_w} (RAR^*)_{xy} = -R_{xw}(RAR^*)_{wy} - (RAR^*)_{xw}R_{wy},
\]
and therefore
\begin{align*}
|\nabla_{V} (RAR^*)_{xy}|^2
& \leq 2\lambda^2 \sum_{\alpha\in \mathbb{Z}^d_L}|R_{x\alpha}(RAR^*)_{\alpha y}|^2 + 2\lambda^2 \sum_{\alpha\in \mathbb{Z}^d_L}|(RAR^{\ast})_{x\alpha}R^{\ast}_{\alpha y}|^2 \\
& \leq 2\lambda^2 (||R||_{1\to 4}^2 ||RAR^{\ast}||_{1\to 4}^2 +  ||R^{\ast}||_{1\to 4}^2 ||RAR^{\ast}||_{1\to 4}^2)\\
& \leq 4\lambda^2||R||_{1\to 4}^2 ||R||_{1\to 2}^2 ||A||_{2\to 2}^2 ||R||_{2\to 4}^2.
\end{align*}
\end{proof}

\subsection{Proof of Theorem~\ref{thm:local-law} for $\mathbf{d\geq 2}$}
For concreteness we write out the argument for $d=3$, the argument for all other $d\geq 2$ differing only in arithmetic.
In this dimension Proposition \ref{prp:apriori} gives the bound
\[
\|R\|_{1\to \frac{14}{3}} \ldom \lambda^2\eta^{-1} + 1.
\]
In dimension $d=2,4,5,6$ we instead use the exponent\footnote{For $d\in\{4,5,6\}$ this exponent is not sharp.} $6$ in place of $\frac{14}{3}$.

\begin{lemma}
\label{lem:five-plus-twelve}
Let $d=3$, and $\epsilon>0$. Then for $z= E+i\eta$ with $d(E,\Sigma_d)\geq \epsilon$ and $\eta\geq \lambda^{10}$, we have
\[
\|R\|_{1\to 4} \ldom \lambda^{-1/8} (\lambda^2 \eta^{-1}+1)^{15/16}
(\|R\|_{1\to \infty})^{1/16}.
\]
\end{lemma}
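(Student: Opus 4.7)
The target exponents $\tfrac{1}{8}, \tfrac{15}{16}, \tfrac{1}{16}$ all have denominator $16$, and $\tfrac{1}{16} = \tfrac{1}{8} \cdot \tfrac{1}{2}$, suggesting a single log-convexity interpolation in the intermediate index $q$, followed by one application of Ward to produce the $\|R\|_{1\to\infty}^{1/16}$ factor. My plan is to interpolate $\|R\|_{1\to 4}$ between $\|R\|_{1\to 2}$ and $\|R\|_{1\to 14/3}$, since $p_d = 14/3$ is the exponent in Proposition~\ref{prp:apriori} that is accessible in $d=3$, and both $1$ and $14/3$ satisfy the hypotheses ($p=1 \le p_d' = 14/11$, $q = 14/3 = p_d$).

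\textbf{Step 1: Interpolation.}  The log-convexity of $\ell^q$ norms gives, for any $f\in\ell^1$,
\begin{equation*}
\|Rf\|_4 \le \|Rf\|_2^{\theta}\,\|Rf\|_{14/3}^{1-\theta}, \qquad \frac{1}{4}=\frac{\theta}{2}+\frac{(1-\theta)\cdot 3}{14},
\end{equation*}
which solves as $\theta = \tfrac{1}{8}$, $1-\theta=\tfrac{7}{8}$. Taking the supremum over $\|f\|_1=1$ and bounding each factor separately yields
\begin{equation*}
\|R\|_{1\to 4} \le \|R\|_{1\to 2}^{1/8}\,\|R\|_{1\to 14/3}^{7/8}.
\end{equation*}

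\textbf{Step 2: Ward on the $\ell^2$ factor.}  From the Ward identity~\eqref{eq:explicit-Ward} (specifically $\sum_y |R_{xy}|^2 = \eta^{-1} \Impt R_{xx}$), one obtains $\|R\|_{1\to 2}^2 \le \eta^{-1}\|R\|_{1\to\infty}$, hence
\begin{equation*}
\|R\|_{1\to 2}^{1/8} \le \eta^{-1/16}\,\|R\|_{1\to\infty}^{1/16}.
\end{equation*}

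\textbf{Step 3: Apply Proposition~\ref{prp:apriori}.}  In $d=3$, Proposition~\ref{prp:apriori} with $q=14/3$ gives $\|R\|_{1\to 14/3} \ldom \lambda^2\eta^{-1}+1$, so $\|R\|_{1\to 14/3}^{7/8} \ldom (\lambda^2\eta^{-1}+1)^{14/16}$.  Combining Steps 1--3,
\begin{equation*}
\|R\|_{1\to 4} \ldom \eta^{-1/16}\,(\lambda^2\eta^{-1}+1)^{14/16}\,\|R\|_{1\to\infty}^{1/16}.
\end{equation*}

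\textbf{Step 4: Cosmetic reshuffling.}  To convert $\eta^{-1/16}$ into $\lambda^{-1/8}$ times one extra factor of $(\lambda^2\eta^{-1}+1)^{1/16}$, write
\begin{equation*}
\eta^{-1/16} \;=\; \lambda^{-1/8}(\lambda^2\eta^{-1})^{1/16} \;\le\; \lambda^{-1/8}(\lambda^2\eta^{-1}+1)^{1/16},
\end{equation*}
which, substituted into the previous display, yields precisely the claimed bound with exponent $14/16+1/16 = 15/16$ on $(\lambda^2\eta^{-1}+1)$.  There is no substantive obstacle in this argument: the only nontrivial input is the $\ell^1\to\ell^{p_d}$ bound from Proposition~\ref{prp:apriori}, and everything else is Hölder/Ward bookkeeping.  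The analogous statement in $d\neq 3$ should follow identically, replacing $14/3$ with $p_d$ and recomputing the interpolation exponent.
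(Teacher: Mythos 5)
Your proof is correct and matches the paper's proof exactly: the same log-convexity interpolation between $\ell^1\to\ell^2$ and $\ell^1\to\ell^{14/3}$ with exponent $\theta=1/8$, the same Ward-identity step $\|R\|_{1\to 2}\le\eta^{-1/2}\|R\|_{1\to\infty}^{1/2}$, and the same application of Proposition~\ref{prp:apriori}. Your Step 4 merely spells out what the paper compresses into ``rearranging factors of $\lambda$ and $\eta$.''
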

\begin{proof}
Using interpolation and the Ward identity $\|R\|_{1\to 2}\leq \eta^{-1/2} \|R\|_{1\to\infty}^{1/2}$ we have,
\begin{align*}
\|R\|_{1\to 4}
&\leq \|R\|_{1\to 2}^{1/8} \|R\|_{1\to \frac{14}{3}}^{7/8} \\
&\leq \eta^{-1/16} \|R\|_{1\to \infty}^{1/16} \|R\|_{1\to \frac{14}{3}}^{7/8}.
\end{align*}
The conclusion follows from applying the a priori bound for $\|R\|_{1\to \frac{14}{3}}$ and rearranging
factors of $\lambda$ and $\eta$.
\end{proof}

\begin{proof}[Proof of Theorem~\ref{thm:local-law} in the case $d=3$]
Fix $E$ with $d(E,\Sigma_d)\geq \epsilon$ and let  $a=3 /4$ and $b=15 / 8$ for convenience.
First we show for each $\delta>0$ there exists $C_\delta>0$ such that
\begin{align}
\label{eq:expecBoundab}
\|\Expec R-M\|_{1\to \infty}\leq
C_{\delta}\lambda^{a-\delta}(\lambda^{2} \eta^{-1}+1)^{b+1 /2}
\end{align}
for all $\lambda \ll_{\delta}1$ and $\eta\in [\lambda^{2+6/19-\delta},1]$.
To do this, we prove there exists $c_{\delta}, C_{\delta}>0$ such that if
\begin{align}
\label{eq:bootstrapAssumption}
\|\Expec R - M\|_{1\to \infty}\leq c_{\delta}
\end{align}
for some $\lambda \ll_{\delta}1$ and $\eta\in [\lambda^{2+6/19-\delta},1]$ then in fact
\begin{align}
\label{eq:bootstrapConclusion}
\|\Expec R - M\|_{1\to \infty}\leq
C_{\delta}\lambda^{a-\delta}(\lambda^2\eta^{-1}+1)^{b+1 /2}.
\end{align}
The conclusion then follows by a standard continuity argument.
Indeed, summing the Born series for $R(z)$ shows that \eqref{eq:expecBoundab}, holds for $\eta= 1$ and some $C_{\delta}$ independent of $\lambda$.
Moreover, $\|\mathbb{E}R-M\|_{1\to\infty}$ is  $\lambda^{-10}$-Lipschitz for $\eta\in [\lambda^{2+6/19-\delta},1]$,
and so the above claim implies that \eqref{eq:bootstrapConclusion} continues to hold as long as
$$C_{\delta}\lambda^{a-\delta}(\lambda^2\eta^{-1}+1)^{b+1 /2}\leq c_{\delta}/2,$$
which is satisfied for all $\lambda \ll_{\delta}1$ and $\eta\in [\lambda^{2+6/19-\delta},1]$.

To prove that \eqref{eq:bootstrapAssumption} implies \eqref{eq:bootstrapConclusion}, we note that
if $c_{\delta}$ is sufficiently small we can apply Lemma
\ref{lem:fluct-to-er} to estimate
\begin{align}
\label{eq:FluctuationsBoundExpec}
\|\Expec R-M \|_{1\to \infty}\leq
C(\lambda^2\eta^{-1}+1)^{1/ 2} \left(  \Expec \|R-\Expec R\|_{1\to \infty}^{2}\right) ^{1/ 2}
\left( \Expec\|R\|_{1\to \infty} \right)^{1 /2}.
\end{align}

Now, we use Lemmas \ref{lem:R-lsi} and \ref{lem:five-plus-twelve} to see that
\begin{align}
\label{eq:RBoundsFluctuations}
(\Expec \|R-\Expec R\|_{1\to \infty}^{2})^{1 /2}\leq C_\delta
\lambda^{a-\delta /2}(\lambda^{2}\eta^{-1}+1)^{b}(\Expec\|R\|_{1\to \infty})^{1 /8},
\end{align}
which when inserted into \eqref{eq:FluctuationsBoundExpec} yields
\begin{align}
\label{eq:simpleExpecBound}
\|\Expec R-M \|_{1\to \infty}\leq C_\delta \lambda^{a-\delta /2}(\lambda^{2}\eta^{-1}+1)^{b+1 /2}(\Expec \|R\|_{1\to \infty})^{5/8}.
\end{align}
Furthermore, by \eqref{eq:bootstrapAssumption} and \eqref{eq:RBoundsFluctuations}, we have
\begin{align*}
\Expec \|R\|_{1\to \infty}
&\leq \Expec \|R-\Expec R\|_{1\to \infty} +\|\Expec R-M\|_{1\to \infty}+\|M\|_{1\to\infty}\\
&\leq C_{\delta}\lambda^{\delta}(\Expec\|R\|_{1\to \infty})^{1 /8} + c_{\delta} + C_{\epsilon}
\end{align*}
where we have used Proposition \ref{pr:thetaAPriori} to control $\|M\|_{1\to \infty}$, and that
$\lambda^{a-\delta /2}(\lambda^{2}\eta^{-1}+1)^{b}\leq \lambda^{\delta}$ for $\eta$ and $\lambda$ in the
assumed ranges.
Taking $\lambda$ sufficiently small relative to $C_\delta$ ensures that
\begin{align*}
\Expec \|R\|_{1\to \infty} \leq  C_\eps
\end{align*}
so that by \eqref{eq:simpleExpecBound}
\begin{align*}
\|\Expec R - M\|_{1\to \infty} \leq C_\delta \lambda^{\delta / 2} \lambda^{a-\delta }(\lambda^{2}\eta^{-1}+1)^{b+1 /2},
\end{align*}
which proves \eqref{eq:bootstrapConclusion}, and thus \eqref{eq:expecBoundab}

Finally, to prove that
\begin{align*}
\|R-\Expec R\|_{1\to \infty}\ldom \lambda^{a}(\lambda^2\eta^{-1}+1)^{b}
\end{align*}
we simply note that by Lemmas
\ref{lem:R-lsi} and \ref{lem:five-plus-twelve}
\begin{align*}
\|R-\Expec R\|_{1\to \infty}&\ldom \lambda^{a}(\lambda^{2}\eta^{-1}+1)^{b}\|R\|_{1\to \infty}^{1 /4}\\
&\leq \lambda^{a}(\lambda^{2}\eta^{-1}+1)^{b}
\left(\|R-\Expec R\|_{1\to \infty}^{1 /4}+\|\Expec R-M\|_{1\to \infty}^{1 /4}+C_\eps \right).
\end{align*}
The conclusion now follows by inserting \eqref{eq:expecBoundab} and using the definition of $\ldom$ to rearrange.
\end{proof}

\subsection{Proof of Theorem~\ref{thm:local-law} in $\mathbf{d\geq 7}$}

For $d\geq 7$ the proof is almost identical but with one
modification.  Instead of using Lemma \ref{lem:five-plus-twelve} to bound $\|R\|_{1\to 4}$ we use
the following lemma, which gives a better bound.
\begin{lemma}
\label{lem:R-flucs-highd}
Let $d\geq 7$, $\epsilon,\delta>0$ and $z=E+i\eta$ be as in Theorem \ref{thm:local-law} in the $d\geq 7$ case.
If  $\|\mathbb{E}R-M\|_{1\to\infty}\ll 1$, then
\begin{equation}
\label{eq:bootstrapped-1-to-4}
    \|R\|_{1\to 4} \ldom 1
\end{equation}
and thus
\[
|R_{xy} - \Expec R_{xy}| \ldom \lambda.
\]
\end{lemma}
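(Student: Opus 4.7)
The fluctuation bound $|R_{xy}-\mathbb{E}R_{xy}|\ldom\lambda$ is immediate from Lemma~\ref{lem:R-lsi} once $\|R\|_{1\to 4}\ldom 1$ is known, so the whole task reduces to the $\ell^4$ estimate.

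I would first establish the deterministic bound $\|M\|_{1\to 4}\ldom 1$.  Since $M$ is the Fourier multiplier $(\Delta_L-(z+\lambda^2 M_{00}))^{-1}$ and Proposition~\ref{pr:thetaAPriori} together with $d(E,\Sigma_d)\geq\epsilon$ yields $\mathrm{Im}\,M_{00}\gtrsim 1$, the effective imaginary part $\mathrm{Im}(z+\lambda^2 M_{00})$ is bounded below by a constant multiple of $\lambda^2$. Because $d\geq 7$ gives $p_d=2d/(d-3)\leq 7/2<4$, Proposition~\ref{prp:R_0Bounds} applied to $M^{1/2}=(\Delta_L-(z+\lambda^2 M_{00}))^{-1/2}$ yields $\|M^{1/2}\|_{2\to 4}\lsim|\log\lambda|$ and dually $\|M^{1/2}\|_{1\to 2}\lsim|\log\lambda|$. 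Hence $\|M\|_{1\to 4}\leq\|M^{1/2}\|_{1\to 2}\|M^{1/2}\|_{2\to 4}\lsim|\log\lambda|^{2}\ldom 1$.

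Next I would transfer this to $R$ via a short bootstrap.  Start from the a priori bound $\|R\|_{1\to 4}\ldom A_{0}:=\lambda^{2}\eta^{-1}+1$ of Proposition~\ref{prp:apriori}, which in the stated range $\eta\geq\lambda^{2+\kappa_{d}-\delta}$ gives $A_{0}\ldom\lambda^{-\kappa_{d}+\delta}$. Substituting into Lemma~\ref{lem:R-lsi} yields $\|R-\mathbb{E}R\|_{1\to\infty}\ldom\lambda A_{0}^{2}\ldom\lambda^{1-2\kappa_{d}+O(\delta)}$, which is $\ll 1$ in our range (for instance because $\kappa_{d}=(2d-12)/(3d-12)<1/2$ in the concrete regimes of interest). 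Combined with the hypothesis $\|\mathbb{E}R-M\|_{1\to\infty}\ll 1$ and the deterministic bound $\|M\|_{1\to\infty}\lsim 1$, this gives $\|R-M\|_{1\to\infty}\ll 1$ and $\|R\|_{1\to\infty}\ldom 1$. To then upgrade to the $\ell^4$ bound, I would use the resolvent identity in the form
\[
R-M=-\lambda MVM-\lambda^{2}M_{00}M^{2}+M(\lambda V+\lambda^{2}M_{00})(R-M)(\lambda V+\lambda^{2}M_{00})M\cdot M^{-1}+\text{Born remainder},
\]
noting that the self-consistent choice of $M_{00}$ forces the deterministic piece to cancel the renormalization counterterm. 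The leading fluctuating piece $\lambda MVMe_{y}$ is a Gaussian of order one, for which a direct fourth-moment computation gives $\mathbb{E}\|MVMe_{y}\|_{4}^{4}\lsim\|M\|_{1\to 4}^{4}\sum_{\alpha}|M_{\alpha y}|^{2}\|Me_{\alpha}\|_{2}^{2}\lsim\lambda^{-4}$, so that $\lambda\|MVMe_{y}\|_{4}\ldom 1$. Remaining second-order chaos terms and the $R$-dependent remainder are bounded by combining the $\ell^{p}\to\ell^{q}$ bounds on $M$, the a priori bounds on $R$, and the $\lambda^{k}$-prefactors which dominate the surviving $\lambda^{2}\eta^{-1}$ factors in our range of $\eta$.

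The main obstacle is the remainder: unlike the purely deterministic $M$-terms it involves $R$, so the argument depends on the a priori bound $\|R\|_{p\to q}\ldom\lambda^{2}\eta^{-1}+1$ being strong enough for the $\lambda^{4}$-prefactor of the quadratic remainder to absorb any losses. A clean way to handle this is to iterate the resolvent identity one more step, reducing the remainder to an expression of the form $M$-operators sandwiching $R$ with total prefactor $\lambda^{4}$, so that the a priori $\|R\|_{1\to 4}\ldom\lambda^{-\kappa_{d}+\delta}$ plus the mapping properties of $M$ close the argument. The final estimate $\|R-M\|_{1\to 4}\ldom 1$ combines with Step~1 to give $\|R\|_{1\to 4}\ldom 1$, completing the proof.
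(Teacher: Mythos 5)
Your proposal diverges substantially from the paper's proof and contains a genuine gap that prevents it from covering the full range $d \geq 7$.

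The most concrete issue is your one-step bound on the fluctuations. You write that $\|R-\mathbb{E}R\|_{1\to\infty}\ldom\lambda A_0^2\ldom\lambda^{1-2\kappa_d+O(\delta)}$ is $\ll 1$ ``because $\kappa_d < 1/2$.'' But $\kappa_d = (2d-12)/(3d-12)$ is increasing in $d$ and equals $1/2$ at $d=12$, approaching $2/3$ as $d\to\infty$. So for $d\geq 12$ your exponent $1-2\kappa_d+O(\delta)$ is nonpositive and the one-shot a priori estimate does not give $\|R-\mathbb{E}R\|_{1\to\infty}\ll 1$. This is precisely why the paper's proof is structured as an \emph{iteration}: starting from $F_0=\lambda^{-10}$, it repeatedly feeds the current bound $\|R-\mathbb{E}R\|_{1\to\infty}\ldom F$ into a self-improving inequality $F\mapsto F'=\lambda + \lambda^{1-\beta/2-\gamma(3/2+\beta/4)}F$ (with $\beta = p_d-2$, $\eta=\lambda^{2+\gamma}$), which contracts whenever $\gamma\leq\kappa_d-\delta$ regardless of whether $\kappa_d<1/2$.

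The second issue is that your proposed Born-series argument for $\|R-M\|_{1\to 4}$ is not worked out and faces an obstacle that does not obviously go away. The second-order remainder is $\lambda^{2}\,M(V+\lambda^{2}M_{00})M(V+\lambda^{2}M_{00})R$, and any factorization through $\ell^{2}$ picks up $\|M\|_{2\to 2}\sim\lambda^{-2}$, which exactly cancels the $\lambda^{2}$ prefactor without gaining anything. ``Iterating once more'' so the prefactor is $\lambda^{4}$ similarly picks up two $\lambda^{-2}$'s from the additional $M$ factors. Moreover you would still need to control the intermediate second- and third-order Gaussian chaos terms, which you gesture at but do not estimate. This is exactly the combinatorial/diagrammatic machinery the paper is trying to avoid.

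The paper's actual mechanism is quite different from a Born expansion: having a bound $|R_{xy}-M_{xy}|\ldom(\lambda\eta^{-1/2}+1)F$, it splits $\sum_y|R_{xy}|^4$ into the set $S_x$ where $|M_{xy}|$ dominates (controlled by the deterministic $\|M\|_{1\to 4}\lsim 1$, available because $p_d<4$ when $d\geq 7$) and its complement, where $|R_{xy}|$ is small and Hölder interpolation against the a priori $\|R\|_{1\to p_d}\ldom\lambda^2\eta^{-1}+1$ gives a gain. No resolvent expansion is needed. Your intuition about the roles of $\|M\|_{1\to 4}\ldom 1$ and $\|R\|_{1\to p_d}$ is right, and the Gaussian fourth-moment computation for $\lambda MVM$ is correct as stated, but the route you take through a renormalized Born series does not close, and the one-step claim breaks for $d\geq 12$.
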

\begin{proof}
Suppose $\|\mathbb{E}R-R\|_{1\to\infty} \ldom F$ for some $F\geq \lambda $ (a priori we can use for example $F = \lambda^{-10}$).
Combined with Lemma~\ref{lem:fluct-to-er} this implies $|R_{xy}-M_{xy}| \ldom (\lambda \eta^{-1/2}+1)F$.
Thus for any $x\in \mathbb{Z}^d_L$ if we let
$$S_x:=\{y\in \mathbb{Z}^d_L |\ |M_{xy}|\geq (\lambda \eta^{-1/2}+1)F\},$$
we can estimate for any $\beta\in[p_d-2,2]$
\begin{align*}
\sum_{y\in \bbZ^d_L} |R_{xy}|^4
&\leq \sum_{y\in S_{x}} |R_{xy}|^4 + \sum_{y\notin S_{x}} |R_{xy}|^4 \\
&\ldom \|M\|_{1\to 4}^4 + ((\lambda \eta^{-1/2} +1)F)^{2-\beta} \sum_{y\in \bbZ^d_L} |R_{xy}|^{2+\beta} \\
&\ldom \|M\|_{1\to 4}^4 + ((\lambda \eta^{-1/2}+1) F)^{2-\beta} \|R\|_{1\to 2+\beta}^{2+\beta}\\
& \ldom 1 + (\lambda^2 \eta^{-1}+1)^{3+\beta/2} F^{2-\beta}
\end{align*}
To pass to the last line we used Proposition \ref{prp:R_0Bounds} in $d\geq 6$ to bound $\|M\|_{1\to 4}\lsim 1$,
and Proposition \ref{prp:apriori} to estimate $\|R\|_{1\to 2+\beta}^{2+\beta}\ldom (\lambda^2\eta^{-1}+1)^{2+\beta}$.
Taking the maximum over $x\in\bbZ^d_L$ shows
\begin{equation}
\label{eq:temp-1-to-4}
\|R\|_{1\to 4}^4\ldom 1 + (\lambda^2 \eta^{-1}+1)^{3+\beta/2} F^{2-\beta}.
\end{equation}
By Lemma~\ref{lem:R-lsi}, this implies
\begin{align*}
|R_{xy} - \Expec R_{xy}| \ldom \lambda + \lambda (\lambda^2\eta^{-1}+1)^{\frac32 + \beta/4} F^{1-\beta/2}.
\end{align*}
Now setting $\eta = \lambda^{2+\gamma}$ and using $F\gtrsim \lambda$, we obtain $\|\Expec R -R\|_{1\to\infty} \ldom F'$ where
\[
F' = \lambda + (\lambda^{1-\frac\beta2-\gamma(\frac32+\frac\beta4)}) F.
\]
As long as $\gamma \leq  \frac{1-\frac\beta2}{\frac{3}{2}+\frac{\beta}{4}} - \delta$, this can be iterated finitely
many times to conclude that $\|\Expec R - R\|_{1\to\infty} \ldom \lambda$. The fraction is positive precisely when $\beta<2$ which happens for $d\geq 7$. Furthermore, recalling that $p_d = 2d/(d-3)$ and $\beta = p_d-2 = \frac{6}{d-3}$ gives the condition
$$\gamma \leq \frac{2d-12}{3d-12} - \delta.
$$
which was the lower bound we assumed on $\eta$.
Combining this with \eqref{eq:temp-1-to-4} then proves the bound $\|R\|_{1\to 4}\ldom 1$.
\end{proof}

Now Theorem~\ref{thm:local-law} follows for $d\geq 7$ by repeating the proof given above
for $d=3$ but substituting Lemma \ref{eq:bootstrapped-1-to-4}
for Lemma \ref{lem:five-plus-twelve} which gives the bound
$$(\mathbb{E}\|R-\mathbb{E}R\|_{1\to\infty}^2)^{1/2}\leq \lambda^{1-\delta},$$
in place of \eqref{eq:RBoundsFluctuations}. We note that the conditions of Lemma \ref{eq:bootstrapped-1-to-4}
are satisfied since when Lemma
\ref{lem:five-plus-twelve} is used in the proof above, we are under the assumption that $\|\mathbb{E}R-M\|_{1\to\infty}\ll 1$.

\section{Diffusion Profile}
\label{sec:diffusion-profile}
In this section we obtain high probability bounds on quantities of the form
\begin{equation}
\label{eq:second-moment-sum}
\sum_{y\in \Z^d} a(y) |R_{xy}|^2
\end{equation}
for some $x\in\bbZ^d_L$ and $a \in \ell^\infty(\bbZ^d_L)$.  This expression can alternatively be written as
\[
\Expec \braket{x|RA R^*|x}
\]
for a diagonal operator $A$ with entries $a(x)$ on the diagonal.

Now  we compute $\Expec RAR^*$ for general operators $A$ using the resolvent identity and Gaussian integration by parts.
\begin{lemma}
\label{lem:T-eq-full}
The following identity holds for any $A\in \mcal{B}(\ell^2(\bbZ^d_L))$:
\begin{align*}
\Expec RAR^* &= \tilde{M}A\tilde{M}^* + \lambda^2 \tilde{M} \Expec \mcal{D}[RAR^*] \tilde{M}^*
+ \mathfrak{E}_T \\
\mathfrak{E}_T &= \tilde{M} A (\Expec R - \tilde{M})^*
- \lambda^2 \tilde{M} \Expec (\mcal{D}[R-\Expec R]) RAR^* + \lambda^2
\tilde{M} \Expec \mcal{D}[RAR^*] (R-\tilde{M})^*,
\end{align*}
\end{lemma}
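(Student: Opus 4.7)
The plan is standard in the self-consistent equation framework: renormalize the resolvent identity around $\tilde{M}$, multiply by $AR^*$ on the right, take an expectation, and apply Gaussian integration by parts to the term involving $V$. Starting from~\eqref{eq:resolvent-id} with $\theta = \tilde{\theta} := \Expec R_{00}$, I have
\[
RAR^* = \tilde{M} A R^* - \lambda \tilde{M} V R A R^* - \lambda^2 \tilde{\theta}\,\tilde{M} R A R^*.
\]
Taking expectations reduces the problem to computing $\Expec[VRAR^*]$.

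For this I use $V = \sum_w g_w \ket{w}\bra{w}$ and the formulas $\partial_{g_w} R = -\lambda R\ket{w}\bra{w}R$ and $\partial_{g_w} R^* = -\lambda R^*\ket{w}\bra{w}R^*$. The product rule applied to $R A R^*$ and Gaussian integration by parts in each $g_w$ produce, after recognizing $\sum_w \ket{w}\bra{w} X \ket{w}\bra{w} = \mcal{D}[X]$ as an action on the entries of $X$, the identity
\[
\Expec[VRAR^*] = -\lambda\, \Expec[\mcal{D}[R]\, RAR^*] - \lambda\, \Expec[\mcal{D}[RAR^*]\, R^*].
\]
Plugging this back, I arrive at the intermediate expression
\[
\Expec RAR^* = \tilde{M} A \Expec R^* + \lambda^2 \tilde{M}\, \Expec[\mcal{D}[R]\, RAR^*] + \lambda^2 \tilde{M}\, \Expec[\mcal{D}[RAR^*]\, R^*] - \lambda^2 \tilde{\theta}\, \tilde{M}\, \Expec RAR^*.
\]

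The key cancellation relies on translation invariance of the law of $V$ on the torus $\bbZ^d_L$: every diagonal entry $\Expec R_{xx}$ equals $\tilde{\theta}$, so $\mcal{D}[\Expec R] = \tilde{\theta}\,\Id$. Decomposing $\mcal{D}[R] = \tilde{\theta}\,\Id + \mcal{D}[R - \Expec R]$, the scalar piece $\lambda^2\tilde{\theta}\tilde{M}\Expec RAR^*$ exactly cancels the final term above (the same cancellation used to derive~\ref{eq:appx-sce} from Gaussian IBP), leaving a fluctuation remainder $\lambda^2\tilde{M}\Expec[\mcal{D}[R-\Expec R]RAR^*]$ that sits inside $\mathfrak{E}_T$.

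It then remains to peel off the deterministic main terms. I write $\Expec R^* = \tilde{M}^* + (\Expec R - \tilde{M})^*$ in the first summand to split $\tilde{M}A\tilde{M}^*$ from $\tilde{M}A(\Expec R-\tilde{M})^*$, and $R^* = \tilde{M}^* + (R-\tilde{M})^*$ in the third to pull out $\lambda^2\tilde{M}\,\Expec[\mcal{D}[RAR^*]]\tilde{M}^*$ from the remaining fluctuation $\lambda^2\tilde{M}\,\Expec[\mcal{D}[RAR^*](R-\tilde{M})^*]$. Assembling the three error pieces yields $\mathfrak{E}_T$ in the claimed form. There is no genuine obstacle here; the only nontrivial input is the translation-invariance identity $\mcal{D}[\Expec R] = \tilde{\theta}\,\Id$ that triggers the self-energy cancellation, and everything else is algebraic manipulation parallel to the derivation of the SCE in Proposition~\ref{prp:approxSCE}.
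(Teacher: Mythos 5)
Your proposal is correct and follows exactly the route the paper takes: renormalize the resolvent identity around $\tilde{M}$, right-multiply by $AR^*$, take expectations, apply Gaussian integration by parts to $\Expec[VRAR^*]$ via the product rule to produce the two $\mcal{D}[\cdot]$ terms, cancel the self-energy $\lambda^2\tilde\theta\tilde{M}\Expec RAR^*$ against the $\tilde\theta\Id$ piece of $\mcal{D}[R]$ using translation invariance, and then peel off $\tilde{M}^*$ from $\Expec R^*$ and $R^*$. You are right to single out $\mcal{D}[\Expec R]=\tilde\theta\,\Id$ as the one nontrivial input; the paper uses it tacitly (it is what makes the cancellation in Proposition~\ref{prp:approxSCE} and here go through).

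One bookkeeping point worth flagging: carrying out your steps precisely yields
\[
\mathfrak{E}_T = \tilde{M} A (\Expec R - \tilde{M})^*
+ \lambda^2 \tilde{M}\,\Expec\bigl(\mcal{D}[R-\Expec R]\,RAR^*\bigr) + \lambda^2
\tilde{M}\,\Expec\bigl(\mcal{D}[RAR^*]\,(R-\tilde{M})^*\bigr),
\]
with a \emph{plus} sign on the middle term, whereas the lemma as printed carries a minus. Your sign is the one consistent with~\eqref{eq:resolvent-id} (which has $+\lambda^2\theta$, not the $-\lambda^2\tilde\theta$ that appears in the paper's proof text) and with the $+\lambda^2\Expec\mcal{D}[R-\Expec R]R$ appearing in Proposition~\ref{prp:approxSCE}. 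The discrepancy is a typographical slip in the paper, and it is immaterial downstream because only $\|\mathfrak{E}_T\|_{1\to\infty}$ is used, but since your write-up claims to ``arrive at the claimed form'' you should either record the $+$ sign you actually obtain or explicitly note the sign correction to the stated lemma.
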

\begin{proof}
To start, we use the resolvent identity
\[
R = \tilde{M} - \tilde{M} (\lambda V - \lambda^2\tilde{\theta}) R
\]
to expand
\begin{align*}
\Expec RAR^*
& = \Expec \tilde{M}AR - \Expec \tilde{M}(\lambda V- \lambda^2\tilde{\theta})RAR^*\\
& = \tilde{M}A\Expec R^*- \lambda \tilde{M}\Expec VRAR^* + \lambda^2 \tilde{\theta} M \Expec RAR^* .
\end{align*}
Then we apply Gaussian integration by parts to the expectation in the second term
\begin{align*}
\mathbb{E}VRAR^{\ast}
& = -\lambda \mathbb{E}\mathcal{D}[R]RAR^{\ast} - \lambda\mathbb{E}\mathcal{D}[RAR^{\ast}]R^{\ast}
\end{align*}
to get
$$\mathbb{E}RAR^{\ast} = \tilde{M}A\mathbb{E}R^{\ast} + \lambda^2 \tilde{M}\mathbb{E}\mathcal{D}(RAR^{\ast})R^{\ast} - \lambda^2\tilde{M}\mathbb{E}\mathcal{D}[R-\Expec R]RAR^{\ast}.$$
The result now follows from replacing the outer most $R^{\ast}$ in the first two terms with $\tilde{M}^*$.
\end{proof}

Taking $A = \sum_y a(y)\ket{y}\bra{y}$
to be a diagonal matrix, and defining $g(x) := \Expec (RAR^*)_{xx}$,
we see that the above gives the following equation for quantities of the form \eqref{eq:second-moment-sum}
\begin{equation}
\label{eq:elliptic-pde}
g = \tilde{K} \ast a + \lambda^2 \tilde{K} \ast g + \mathfrak{e}
\end{equation}
where $\mathfrak{e}(y) := (\mathfrak{E}_T)_{yy}$ and the convolution kernel $\tilde{K}$ is given by
\begin{equation*}
\tilde{K}(x) := |\Braket{0| \tilde{M}|x}|^2.
\end{equation*}
Letting $\tilde{\mcal{K}} u := \tilde{K}\ast u$ be the convolution operator by kernel $\tilde{K}$ and
iterating~\eqref{eq:elliptic-pde} yields the series expansion
\begin{equation}
\label{eq:b-eq}
\begin{aligned}
g &= (\Id + \lambda^2 \tilde{\mcal{K}} + \lambda^4 \tilde{\mcal{K}}^2 + \cdots ) (\mcal{K} f + \mathfrak{e}) \\
&= (\Id - \lambda^2\tilde{\mcal{K}})^{-1} (\mcal{K}f + \mathfrak{e}).
\end{aligned}
\end{equation}
The first equation above (in terms of the series expansion) gives $g$ in terms of the Green's function
of a random walk with step distribution $\tilde{K}$.  The second equation has the equivalent interpretation that $b$ is obtained as a solution to an elliptic equation with right hand side $\mcal{K}f$ -- this second interpretation is the one taken in Section~\ref{sec:RAR} below.
The main result of this section is an analysis of the error $(\Id-\lambda^2\mcal{K})^{-1}\mathfrak{e}$.
\begin{theorem}
\label{thm:diffusive-profile}
Let $d\geq 2$ and $\epsilon,\delta>0$. Then for any $A = \sum a(y)\ket{y}\bra{y}$ with $\|a\|_{\ell^\infty}\leq 1$
and $z=E+i\eta$ with $d(E,\Sigma_d)>\eps$ and $\eta\in(\lambda^{2+\kappa_d-\delta} ,\lambda^{2+\delta})$,
we have
\begin{equation}
\label{eq:RAR-flucs}
\sup_{x\in\Z^d_L}|(RAR^*)_{xx} - (\Id -\lambda^2 \tilde{\mcal{K}})^{-1} (\tilde{\mcal{K}}a)(x)| \ldom \eta^{-1} \Phi_d(z),
\end{equation}
where
\[
\kappa_d = \begin{cases}
\frac{2}{13}, &d\in\{2,4,5,6\} \\
\frac{2}{9}, & d=3 \\
\frac{2d-12}{3d-12}, & 7\leq d\leq 12 \\
\frac12, &d>12
\end{cases}
\]
and
\[
\Phi_d(z) =
\begin{cases}
\lambda^{1/2}(\lambda^2\eta^{-1})^{13/4}, &d\in\{2,4,5,6\} \\
\lambda^{3/4}(\lambda^2\eta^{-1})^{27/8}, &d= 3\\
\lambda(\lambda^2\eta^{-1})^2, &d\geq 7.
\end{cases}
\]
\end{theorem}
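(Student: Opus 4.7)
The plan is to combine the T-equation of Lemma~\ref{lem:T-eq-full} with an estimate of the inverse operator $(\Id-\lambda^2\tilde{\mcal{K}})^{-1}$, reducing the problem to a fluctuation bound on $(RAR^*)_{xx}$ and a pointwise bound on the terms of $\mathfrak{E}_T$. First I would take the $(y,y)$ entry of the T-equation; since $A$ is diagonal this becomes
\[
g(y) = \tilde{\mcal{K}}a(y) + \lambda^2\tilde{\mcal{K}}g(y) + \mathfrak{e}(y),
\]
with $g(y) := \Expec(RAR^*)_{yy}$ and $\mathfrak{e}(y) := (\mathfrak{E}_T)_{yy}$, so that
\[
\Expec(RAR^*)_{yy}=(\Id-\lambda^2\tilde{\mcal{K}})^{-1}(\tilde{\mcal{K}}a)(y) + (\Id-\lambda^2\tilde{\mcal{K}})^{-1}\mathfrak{e}(y).
\]
It thus suffices to control (i) the fluctuation $(RAR^*)_{xx}-\Expec(RAR^*)_{xx}$ and (ii) $(\Id-\lambda^2\tilde{\mcal{K}})^{-1}\mathfrak{e}$ in $\ell^\infty$.

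For (i) I would apply Lemma~\ref{lem:R-lsi} with $\|A\|_{2\to 2}\leq \|a\|_\infty\leq 1$ to obtain
\[
|(RAR^*)_{xx}-\Expec(RAR^*)_{xx}|\ldom \lambda\|R\|_{1\to 4}\|R\|_{1\to 2}\|R\|_{2\to 4}.
\]
Each resolvent norm is bounded by combining Proposition~\ref{prp:apriori} with Theorem~\ref{thm:local-law}: the local law yields $\|R\|_{1\to\infty}\ldom 1$, which sharpens the interpolation of Lemma~\ref{lem:five-plus-twelve} in $d=3$ (and its analogues for $d\in\{2,4,5,6\}$), while in $d\geq 7$ Lemma~\ref{lem:R-flucs-highd} gives $\|R\|_{1\to 4}\ldom 1$ directly. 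The Ward identity bounds $\|R\|_{1\to 2}$ by $\eta^{-1/2}$, and $\|R\|_{2\to 4}=\|R^*\|_{4/3\to 2}$ is controlled by Proposition~\ref{prp:apriori} for $d\geq 6$ (where $4/3\leq p_d'$) or by interpolation between $\|R\|_{2\to 2}\leq \eta^{-1}$ and $\|R\|_{2\to p_d}$ in lower dimensions. The product matches $\eta^{-1}\Phi_d(z)$ in each case. For (ii), I first invoke the Ward identity to compute
\[
\sum_{x\in\bbZ^d_L}\lambda^2\tilde K(x)=\frac{\lambda^2\Impt\tilde M_{00}}{\eta+\lambda^2\Impt\tilde\theta}=1-\frac{\eta}{\eta+\lambda^2\Impt\tilde\theta}.
\]
Since $\Impt\tilde\theta\gtrsim 1$ in the bulk by Lemma~\ref{lem:dos-regularity} and $\tilde K\geq 0$, the operator $\lambda^2\tilde{\mcal{K}}$ is a positive convolution of $\ell^\infty\to\ell^\infty$ norm strictly less than $1$ with gap $\sim \eta/\lambda^2$, so $\|(\Id-\lambda^2\tilde{\mcal{K}})^{-1}\|_{\ell^\infty\to\ell^\infty}\lsim \lambda^2/\eta$. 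It then remains to bound $\|\mathfrak{e}\|_{\ell^\infty}$ pointwise: for the first term of $\mathfrak{E}_T$ I would use Cauchy--Schwarz together with $\|\tilde M\|_{1\to 2}\lsim \lambda^{-1}$ against the local law bound on $\Expec R-\tilde M$, and for the latter two I would combine Lemma~\ref{lem:R-lsi} with the a priori $\ell^p\to\ell^q$ bounds of Proposition~\ref{prp:apriori}. The resulting bound on $\|\mathfrak{e}\|_{\ell^\infty}$, amplified by $\lambda^2/\eta$, should match the fluctuation bound from (i) and close the estimate.

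The hard part will be tracking the exponents dimension by dimension so the final bound matches the stated $\Phi_d(z)$ exactly. The case $d=3$ is most delicate because the Tomas--Stein exponent $p_3=14/3$ yields a weaker $\|R\|_{1\to 4}$ bound, producing the exponent $27/8$ in $\Phi_3$. The low-dimensional cases $d\in\{2,4,5,6\}$ are similarly tight: Lemma~\ref{lem:R-flucs-highd} is unavailable and the Tomas--Stein bound is used at a non-sharp exponent, forcing the same $13/4$ factor across these dimensions. A secondary bookkeeping issue is verifying that the several bounds involving $\tilde M$ (obtained from the density-of-states calculation near the bulk) are uniform for $d(E,\Sigma_d)\geq \eps$, but this follows from Lemma~\ref{lem:dos-regularity} and the a priori estimates on $\tilde\theta$.
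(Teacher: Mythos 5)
Your proposal follows the same route as the paper: take diagonal entries of the T-equation from Lemma~\ref{lem:T-eq-full}, invert $(\Id-\lambda^2\tilde{\mcal{K}})$, bound the fluctuation $(RAR^*)_{xx}-\Expec(RAR^*)_{xx}$ via Lemma~\ref{lem:R-lsi}, and bound $\|\mathfrak{e}\|_\infty$ via the local law and the a priori resolvent estimates, with the operator norm of $(\Id-\lambda^2\tilde{\mcal{K}})^{-1}$ controlled by computing $\sum_x\lambda^2\tilde K(x)$. This is essentially identical in structure and inputs to the paper's proof (which packages the error analysis as Lemma~\ref{lem:T-eq-error-estimate}).

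One small inaccuracy to flag: your identity
\[
\frac{\lambda^2\Impt\tilde M_{00}}{\eta+\lambda^2\Impt\tilde\theta}=1-\frac{\eta}{\eta+\lambda^2\Impt\tilde\theta}
\]
requires $\Impt\tilde M_{00}=\Impt\tilde\theta$, which is not exact: $\tilde\theta=\Expec R_{00}$ while $\tilde M_{00}$ is a diagonal entry of $(\Delta-z-\lambda^2\tilde\theta)^{-1}$. They agree only up to the local-law error, so the correct statement is $\lambda^2\sum_x\tilde K(x)=1-\lambda^{-2}\eta(\Impt M_{00})^{-1}+O(\lambda^{c-2}\eta)$ (the paper's \eqref{eq:walk-mass}), which requires invoking Theorem~\ref{thm:local-law} and the smoothness of Lemma~\ref{lem:dos-regularity}. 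The resulting bound $\|(\Id-\lambda^2\tilde{\mcal{K}})^{-1}\|_{\ell^\infty\to\ell^\infty}\lsim\lambda^2\eta^{-1}$ survives this correction. Also, the fluctuation bound you obtain in step (i) is actually of strictly lower order than $\eta^{-1}\Phi_d(z)$ (the paper verifies this dimension by dimension); it does not ``match'' $\Phi_d$, but this only strengthens the conclusion.
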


Note that for some  dimensions, $\kappa_d$ specifies a slightly worse range of $\eta$ compared to Theorem~\ref{thm:local-law}. Basically, it is chosen so that when $\eta\sim \lambda^{2+\kappa_d}$, $\Phi_d(z)\sim 1$.

The main error estimate is the following lemma.
\begin{lemma}
\label{lem:T-eq-error-estimate}
Let $\Phi_d$ and $z$ be as in Theorem \ref{thm:diffusive-profile} and $\mathfrak{E}_{\rm T}$ be as in Lemma \ref{lem:T-eq-full}, and $\tilde{K}$ as above.  Then for $\|A\|_{2\to 2} \leq 1$ we have
\begin{equation}
\label{eq:Emaxentry}
\|\mathfrak{E}_T\|_{1\to\infty}\leq C\lambda^{-2}\Phi_d(z).
\end{equation}
Moreover, the kernel $\tilde{K}$ satisfies
\begin{equation}
\label{eq:walk-mass}
\lambda^2 \sum_{x\in \Z^d} \tilde{K}(x)=1 - \lambda^{-2}\eta (\Impt M_{00})^{-1} + O(\lambda^{c-2}\eta).
\end{equation}
\end{lemma}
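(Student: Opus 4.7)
The two assertions are essentially independent, and I would dispatch the kernel-mass identity first. Since $\tilde M$ is the resolvent of the self-adjoint operator $\Delta_L$ at the shifted spectral parameter $z+\lambda^2\tilde\theta$, the Ward identity $\tilde M\tilde M^*=\Impt\tilde M/(\eta+\lambda^2\Impt\tilde\theta)$ gives
\[
\lambda^2\sum_{x}\tilde K(x)=\lambda^2(\tilde M\tilde M^*)_{00}=\frac{\lambda^2\Impt\tilde M_{00}}{\eta+\lambda^2\Impt\tilde\theta}.
\]
By Theorem~\ref{thm:local-law} one has $|\tilde\theta-M_{00}|\ll\lambda^{c}$ for an appropriate small $c>0$; by continuity of $\theta\mapsto(\Delta_L-(z+\lambda^2\theta))^{-1}_{00}$, the same smallness is inherited by $|\tilde M_{00}-M_{00}|$. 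Combined with $\Impt M_{00}\gtrsim 1$ from Lemma~\ref{lem:dos-regularity}, Taylor-expanding $x/(\eta+x)$ about $x=\lambda^2\Impt M_{00}$ yields $1-\lambda^{-2}\eta(\Impt M_{00})^{-1}+O(\lambda^{c-2}\eta)$ as claimed.

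\textbf{Bounding $\mathfrak{E}_T$.} Writing $\mathfrak{E}_T=T_1+T_2+T_3$ as in Lemma~\ref{lem:T-eq-full}, each summand has the schematic form $\tilde M\cdot B\cdot S^*$, and I would treat them uniformly via the entrywise estimate
\[
|(\tilde M B S^*)_{xy}|\leq \|\tilde M\|_{1\to 2}\,\|B\|_{2\to 2}\,\|S\|_{1\to 2},
\]
together with $\|\tilde M\|_{1\to 2}\lsim \lambda^{-1}$ (again Ward, using $\Impt\tilde\theta\gtrsim 1$). For $T_1$ one takes $B=A$ and $S=\Expec R-\tilde M$, and controls $\|\Expec R-\tilde M\|_{1\to 2}$ via the entrywise local law together with the Ward bound $\|R\|_{1\to 2}^2\leq \eta^{-1}\|R\|_{1\to\infty}$. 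For the dominant terms $T_2$ and $T_3$, whose middle (resp.\ right) factor is a centered fluctuation ($R-\Expec R$ in $T_2$, and $R-\tilde M=(R-\Expec R)+(\Expec R-\tilde M)$ in $T_3$), I would decouple via Cauchy--Schwarz in expectation. For instance,
\[
|(T_2)_{xy}|\leq \lambda^2\|\tilde M\|_{1\to 2}\,\bigl(\sup_w\Expec|R_{ww}-\Expec R_{ww}|^2\bigr)^{1/2}\bigl(\Expec\|(RAR^*)e_y\|_2^2\bigr)^{1/2},
\]
where $\|(RAR^*)e_y\|_2\leq \eta^{-1}\|R\|_{1\to 2}\leq \eta^{-3/2}\|R\|_{1\to\infty}^{1/2}$ by another application of Ward. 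The treatment of $T_3$ is the same after splitting $R-\tilde M$: the local-law piece $\Expec R-\tilde M$ is deterministic and small, while the centered piece couples $\mcal D[RAR^*-\Expec RAR^*]$ with $(R-\Expec R)^*$ and is estimated using the $RAR^*$-concentration bound of Lemma~\ref{lem:R-lsi}.

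\textbf{Closing the arithmetic.} Under the local law, $\|R\|_{1\to\infty}\lsim 1$ with overwhelming probability, so the $T_2$ bound reduces to $\lambda\eta^{-3/2}\cdot(\Expec\|R-\Expec R\|_{1\to\infty}^2)^{1/2}$. Inserting the dimension-dependent local-law fluctuation estimate of the form $\lambda^a(\lambda^2\eta^{-1}+1)^b$ with $(a,b)=(\tfrac12,\tfrac74)$ for $d\in\{2,4,5,6\}$, $(\tfrac34,\tfrac{15}{8})$ for $d=3$, and $(1,0)$ for $d\geq 7$, and rewriting $\eta^{-3/2}=\lambda^{-3}(\lambda^2\eta^{-1})^{3/2}$, the exponents collapse to exactly $\lambda^{-2}\Phi_d(z)$ in each regime; the same outcome emerges for $T_3$ once one plugs in Lemma~\ref{lem:R-lsi} for $RAR^*$ together with the $\|R\|_{1\to 4}$ estimate from Lemma~\ref{lem:five-plus-twelve} (respectively Lemma~\ref{lem:R-flucs-highd} in high dimension). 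The term $T_1$ is strictly subleading. I expect the main obstacle to be not any single inequality but the bookkeeping: for each $d$ one must track how the Ward identity, the Gaussian Poincar\'e concentration of Lemma~\ref{lem:R-lsi}, and the local law combine so that the resulting exponents balance at precisely the threshold $\eta\gtrsim \lambda^{2+\kappa_d-\delta}$, which encodes the sharpness (respectively looseness) of the Tomas--Stein exponent $p_d$ that drives the $\|R\|_{1\to 4}$ bound.
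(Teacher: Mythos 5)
Your overall strategy is the same as the paper's: Ward identity plus the local law for \eqref{eq:walk-mass}, and the triangle-inequality split $\mathfrak{E}_T = \mathrm{I}+\mathrm{II}+\mathrm{III}$ followed by $\ell^1\!\to\!\ell^2 \to \ell^\infty$ norm bookkeeping for \eqref{eq:Emaxentry}. In fact you are slightly more explicit than the printed argument at the one genuinely delicate point: for the third term you observe that $R-\tilde M$ must be split into $(\Expec R - \tilde M)$ (handled by the deterministic bound \eqref{eq:ERM-one-two}) and the centered piece $R - \Expec R$, and that the centered contribution is a covariance $\Expec\bigl[\mcal D[RAR^*-\Expec RAR^*](R-\Expec R)^*\bigr]$ to be decoupled via Cauchy--Schwarz and the $RAR^*$-concentration of Lemma~\ref{lem:R-lsi}. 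The paper simply cites \eqref{eq:ERM-one-two}, which literally only covers the first half of that split, so your expansion of the argument is welcome.

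Two places need tightening. First, in \eqref{eq:walk-mass} the claim $|\tilde\theta - M_{00}| \ll \lambda^c$ is not strong enough: the displayed identity requires the error in the denominator to be $O(\lambda^{c-2}\eta)$, not merely $O(\lambda^c)$, since for $\eta < \lambda^{2+\delta}$ and $c$ small one has $\lambda^c \gg \lambda^{-2}\eta$ and the error would swamp the main term $\lambda^{-2}\eta(\Impt M_{00})^{-1}$. What you actually need (and what ``some arithmetic'' with Theorem~\ref{thm:local-law} yields in the range $\eta > \lambda^{2+\kappa_d-\delta}$) is $|\Expec R_{00}-M_{00}| \lsim \lambda^c\cdot\lambda^{-2}\eta$, which then propagates to $|\Impt\tilde M_{00}-\Impt M_{00}|\lsim\lambda^{c}\eta$ and $|\Impt\tilde\theta-\Impt M_{00}|\lsim\lambda^{c-2}\eta$ as required. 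Second, for $T_1$ the phrase ``$\|\Expec R-\tilde M\|_{1\to 2}$ via the entrywise local law together with the Ward bound'' is slightly off: the Ward identity applies to resolvents, and $\Expec R-\tilde M$ is not one; the mechanism is the representation $\tilde M-\Expec R=\lambda^2\tilde M\,\Expec\mcal D[R-\Expec R]R$ together with $\|\tilde M\|_{2\to 2}\lsim\lambda^{-2}$, $\|\mcal D[R-\Expec R]\|_{2\to 2}=\|R-\Expec R\|_{1\to\infty}$, and $\|R\|_{1\to 2}\lsim\eta^{-1/2}$, which is how one arrives at the quantitative bound \eqref{eq:ERM-one-two}. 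You should also flag that the inequality you write for $T_2$ puts $\sup_w$ inside the expectation, i.e., you need $(\Expec\sup_w|\cdot|^2)^{1/2}$, not $(\sup_w\Expec|\cdot|^2)^{1/2}$, though the stochastic-domination machinery makes this interchange harmless.
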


First we prove Theorem \ref{thm:diffusive-profile} assuming Lemma \ref{lem:T-eq-error-estimate} above.
\begin{proof}[Proof of Theorem \ref{thm:diffusive-profile} using Lemma~\ref{lem:T-eq-error-estimate}]
By~\eqref{eq:b-eq} we have
$$\Expec (RAR^*)_{xx} = (\Id -\lambda^2 \tilde{\mcal{K}})^{-1} (\mcal{K}a)(x)
+ (\Id-\lambda^2\tilde{\mathcal{K}})^{-1}\mathfrak{e}(x),$$
with $\mathfrak{e}(y) = (\mathfrak{E}_T)_{yy}$.  By \eqref{eq:walk-mass} and \eqref{eq:Emaxentry} we have
\begin{align*}
\|(\Id-\lambda^2\tilde{\mathcal{K}})^{-1}\mathfrak{e}\|_{\infty}
& \leq \|(\Id-\lambda^2\tilde{\mathcal{K}})^{-1}\|_{\infty\to\infty} \|\mathfrak{e}\|_{\infty}\\
& \lsim C\lambda^2\eta^{-1} \|\mathfrak{E}_T\|_{1\to \infty}\\
& \lsim \eta^{-1} \Phi_d(z),
\end{align*}
where we used that $\tilde{K}$ is a positive
kernel and \eqref{eq:walk-mass} to bound $\|(\Id-\lambda^2\mathcal{K})^{-1}\|_{\infty\to\infty}$.
By Lemma~\ref{lem:T-eq-full} it therefore follows that
\[
|\Expec (RAR^*)_{xx} - (\Id-\lambda^2\tilde{\mcal{K}})^{-1}(\tilde{\mcal{K}}a)(x)|
\lsim \eta^{-1}\Phi_d(z).
\]

To complete the proof of~\eqref{eq:RAR-flucs} we simply estimate the fluctuations of $(RAR^*)_{xx}$.
Lemma~\ref{lem:R-lsi} implies
\begin{align*}
|(RAR^*)_{xx} - \Expec (RAR^*)_{xx}| \ldom
\lambda \|R\|_{1\to 4} \|R\|_{1\to 2} \|R\|_{2\to 4} \|A\|_{2\to 2}.
\end{align*}

In dimension $d\geq 7$, Lemma~\ref{lem:R-flucs-highd} implies that
$\|R\|_{1\to 4} \ldom 1$, the Ward identity implies $\|R\|_{1\to 2}\ldom \eta^{-1/2}$,
and we estimate $\|R\|_{2\to 4}$ as follows using Proposition~\ref{prp:R_0Bounds} and Proposition \ref{lem:two-to-p-transfer}:
\begin{align*}
\|R\|_{2\to 4}
&\lsim \|R^{1/2}\|_{2\to 2} \|R^{1/2}\|_{2\to 4} \\
&\ldom \eta^{-1/2} (\lambda^2\eta^{-1})^{1/2}.
\end{align*}
Combining these gives (in $d\geq 7$):
\begin{align*}
|(RAR^*)_{xx} - \Expec (RAR^*)_{xx}| \ldom \eta^{-1} (\lambda (\lambda^2\eta^{-1})^{1/2}).
\end{align*}
If $d=3$ we estimate $\|R\|_{2\to 4}$ similarly for any $p>14/3$
\begin{align*}
\|R\|_{2\to 4}
&\lsim \|R^{1/2}\|_{2\to 2} \|R^{1/2}\|_{2\to 4} \\
&\lsim \|R^{1/2}\|_{2\to 2}^{9/8+\eps_p} \|R^{1/2}\|_{2\to p}^{7/8-\eps_p'} \\
&\ldom \eta^{-9/16} (\lambda^2\eta^{-1})^{7/16}.
\end{align*}
We also have $\|R\|_{1\to 2}\ldom \eta^{-1/2}$ by the Ward identity,
and estimating $\|R\|_{1\to 4}$ similarly gives, after some arithmetic,
\[
|(RAR^*)_{xx} - \Expec (RAR^*)_{xx}| \ldom
\eta^{-1} (\lambda^{3/4} (\lambda^2\eta^{-1})^{23/16}).
\]
Performing the same computation for $d=2,4,5,6$, we find that
\[
|(RAR^*)_{xx} - \Expec (RAR^*)_{xx}| \ldom
\eta^{-1} (\lambda^{1/2} (\lambda^2\eta^{-1})^{11/8}).
\]
In summary, we have obtained
\begin{align}
\label{eq:ComputedRAR-Flucs}
|(RAR^*)_{xx} - \Expec (RAR^*)_{xx}| \ldom
\eta^{-1}
\begin{cases}
\lambda^{1/2}(\lambda^2\eta^{-1})^{11/8}&d\in\{2,4,5,6\}\\
\lambda^{3/4}(\lambda^2\eta^{-1})^{23/16}&d=3\\
\lambda (\lambda^2\eta^{-1})^{1/2}&d\geq 7.
\end{cases}
\end{align}
In each case, the conclusion is that the fluctuations are much smaller than the error bound we obtain for the expectation,
so~\eqref{eq:RAR-flucs} holds.
\end{proof}

The rest of the section is dedicated to the proofs of~\eqref{eq:walk-mass} and~\eqref{eq:Emaxentry}.

\subsection{Proof of~\eqref{eq:walk-mass}}
Since $\tilde{M} = (\Delta - z - \lambda^2 \Expec R_{00})^{-1}$, the Ward identity implies
\begin{align*}
\lambda^2 \sum_{x\in\Z^d} \tilde{K}(x)
&= \lambda^2 \sum |(\Delta-z-\lambda^2\Expec R_{00})^{-1}_{0x}|^2 \\
&= \lambda^2
\frac{\Impt (\Delta - z -\lambda^2 \Expec R_{00})^{-1}_{00}}{\lambda^2 \Impt R_{00} + \eta}.
\end{align*}
One can check after some arithmetic that by applying Theorem~\ref{thm:local-law} there exists some $c>0$ such that $|\Expec R_{00} - M_{00}| \lsim \lambda^c(\lambda^{-2}\eta)$ for $\eta>\lambda^{2+\kappa_d-\delta}$.  Thus, using the smoothness of resolvent entries (Lemma~\ref{lem:dos-regularity}) and the definition of $M$, \eqref{eq:def-of-M},  we have
\begin{align*}
\lambda^2 \sum_{x\in\Z^d} \tilde{K}(x)
&= \frac{\lambda^2 \Impt M_{00} + O(\lambda^c\eta)}{\lambda^2 \Impt M_{00} + \eta + O(\lambda^c\eta)} \\
&= 1 - \lambda^{-2}\eta (\Impt M_{00})^{-1} + O(\lambda^{c-2}\eta),
\end{align*}
as claimed.

\subsection{Proof of the bound~\eqref{eq:Emaxentry}}

The bound on $\|\mathfrak{E}_T\|_{1\to\infty}$ is a straightforward application of  Theorem~\ref{thm:local-law},
the triangle inequality, and Cauchy Schwarz. We give the details below.
First we recall the expression for $\mathfrak{E}_T$ and apply
the triangle inequality:
\begin{equation}
\begin{split}
\|\mathfrak{E}_T\|_{1\to\infty}
&\leq \|\tilde{M} A (\Expec R - \tilde{M})^*\|_{1\to\infty}
+ \lambda^2 \|\tilde{M} \Expec \mcal{D}[R-\Expec R] RAR^*\|_{1\to\infty}
+ \lambda^2 \|\tilde{M} \Expec \mcal{D}[RAR^*] (R-\tilde{M})^*\|_{1\to\infty} \\
&=: \textrm{I} + \textrm{II} + \textrm{III}.
\end{split}
\end{equation}
To bound these error terms we will need an estimate for
$\|\Expec R-\tilde{M}\|_{1\to 2}$.
Using the Ward identity and Theorem~\ref{thm:local-law} for $\eta\in (\lambda^{2+\kappa_d},\lambda^{2})$ we have
\begin{equation}
\label{eq:ERM-one-two}
\|\Expec R - \tilde{M}\|_{1\to 2}
\lsim
\lambda^{-\delta}\times
\begin{cases}
\lambda^{-1/2} (\lambda^2\eta^{-1})^{9/4}, & d\in\{2,4,5,6\} \\
\lambda^{-1/4} (\lambda^2 \eta^{-1})^{19/8}, & d=3 \\
\lambda^2 \eta^{-1}, & d \geq 7.
\end{cases}
\end{equation}

Now we are ready to compute. First we estimate (using Corollary~\ref{cor:free-onetwo} for $\|\tilde{M}\|_{1\to 2}$ and Theorem~\ref{thm:local-law})
\[
\textrm{I} \leq \|\tilde{M}\|_{1\to 2} \|A\|_{2\to 2}
\|\Expec R -\tilde{M}\|_{1\to 2}
\lsim \lambda^{-2} \times \lambda^{-\eps}
\begin{cases}
\lambda^{1/2} (\lambda^2\eta^{-1})^{9/4}, & d\in\{2,4,5,6\} \\
\lambda^{3/4} (\lambda^2 \eta^{-1})^{19/8}, & d=3 \\
\lambda (\lambda^2 \eta^{-1}), & d \geq 7.
\end{cases}
\]
For the second term we use Theorem~\ref{thm:local-law} and the Ward identity to estimate $\|R\|_{1\to 2} \ldom \eta^{-1/2}$ and then write
\[
\textrm{II}
\lsim \lambda^2 \|\tilde{M}\|_{1\to 2}
\Expec \|R-\Expec R\|_{1\to\infty} \|R\|_{2\to 2} \|A\|_{2\to 2} \|R\|_{1\to 2}
\lsim \lambda^{-2} \times \lambda^{-\eps}
\begin{cases}
\lambda^{1/2} (\lambda^2\eta^{-1})^{13/4} , &d\in\{2,4,5,6\} \\
\lambda^{3/4} (\lambda^2\eta^{-1})^{27/8} , &d=3 \\
\lambda (\lambda^2\eta^{-1})^{3/2} , &d \geq 7.
\end{cases}
\]

The final term is estimated by splitting the expectation into two parts by adding and subtracting
$\Expec \mcal{D}[RAR^*]$:
\begin{equation}
\label{eq:three-split}
\begin{split}
\textrm{III}
&\leq \lambda^2 \|\tilde{M}\|_{2\to \infty}
\Expec
\|\mcal{D}[RAR^*] - \Expec \mcal{D}[RAR^*]\|_{2\to 2}
(\|R^*\|_{1\to 2}+\|\tilde{M}^*\|_{1\to 2}) \\
&\qquad + \lambda^2 \|\tilde{M}\|_{2\to\infty}
\|\Expec \mcal{D}[RAR^*]\|_{2\to 2}
\|\Expec (R^*-\tilde{M}^*)\|_{1\to2}.
\end{split}
\end{equation}
The first term above is bounded using~\eqref{eq:ComputedRAR-Flucs}
to estimate
\begin{equation}
\label{eq:four-bd}
\lambda^2 \|\tilde{M}\|_{2\to \infty}
\Expec \|\mcal{D}[RAR^*] - \Expec \mcal{D}[RAR^*]\|_{2\to 2}
(\|R^*\|_{1\to2}+\|\tilde{M}^*\|_{1\to 2})
\lsim
\lambda^{-2} \times \lambda^{-\eps}
\begin{cases}
\lambda^{1/2}(\lambda^2\eta^{-1})^{23/8}, &d\in\{2,4,5,6\} \\
\lambda^{3/4}(\lambda^2\eta^{-1})^{47/16}, &d=3 \\
\lambda (\lambda^2\eta^{-1})^2, &d\geq 7.
\end{cases}
\end{equation}
For the latter term in~\eqref{eq:three-split}, we estimate $\|\mcal{D}[RAR^*]\|_{2\to 2}
\leq \|A\|_{2\to 2} \|R\|_{1\to 2}^2 \ldom \eta^{-1}$ and use~\eqref{eq:ERM-one-two} to bound
\begin{equation}
\label{eq:six-bd}
\lambda^2 \|\tilde{M}\|_{1\to 2} \|\Expec\mcal{D}[RAR^*]\|_{2\to2} \|\Expec (R^*-\tilde{M}^*)\|_{1\to 2}
\lsim \lambda^{-2} \times \lambda^{-\eps}
\begin{cases}
\lambda^{1/2} (\lambda^2\eta^{-1})^{13/4} , &d\in\{2,4,5,6\} \\
\lambda^{3/4} (\lambda^2\eta^{-1})^{27/8} , &d=3 \\
\lambda (\lambda^2\eta^{-1})^{3/2} , &d\geq 7.
\end{cases}
\end{equation}
Adding~\eqref{eq:four-bd} and~\eqref{eq:six-bd} we conclude
\[
\textrm{III}
\lsim \lambda^{-2} \times \lambda^{-\eps}
\begin{cases}
\lambda^{1/2} (\lambda^2\eta^{-1})^{13/4} , &d\in\{2,4,5,6\} \\
\lambda^{3/4} (\lambda^2\eta^{-1})^{27/8} , &d=3 \\
\lambda (\lambda^2\eta^{-1})^{2} , &d\geq 7.
\end{cases}
\]
Collecting the worst bound on ${\rm I},\rm{II},$ and $\rm{III}$ in each dimension, we may conclude.

\section{Analysis of the diffusion profile}
\label{sec:RAR}
In the last section, we saw that $ (RAR^*)_{xx}$ is approximately given in terms of the operator
\begin{align}
\label{eq:mcalGdef}
	\mathcal{G}(z):&=(\Id-\lambda^{2}\tilde{\mathcal{K}}(z))^{-1}\tilde{\mathcal{K}}(z),
\end{align}
in the sense that, for $a\in \ell^\infty(\bbZ^d_L)$ and $A = \sum a(x)\ket{x}\bra{x}$,
\[
(RAR^*)_{xx} = (\mcal{G}(z) a)(x) + O(\lambda^c \eta^{-1} \|a\|_{\ell^\infty}).
\]
In this section we compare $\mcal{G}$
to the inverse of elliptic operator on $\Real^d$.  To do this it is convenient to change the domain from $\bbZ^d_L$ to $\bbZ^d$.

This involves a change of notation, so we write $\tilde{K}_{\bbZ^d}(z)$ for the corresponding convolution operator on $\bbZ^d$,
\[
\tilde{K}_{\bbZ^d}(z;x) = |(\Delta_{\bbZ^d} - (z+\lambda^2\tilde{\theta}))^{-1}_{0x}|^2.
\]
Because of Lemma~\ref{prp:Zd-to-ZdL}, we have
$\tilde{K}_{\bbZ^d}(z;x) - \tilde{K}(z;x) = O(\exp(-c\lambda^{-1})$ for $|x|\ll \lambda^{-10}$.  Similarly, we write $\mcal{G}_{\bbZ^d}$ for the operator defined as in~\eqref{eq:mcalGdef} with $\tilde{\mcal{K}}_{\bbZ^d}$ replacing $\tilde{\mcal{K}}$.

Since $\tilde{K}$ is approximately $\tilde{K}_{\bbZ^d}$ near the origin, we therefore also have for $\supp a \subset \{|x|\ll L\}$ and $A_{\bbZ^d} = \sum_{\bbZ^d} a(x)\ket{x}\bra{x}$,
\[
(RAR^*)_{xx} =
(\mcal{G}_{\bbZ^d}(z) a)(x) + O(\lambda^c\eta^{-1}).
\]

We will show that, on functions smooth to scale $\lambda^{-2}$, $\mcal{G}_{\bbZ^d}$ approximately acts as the inverse of an elliptic operator.
To define this operator, we let $m(z)$ be the ``mass''
\begin{equation}
\label{eq:massdef}
m(z) := \lambda^2\eta^{-1} (1 - \lambda^2 \sum_{x\in \bbZ^{d}} \tilde{K}(z;x)).
\end{equation}
and $\vartheta(z)$ be the diffusion constant
\begin{equation}
\label{eq:diffdef}
\vartheta(z) := \frac{\lambda^6}{2}  \sum_x x_1^2 \tilde{K}(z;x).
\end{equation}
We then define the elliptic operator $\mcal{L}$ on $\Real^d$ as
\begin{equation}
\label{eq:L-def}
\mcal{L} := -\lambda^{-4}\vartheta  \Delta_{\Real^d} +  \lambda^{-2}\eta m,
\end{equation}
where we suppress the dependence on $z$.   The factors of $\lambda$ and $\eta$ above are chosen so that $m(z)$ and $\vartheta(z)$ are of unit order $1$, as computed below in~\eqref{eq:massdiff-limit}.

More concretely, we will compare $S\mcal{L}^{-1} $ to  $ \mathcal{G}S$, where $S:C^0(\Real^d)\to \ell^\infty(\bbZ^d)$ is the ``sampling'' operator
that simply evaluates a function on the lattice points.   We also give approximations to $m(z)$ and $\vartheta(z)$ in terms
primitive functions involving only the dispersion relation of the Laplacian. The first function is the density of states $\rho$ defined in~\eqref{eq:rho-def}.
The other function we need is $\nu:[-2d,2d]\to\Real_+$ given by
\begin{equation}
\label{eq:nu-def}
\nu(E) := (2\pi)^{-d}  \int_{\omega(p)=E} |\nabla \omega(p)| \diff\mcal{H}^{d-1}(p).
\end{equation}

We can now state our main result of the section.
\begin{proposition}
\label{prp:elliptic-limit}
Let $\mathcal{G}_{\bbZ^d}(z):\ell^\infty(\bbZ^{d})\to \ell^{\infty}(\bbZ^{d})$ be the convolution operator~\eqref{eq:mcalGdef},
let $\mathcal{L}:C^\infty(\Real^d)\to C^\infty(\Real^d)$  be the
elliptic operator defined as in~\eqref{eq:L-def}, and let $S:C^0(\Real^d)\to \ell^\infty(\bbZ^d)$ be the sampling operator.
Let $z=E+i\eta$ with $\lambda^{2+\kappa_d-\delta} < \eta < \lambda^{2+\delta}$.
Then for $f\in C_c^\infty(\Real^d)$ with $\supp f\subset\{|x|\leq \frac{1}{100}L\}$ we have
\[
\|\lambda^{-2}S \mathcal{L}^{-1} f - \mathcal{G}_{\bbZ^d} S f\|_{\ell^\infty(\bbZ^{d})} \lsim \lambda^{-2} ( \eta^{-1}\|\nabla f\|_{C^0(\R^{d})} + \|\Ft{f}\|_{L^1(\R^{d})}).
\]
Moreover, we have the approximations
\begin{equation}
\begin{split}
\label{eq:massdiff-limit}
m(z) &= \rho(E)^{-1} +  O(\lambda^{c})\\
\vartheta(z) &= \frac{\pi}{4d}\rho(E)^{-3}\nu(E) + O(\lambda^{c}).
\end{split}
\end{equation}
\end{proposition}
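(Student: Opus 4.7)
The plan is to analyze both operators via Fourier. Since $\mcal{G}_{\bbZ^d}$ is convolution on $\bbZ^d$ with symbol $\hat{\mcal{G}}(\xi) = \hat{\tilde K}(\xi)/(1-\lambda^2\hat{\tilde K}(\xi))$ on $\bbT^d$, while $\mcal{L}^{-1}$ is a Fourier multiplier on $\Real^d$ with symbol $1/\mcal{L}(\xi)$, Poisson summation (applied to $\widehat{Sf}(\xi) = \sum_{k\in\bbZ^d}\hat f(\xi+2\pi k)$) gives the exact identity
\begin{equation*}
\lambda^{-2}S\mcal{L}^{-1}f(n) - \mcal{G}_{\bbZ^d}Sf(n) = (2\pi)^{-d}\int_{\Real^d} e^{-in\cdot\xi}\Big[\frac{\lambda^{-2}}{\mcal{L}(\xi)} - \hat{\mcal{G}}_{\rm per}(\xi)\Big]\hat f(\xi)\,d\xi,
\end{equation*}
where $\hat{\mcal{G}}_{\rm per}$ denotes the $2\pi\bbZ^d$-periodic extension of $\hat{\mcal{G}}$. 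The proof reduces to estimating the bracketed multiplier integrated against $|\hat f|$.

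The central computation is a Taylor expansion of $1 - \lambda^2\hat{\tilde K}(\xi)$ near $\xi = 0$. By construction the constant term is $\lambda^{-2}\eta\,m(z)$; the linear term vanishes by $\tilde K(x) = \tilde K(-x)$; and the Hessian is diagonal with equal entries $\lambda^2\sum_x x_j^2\tilde K(x) = 2\lambda^{-4}\vartheta(z)$ thanks to the cubic symmetry of $\bbZ^d$. Writing $1-\lambda^2\hat{\tilde K}(\xi) = \mcal{L}(\xi) + E(\xi)$ with $E$ the Taylor remainder (controlled by fourth moments of $\tilde K$, in turn Plancherel-estimable from $|\partial_\xi^2 m|^2$), I get
\begin{equation*}
\frac{\lambda^{-2}}{\mcal{L}(\xi)} - \hat{\mcal{G}}(\xi) = \lambda^{-2} + \frac{\lambda^{-2}E(\xi)}{\mcal{L}(\xi)\bigl(1-\lambda^2\hat{\tilde K}(\xi)\bigr)}.
\end{equation*}
The constant $\lambda^{-2}$ inverse-transforms to $\lambda^{-2}f(n)$, bounded by $\lambda^{-2}\|\hat f\|_{L^1}$; the remainder satisfies $|E|/|\mcal{L}|^2 \lsim 1$ in a neighborhood of $0$ (a short case split on $|\xi|^2 \gtrless \lambda^2\eta$), contributing another $O(\lambda^{-2}\|\hat f\|_{L^1})$. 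Contributions from $\xi$ away from the origin on $\bbT^d$, and from $|\xi|$ large on $\Real^d$ (where $\lambda^{-2}/\mcal{L}(\xi) = O(\lambda^2|\xi|^{-2})$ is small), are absorbed using a gap estimate $|1-\lambda^2\hat{\tilde K}(\xi)|\geq c$ combined with one integration by parts that converts high-$|\xi|$ decay into the $\eta^{-1}\|\nabla f\|_{C^0}$ factor.

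For the asymptotic formulas, $m(z) = (\Impt M_{00})^{-1} + O(\lambda^c)$ is immediate from \eqref{eq:walk-mass}, and $\Impt M_{00}$ equals $\rho(E)$ (with the appropriate normalization) up to $O(\lambda^c)$ by the Sokhotski--Plemelj limit and the closeness of $M$ to the free resolvent. For $\vartheta$, Plancherel gives
\begin{equation*}
\sum_x x_1^2|\tilde M_{0x}|^2 = (2\pi)^{-d}\int_{\bbT^d}\frac{|\partial_{\xi_1}\omega(\xi)|^2}{|\omega(\xi)-z_\lambda|^4}\,d\xi,
\end{equation*}
and the identity $|\omega-z|^{-4} = -(2\eta)^{-1}\partial_\eta|\omega-z|^{-2}$ reduces the evaluation to a Sokhotski--Plemelj asymptotic as $\eta_\lambda\to 0$. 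The limit concentrates on $\{\omega = E\}$; combined with the coarea identity $\int_{\omega=E}|\partial_{\xi_1}\omega|^2\,d\mcal{H}^{d-1} = d^{-1}\int_{\omega=E}|\nabla\omega|^2\,d\mcal{H}^{d-1} = d^{-1}(2\pi)^d\nu(E)$ (by cubic symmetry) and the substitution $\eta_\lambda = \lambda^2\rho(E) + O(\lambda^{2+c})$, this yields the claimed expression.

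The main obstacle is the gap estimate $|1-\lambda^2\hat{\tilde K}(\xi)| \geq c$ for $\xi$ bounded away from the origin on $\bbT^d$. Near the origin the Taylor expansion forces this quantity to tend to $0$, but for general $\xi$ one must prevent the double-pole integral $\hat{\tilde K}(\xi) = (2\pi)^{-d}\int (\omega(\zeta+\xi)-z_\lambda)^{-1}(\omega(\zeta)-\overline{z_\lambda})^{-1}\,d\zeta$ from producing a resonance that returns $\lambda^2\hat{\tilde K}(\xi)$ close to $1$. This requires geometric non-degeneracy of $\omega^{-1}(E)$ and its translates, of the same flavor as the curvature analysis underlying the restriction-type bounds in Section~\ref{sec:restriction}, and is where the exclusion of critical energies $\Sigma_d$ plays an essential role.
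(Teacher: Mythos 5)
Your proof strategy is Poisson summation plus a Taylor expansion of $1-\lambda^2\hat{\tilde{K}}$ near $\xi=0$, which means you must control the multiplier difference on all of $\bbT^d$; the paper instead first mollifies $f$ at physical scale $\ell = C\lambda^{-2}$ (paying the $\eta^{-1}\|\nabla f\|_{C^0}$ error via the $\ell^\infty\to\ell^\infty$ a priori bounds~\eqref{eq:latticeL-apriori} and~\eqref{eq:Lapriori}) so that the surviving Fourier integral lives on $|\xi|\leq 1/C$, entirely within the region where Lemma~\ref{lem:KSymbol} applies. This is not a stylistic difference: it is precisely what lets the paper avoid the global gap estimate $|1-\lambda^2\hat{\tilde{K}}(\xi)|\geq c$ that you correctly flag as the "main obstacle" of your approach.

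That gap estimate is a genuine hole in your argument, not a routine step you can defer. While $1-\lambda^2\hat{\tilde{K}}(\xi)$ is real and strictly positive for all $\xi$ (since $\tilde K\geq 0$, $\hat{\tilde K}(\xi)<\hat{\tilde K}(0)$ for $\xi\neq 0$, and $1-\lambda^2\hat{\tilde K}(0)=\lambda^{-2}\eta m>0$), a \emph{uniform} lower bound away from the origin would require quantitative decay of $\hat{\tilde K}(\xi)$ as $\xi$ leaves $0$. The kernel $\tilde K$ has mass $\approx\lambda^{-2}$ spread over scale $\lambda^{-2}$, so $\hat{\tilde K}$ is extremely spiked near $\xi=0$, but ruling out secondary near-peaks of $\hat{\tilde K}$ elsewhere on $\bbT^d$ is exactly the kind of analysis of the double-pole integral that your sketch labels "of the same flavor as the restriction estimates" without actually carrying out; this seems at least as hard as anything in Section~\ref{sec:restriction}, and it is not clear that excluding $\Sigma_d$ suffices to establish it (the paper excludes $\Sigma_d$ for reasons unrelated to this). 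Your claim that "one integration by parts" converts the high-$|\xi|$ contribution into the $\eta^{-1}\|\nabla f\|_{C^0}$ term is also not supported: integration by parts in $\xi$ produces a derivative of the symbol and a factor of $|n|^{-1}$, not a factor of $\eta^{-1}$, and in fact if the gap estimate held the away-from-origin contribution would fold into the $\|\hat f\|_{L^1}$ term, not $\eta^{-1}\|\nabla f\|_{C^0}$. The second part of your proposal, the identification of $m$ and $\vartheta$ via Plancherel, coarea, and a Sokhotski--Plemelj limit, matches the paper's Proposition~\ref{pr:massThetaApprox} and is fine.

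Concretely: replace the Poisson-summation global comparison with the paper's mollification step. Convolve $f$ with $\chi_\ell$, $\ell=C\lambda^{-2}$, absorb that error using the operator norm bounds $\|(\Id-\lambda^2\tilde{\mcal K}_{\bbZ^d})^{-1}\|_{\ell^\infty\to\ell^\infty}\lsim\lambda^2\eta^{-1}$ and $\|\mcal{L}^{-1}\|_{L^\infty\to L^\infty}\lsim\lambda^2\eta^{-1}$ (both follow from the Neumann series and~\eqref{eq:walk-mass}), and then your Taylor-expansion comparison on $|\xi|\leq 1/C$ closes the argument without any gap estimate.
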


\subsection{Computation of the moments of $K$}
Recall that the kernel $\tilde{K}_{\bbZ^d}(z;x)$ is given by
\[
\tilde{K}_{\bbZ^d}(x) = |\Braket{0| (\Delta - (z+\lambda^2\tilde{\theta}(z))|x}|^2
\]
with\footnote{Note $\tilde{\theta}$ is technically defined in terms of the resolvent on $\bbZ^d_L$, although this this does not matter for us since we have $|\tilde{\theta}-\tilde{\theta}|\lsim \exp(-c\lambda^{-1})$  anyway.} $\tilde{\theta}(z)=\Expec R(z)_{00}$.
We first compute its symbol at low frequencies.
\begin{lemma}
\label{lem:KSymbol}
For $z=E+i\eta$ with $d(E,\Sigma_d) > \eps$ and $\eta>\lambda^{2+\kappa_d-\delta}$, we have that the kernel $\tilde{K}(z;x)$ satisfies
\begin{align*}
	\lambda^{2}\widehat{\tilde{K}}(\xi)=(1-\lambda^{-2}\eta m)-\vartheta\lambda^{-4}|\xi|^{2}+ \lambda^{-8} |\xi|^4 r(\xi),
\end{align*}
where $|r(\xi)| \lsim 1$ for all $\xi$.
\end{lemma}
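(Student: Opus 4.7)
The plan is to expand $\widehat{\tilde{K}}(\xi)$ as a Taylor series in $\xi$ around the origin to fourth order, identifying the zeroth and second order coefficients with $m$ and $\vartheta$ by definition, killing the linear and cubic terms via symmetry, and then bounding the fourth-order remainder uniformly by estimating a fourth moment of $\tilde{K}$.

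First I would verify the constant term. By definition of the Fourier transform on $\bbZ^d$, $\widehat{\tilde{K}}(0) = \sum_{x\in\bbZ^d} \tilde{K}(z;x)$, which by~\eqref{eq:massdef} satisfies $\lambda^2 \widehat{\tilde{K}}(0) = 1 - \lambda^{-2}\eta m$, as claimed. Next I would exploit the symmetries of $\tilde{K}$: since $\tilde{M}$ is a Fourier multiplier with symbol depending on $\omega(\xi) = 2\sum_j \cos \xi_j$, the kernel $\tilde{K}(x) = |\tilde{M}_{0x}|^2$ is invariant under all sign flips $x_j \mapsto -x_j$ and coordinate permutations (the hyperoctahedral action). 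Consequently all odd-order terms in the Taylor expansion of $\widehat{\tilde{K}}$ about $0$ vanish, and the Hessian at $0$ is proportional to the identity with coefficient determined by $\sum_x x_1^2 \tilde{K}(x) = 2\lambda^{-6}\vartheta$ via~\eqref{eq:diffdef}. This produces the quadratic term $-\lambda^{-6}\vartheta |\xi|^2$.

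The main task is to control the fourth-order remainder uniformly in $\xi$. By Taylor's theorem with integral remainder, $|\widehat{\tilde{K}}(\xi) - \widehat{\tilde{K}}(0) + \lambda^{-6}\vartheta|\xi|^2| \lsim |\xi|^4 \sup_{|\alpha|=4,\tilde\xi} |\partial^\alpha \widehat{\tilde{K}}(\tilde\xi)|$, and each fourth partial derivative is bounded by $\sum_x |x|^4 \tilde{K}(x)$. Using Plancherel,
\begin{equation*}
\sum_x x_1^4 \tilde{K}(x) = (2\pi)^{-d}\int_{\Torus^d} \bigl|\partial_{\xi_1}^2 \tfrac{1}{\omega(\xi)-z'}\bigr|^2 \diff \xi,
\end{equation*}
where $z' = z + \lambda^2 \tilde\theta(z)$ has $\Impt z' \gtrsim \lambda^2$ (by Lemma~\ref{lem:dos-regularity} applied to $\tilde{M}$, together with the local law). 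A direct computation yields the pointwise bound $|\partial_{\xi_1}^2 (\omega-z')^{-1}|^2 \lsim |\omega-z'|^{-4} + |\omega-z'|^{-6}$. Since $E = \Rept z'$ lies at distance at least $\eps/2$ from $\Sigma_d$, the level set $\{\omega = E\}$ is free of critical points and $|\nabla\omega|$ is bounded below on a neighborhood of it. The coarea formula then gives $|\{\xi : |\omega(\xi) - z'| \leq s\}| \lsim s$, which implies $\int |\omega-z'|^{-k}\diff\xi \lsim (\Impt z')^{1-k} \lsim \lambda^{2(1-k)}$ for $k \geq 2$. In particular $\sum_x x_1^4 \tilde{K}(x) \lsim \lambda^{-10}$, and the analogous bound for all other fourth moments follows from symmetry.

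Putting these pieces together, $|\widehat{\tilde{K}}(\xi) - \widehat{\tilde{K}}(0) + \lambda^{-6}\vartheta|\xi|^2| \lsim \lambda^{-10}|\xi|^4$; multiplying through by $\lambda^2$ gives the claimed expansion with $|r(\xi)|\lsim 1$. The main obstacle is the fourth moment estimate, which requires the quantitative non-degeneracy of $\omega$ away from $\Sigma_d$ to ensure the singular integral converges at the rate $\lambda^{-10}$; this is precisely where the exclusion of critical values of $\omega$ from $\Sigma_d$ is used.
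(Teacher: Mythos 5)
Your proposal is correct and follows essentially the same route as the paper: kill odd moments by the reflection symmetry of $\tilde{K}$, identify the zeroth- and second-order Taylor coefficients with $m$ and $\vartheta$ by definition, and then bound the fourth-order remainder via Plancherel together with the integral bound $\int_{\Torus^d}|\omega(\xi)-z'|^{-6}\,\mathrm{d}\xi\lsim(\Impt z')^{-5}\lsim\lambda^{-10}$, using that the density of states is bounded near $\Rept z'$ when $d(E,\Sigma_d)>\eps$. The only cosmetic differences are that you phrase the symmetry argument in terms of the full hyperoctahedral group (the paper uses reflections and notes $\sum x_ix_j\tilde{K}=0$ for $i\neq j$ directly) and you estimate the singular integral via a layer-cake / sublevel-set bound rather than the paper's coarea change of variables; both yield the same $\lambda^{-10}$ rate.
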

\begin{proof}
To begin, observe that since $\Delta_{\mathbb{Z}^d}$ is invariant under reflections,
$\tilde{K}_{\mathbb{Z}^d}(x)=\tilde{K}_{\mathbb{Z}^d}(x')$ if $x'$ is the reflection of $x$ across a hyperplane of
the form $\{(x_1,...,x_d)\in \mathbb{R}^d\ | x_i = 0\}$.
Therefore, odd moments of $\tilde{K}_{\mathbb{Z}^d}(x)$ vanish, so in particular,
\begin{align*}
	\sum_{x\in \bbZ^{d}}x_i\tilde{K}_{\mathbb{Z}^d}(x)=\sum_{x\in \bbZ^{d}}x_ix_j\tilde{K}_{\mathbb{Z}^d}(x)=\sum_{x\in \bbZ^{d}}x_ix_jx_k\tilde{K}_{\mathbb{Z}^d}(x)=0,
\end{align*}
for any indices $i,j,k$ with $i\neq j$. Taylor expanding $e^{ix\xi}$ to third order and recalling the definitions of $m$ and $\vartheta$, we find that
\begin{align*}
\lambda^2\hat{\tilde{K}}_{\mathbb{Z}^d}(x)=(1-\lambda^{-2}\eta m)-\vartheta\lambda^{-4}|\xi|^{2} +\lambda^{2}\sum_{x\in\bbZ^{d}}r_1(x\xi)\tilde{K}_{\mathbb{Z}^d}(x),
\end{align*}
with $|r_{1}(y)|\leq C|y|^{4}$ for all $y\in \Real$.

It remains to show that
\[
\lambda^2 \sum_{x\in\bbZ^d} |x|^4 \tilde{K}_{\mathbb{Z}^d}(x)
\lsim \lambda^{-8}.
\]
By Plancherel's theorem, this is equivalent to the bound
\begin{equation}
\label{eq:four-planch-bd}
\sum_{j,k=1}^d\int_{\Torus^d} \Big|\partial_j\partial_k \frac{1}{\omega(\xi)- (z+ \lambda^2\theta)}\Big|^2 \diff \xi
\lsim \lambda^{-10}.
\end{equation}
For convenience we set $z' = z +\lambda^2\tilde{\theta}(z) $, and we note $\Impt z' \gtrsim \lambda^2$.
Then we compute
\begin{align*}
\sum_{j,k=1}^d\int_{\Torus^d} \Big|\partial_j\partial_k \frac{1}{\omega(\xi)-z'}\Big|^2 \diff \xi
&\lsim \sum_{j,k} \int_{\Torus^d} \Big( \frac{|\partial_j \omega|^2 |\partial_k\omega|^2}{|\omega(\xi)-z'|^6}
+ \frac{|\partial_{jk}\omega|^2}{|\omega(\xi)-z'|^4}\Big)\diff \xi \\
&\lsim \int_{\Torus^d} \frac{1}{|\omega(\xi)-z'|^6} \diff \xi.
\end{align*}
To get to the second line we used that $\omega\in C^2(\Torus^d)$ and that $|\omega(\xi)-z'|\lsim 1$ to bound $|\omega(\xi)-z'|^{-4} \lsim |\omega(\xi)-z'|^{-6}$.  Then we use the coarea formula and change variables to write
\begin{align*}
\int_{\Torus^d} \frac{1}{|\omega(\xi)-z'|^6}
\diff \xi
= \int \frac{1}{|E'-i\Impt z'|^6}\rho(E'-\Rept z')\diff E'.
\end{align*}
The conclusion~\eqref{eq:four-planch-bd} now follows from observing that
\[
\int_{-\infty}^\infty
\frac{1}{|t-i\Impt z'|^6} \diff t
\lsim (\Impt z')^{-5} \lsim \lambda^{-10},
\]
and that $\rho(E'-\Rept z')$ is bounded near the origin (for $d(E,\Sigma_d)>\eps$).

\end{proof}
Now we prove approximations the for $m$ and $\vartheta$ recorded in \eqref{eq:massdiff-limit}.
\begin{proposition}
\label{pr:massThetaApprox}
For any $z=E+i\eta$ with $d(E,\Sigma_d)>\eps$ and $\eta>\lambda^{2+\kappa_d-\delta}$, we have that
\begin{align}
m(z)&=\rho(E)^{-1}+O(\lambda^{c})\label{eq:m-approx}\\
\vartheta(z)&=\frac{\pi}{4d}\rho(E)^{-3}\nu(E)+O(\lambda^{c})\label{eq:theta-approx}.
\end{align}
\end{proposition}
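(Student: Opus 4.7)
The plan is to compute both $m$ and $\vartheta$ by expressing the sums over $x\in\bbZ^{d}$ in Fourier via Plancherel, and then analyzing the resulting integrals at the pseudo-energy $z' := z + \lambda^{2}\tilde{\theta}(z)$, where $\tilde{\theta}(z) = \Expec R_{00}(z)$. Throughout I will use that the local law (Theorem~\ref{thm:local-law}) gives $\tilde{\theta} - M_{00} = O(\lambda^{c})$ for some $c>0$, and hence, by $C^{1}$-regularity of resolvent entries (Lemma~\ref{lem:dos-regularity}), also $\tilde{M}_{00} - M_{00} = O(\lambda^{2+c})$ and $\Impt(z') = \eta + \lambda^{2}\Impt M_{00} + O(\lambda^{2+c})$.

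For $m$, I will apply Plancherel and the Ward identity to write
$$\lambda^{2}\sum_{x}\tilde{K}(z;x) \;=\; \lambda^{2}\int_{\Torus^{d}}\!\frac{(2\pi)^{-d}\,d\xi}{|\omega(\xi)-z'|^{2}} \;=\; \frac{\lambda^{2}\,\Impt\tilde{M}_{00}}{\Impt(z')}.$$
Inserting this into the definition of $m$ and simplifying,
$$m(z) \;=\; \frac{\lambda^{2}}{\eta + \lambda^{2}\Impt\tilde{M}_{00}} + O(\lambda^{c}).$$
Because $\eta < \lambda^{2+\delta}$ and $\Impt M_{00} \gtrsim 1$, this equals $(\Impt M_{00})^{-1} + O(\lambda^{c})$. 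A short argument using the self-consistent equation $M_{00} = (\Delta - z - \lambda^{2}M_{00})^{-1}_{00}$ and $C^{1}$-regularity of $E\mapsto (\Delta - E - i0)^{-1}_{00}$ near $E$ will then give $\Impt M_{00} = \rho(E) + O(\lambda^{2})$, which proves \eqref{eq:m-approx}.

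For $\vartheta$, writing $m_{0}(\xi) := (\omega(\xi)-z')^{-1}$ and using Plancherel with the correspondence $x_{1}\tilde{M}_{0x}\leftrightarrow i\partial_{\xi_1}m_{0}(\xi)$ yields
$$\sum_{x}x_{1}^{2}\tilde{K}(z;x) \;=\; (2\pi)^{-d}\!\int_{\Torus^{d}}\frac{|\partial_{\xi_1}\omega(\xi)|^{2}}{|\omega(\xi)-z'|^{4}}\,d\xi.$$
I will then apply the coarea formula to rewrite this as $\int_{-2d}^{2d} h(E')\,|E'-z'|^{-4}\,dE'$ with $h(E') := (2\pi)^{-d}\!\int_{\omega=E'}|\partial_{\xi_1}\omega|^{2}/|\nabla\omega|\,d\mcal{H}^{d-1}$. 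The reflection symmetry of $\omega$ makes $\int_{\omega=E'}|\partial_{\xi_j}\omega|^{2}/|\nabla\omega|\,d\mcal{H}^{d-1}$ independent of $j$, so summing over $j$ and dividing by $d$ gives $h(E') = \nu(E')/d$. Finally, setting $\alpha := \Impt(z')$ and $t := E' - \Rept(z')$, the elementary identities $\int_{\R}t\,(t^{2}+\alpha^{2})^{-2}\,dt = 0$ and $\int_{\R}(t^{2}+\alpha^{2})^{-2}\,dt = \pi/(2\alpha^{3})$, together with smoothness of $\nu$ near $E$ (guaranteed by $d(E,\Sigma_{d})>\eps$), yield
$$\int_{-\infty}^{\infty}\frac{\nu(\Rept(z')+t)}{(t^{2}+\alpha^{2})^{2}}\,dt \;=\; \frac{\pi\,\nu(\Rept(z'))}{2\alpha^{3}} + O(\alpha^{-1}).$$
Combining this with the previous display and substituting $\alpha = \lambda^{2}\rho(E)(1+O(\lambda^{c}))$ and $\nu(\Rept(z')) = \nu(E) + O(\lambda^{2})$ will yield \eqref{eq:theta-approx}.

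The main obstacle is that $|E'-z'|^{-4}$ behaves like an approximate delta function on the scale $\alpha\sim\lambda^{2}$, so perturbations of $\Rept(z')$ away from $E$ and of $\nu$ near $E$ must all be controlled quantitatively to produce errors of size $O(\lambda^{c})$ rather than the naive $O(1)$. This will be enabled by the smoothness of $\nu$ and $\rho$ near $E$ (a consequence of $d(E,\Sigma_{d})>\eps$) combined with the local-law bound on $|\tilde{\theta}-M_{00}|$.
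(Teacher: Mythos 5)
Your proposal is correct and takes essentially the same approach as the paper: $m$ via Plancherel and the Ward identity (you in effect re-derive the paper's \eqref{eq:walk-mass} rather than citing it) followed by the local law and stability of the self-consistent equation, and $\vartheta$ via Plancherel, coarea, and the integral identity $\int_{\R}(t^2+\alpha^2)^{-2}\,dt = \pi/(2\alpha^3)$. One caveat worth flagging: at the bottom of the $\eta$-range, the simplification of $m$ requires $\lambda^2|\Impt\tilde\theta-\Impt\tilde M_{00}|\ll\eta$, which needs the sharper $\eta$-dependent form of the local-law bound, $|\tilde\theta - M_{00}|\lsim\lambda^c\lambda^{-2}\eta$ (as used in the paper's proof of \eqref{eq:walk-mass}), rather than the flat $O(\lambda^c)$ you state, though the arithmetic does work out in the stated $\eta$-range.
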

\begin{proof}
By \eqref{eq:walk-mass}, we have that
\begin{align*}
	m(z)=(\Impt \theta(z))^{-1}+O(\lambda^{c}).
\end{align*}
By Lemma~\ref{pr:thetaAPriori} we have the estimate for $\theta(z)$
\begin{align}\label{eq:simpleThetaEst}
|\Impt(\theta(z))-  \rho(E)| \lsim \lambda^2.
\end{align}
Since $\Impt \theta(z)$ and  $\rho(E)$ are bounded below, \eqref{eq:m-approx} now follows.

To prove \eqref{eq:theta-approx}, we compute using the Plancherel theorem and the coarea formula that for $u\in \bbH$ with $d(\Rept u,\Sigma_d)>\eps$
\begin{align*}
(\Impt u)^{3}\sum_{x\in\bbZ^d} |x|^2 |\braket{0|(\Delta_{\bbZ^{d}}-u)^{-1}|x}|^{2}
&= (2\pi)^{-d} \int_{\Torus^d} \frac{(\Impt u)^{3}|\nabla \omega(\xi)|^2}{|\omega(\xi) - u|^4} \diff \xi \\
&= \int_{-\infty}^{\infty} \frac{(\Impt u)^3}{|E'-u|^4}  \nu(E') \diff E'.
\end{align*}
By the continuity of $\nu(E)$ and the integral
\[
\int_{-\infty}^\infty \frac{(\Impt u)^3}{|E'-u|^4} \diff E' = \frac{\pi}{2},
\]
it follows that
\begin{align*}
\sum_{x\in\bbZ^d} |x|^2 |\braket{0|(\Delta_{\bbZ^{d}}-u)^{-1}|x}|^{2}=\frac{\pi}{2}\nu(\Rept u)(\Impt u)^{-3}+O((\Impt u)^{-2}).
\end{align*}
Recalling the definition of $\vartheta$ and $\tilde{K}$, we have shown that
\begin{align*}
	\vartheta(z)=\frac{\pi}{4d}\nu(E+\lambda^{2}\Rept \tilde{\theta}(z))/(\Impt \tilde{\theta}(z))^{3}+O(\lambda^{6}(\eta+\lambda^{2}\Impt \tilde{\theta}(z))^{-2}).
\end{align*}
The approximation \eqref{eq:theta-approx} now follows from \eqref{eq:simpleThetaEst} and Theorem~\ref{thm:local-law}.
\end{proof}

\subsection{Proof of Proposition~\ref{prp:elliptic-limit}}

\begin{proof}
It is easy to see that
\begin{align*}
\mathcal{G}_{\mathbb{Z}^d}= \lambda^{-2}\left( (\Id-\lambda^{2}\tilde{\mcal{K}}_{\mathbb{Z}^d})^{-1}-\Id \right)
\end{align*}
so it suffices to prove that
\begin{equation}
\label{eq:actual-bd}
\|(\Id-\lambda^{2}\tilde{\mathcal{K}}_{\mathbb{Z}^d})^{-1} S f - S\mathcal{L}^{-1} f\|_{\ell^\infty} \lsim \eta^{-1} \|\nabla f\|_{C^0} + \|\Ft{f}\|_{L^1}.
\end{equation}
Before continuing, we stop to make a few observations.  The first is that we have the simple $\ell^\infty\to\ell^\infty$ bound
\begin{equation}
\label{eq:latticeL-apriori}
\|(\Id-\lambda^{2}\tilde{\mcal{K}}_{\mathbb{Z}^d})^{-1}\|_{\ell^\infty\to\ell^\infty} \lsim \lambda^2\eta^{-1},
\end{equation}
which follows from the Neumann series expansion
\[
(\Id-\lambda^{2}\tilde{\mathcal{K}}_{\mathbb{Z}^d})^{-1} = \Id + \lambda^2\mcal{\tilde{K}}_{\mathbb{Z}^d} + \lambda^4 \mcal{\tilde{K}}_{\mathbb{Z}^d}^2 + \cdots
\]
and \eqref{eq:walk-mass}.
A similar argument using the representation $\mcal{L}^{-1} = \int_0^\infty e^{-t\mcal{L}} \diff t$ shows that
\begin{equation}
\label{eq:Lapriori}
\|\mcal{L}^{-1}\|_{L^\infty\to L^\infty} \lsim \lambda^2\eta^{-1}.
\end{equation}
Next we simplify the problem by taking a low frequency cutoff in Fourier space.
Let $\chi \in \mcal{S}(\Real^d)$ be a smooth function with compact Fourier support $\supp\Ft{\chi}\subset B_1$ and $\int \chi = 1$.  For any $\ell>0$ define $\chi_\ell(x) := \ell^{-d} \chi(x/\ell)$, which has Fourier support in $B_{1/\ell}$.
Then~\eqref{eq:latticeL-apriori} and~\eqref{eq:Lapriori} imply
\begin{equation}
\label{eq:smoothingerr1}
\|(\Id-\lambda^{2}\tilde{\mcal{K}}_{\mathbb{Z}^d})^{-1} S f - (\Id-\lambda^{2}\tilde{\mcal{K}}_{\mathbb{Z}^d})^{-1} S(\chi_\ell \ast f)\|_{\ell^\infty} \lsim \lambda^2\eta^{-1} \ell \|\nabla f\|_{C^0}
\end{equation}
and
\begin{equation}
\label{eq:smoothingerr2}
\|S \mcal{L}^{-1} f - S\mcal{L}^{-1} (\chi_\ell\ast f)\|_{\ell^\infty} \lsim \lambda^2 \eta^{-1} \ell \|\nabla f\|_{C^0}.
\end{equation}
We will take $\ell = C\lambda^{-2}$, so it suffices to show
\begin{equation}
\label{eq:interesting-bd}
\|S \mcal{L}^{-1} (\chi_\ell \ast f) - (\Id-\lambda^{2}\tilde{\mathcal{K}}_{\mathbb{Z}^d})^{-1} S(\chi_\ell \ast f)\|_{\ell^\infty} \lsim \|\Ft{f}\|_{L^1}.
\end{equation}
The Fourier transform $\mcal{F}$ of the first term is given by
\[
\mcal{F}S\mcal{L}^{-1}(f\ast \chi_\ell)(\xi) = \frac{1}{\lambda^{-2}\eta m+\vartheta \lambda^{-4}|\xi|^2} \Ft{\chi}(\ell\xi) \Ft{f}(\xi)
\]
so by Lemma~\ref{lem:KSymbol}, we find that
\begin{equation}
\begin{split}
\label{eq:ft-bd}
&\|S\mcal{L}^{-1}(\chi_\ell\ast f) - (\Id-\lambda^{2}\tilde{\mcal{K}}_{\mathbb{Z}^d})^{-1}S(\chi_\ell\ast f_{\bbZ^d})\|_{\ell^\infty}\leq \\
&\quad\int_{\Real^{d}}
\Big|
\frac{1}{\lambda^{-2}\eta m + \vartheta \lambda^{-4} |\xi|^2 + r(\xi)\lambda^{-8}|\xi|^4}
-
\frac{1}{\lambda^{-2}\eta m+\vartheta \lambda^{-4}|\xi|^2}\Big| |\Ft{\chi}(\ell\xi)||\Ft{f}(\xi)| \diff \xi.
\end{split}
\end{equation}
Now since $\ell=C\lambda^{-2}$, the integral is supported on $|\xi|\leq 1/C$.
For such $\xi$, (taking $C$ large enough), we have the bound
\[
\Big|
\frac{1}{\lambda^{-2}\eta m + \vartheta \lambda^{-4} |\xi|^2 + r(\xi)\lambda^{-8}|\xi|^4}
-
\frac{1}{\lambda^{-2} \eta m+\vartheta \lambda^{-4}|\xi|^2}\Big| \lsim 1.
\]
Applying this bound into~\eqref{eq:ft-bd} and integrating proves~\eqref{eq:interesting-bd}.
Finally, note that the approximations~\eqref{eq:massdiff-limit} have been shown in Proposition~\ref{pr:massThetaApprox}, so the proof is complete.
\end{proof}

\section{Proofs of the main results}
\label{sec:proofs}
We are now ready to prove the main theorems stated in the introduction.  These all follow in some way
from the following result, which is an easy corollary of Theorem~\ref{thm:diffusive-profile} and Proposition~\ref{prp:elliptic-limit}. Note that while Theorem~\ref{thm:diffusive-profile} is formulated for $R=R_{\bbZ^d_L}$ we may use Proposition~\ref{prp:Zd-to-ZdL} to pass to $R_{\bbZ^d}$.
\begin{corollary}
\label{cor:main-cor}
Let $z=E+i\eta$ with $d(E,\Sigma_d)>\eps$ and $\eta > \lambda^{2+\kappa_d-\delta}$,
and let $f\in C_c^\infty(\Real^d)$ with $\supp f\subset \{|x|\leq \frac{1}{100}L\}$.  Define $u$ to be the solution of the equation
\begin{equation}
\label{eq:uf-eq}
(-\lambda^{-4}\vartheta(z) \Delta_{\Real^d} + \lambda^{-2}\eta m(z))u = \lambda^{-2}f,
\end{equation}
where $\vartheta$ and $m$ are defined in~\eqref{eq:massdef} and~\eqref{eq:diffdef}
and satisfy the approximations~\eqref{eq:massdiff-limit}.  Then
\begin{equation}
\label{eq:combined-compare}
|\sum_{x\in\Z^d} f(x) |(R_{\bbZ^d}(z))_{0x}|^2  - u(0)| \ldom
\lambda^{-2} ( \eta^{-1}\|\nabla  f\|_{C^0} + \|\Ft{f}\|_{L^1}) + \eta^{-1} \Phi_d(z),
\end{equation}
where $\Phi_d(z)$ is defined as in Theorem~\ref{thm:diffusive-profile}.
\end{corollary}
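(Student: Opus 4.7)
The plan is to chain together Theorem~\ref{thm:diffusive-profile} and Proposition~\ref{prp:elliptic-limit} in the natural way, using Proposition~\ref{prp:Zd-to-ZdL} to pass between the torus resolvent $R$ and the full resolvent $R_{\bbZ^d}$. Since Corollary~\ref{cor:main-cor} is advertised as an easy consequence of these two results, no new technique is required---only a careful bookkeeping of the two sources of error.

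First I would set $A = \sum_{x\in\bbZ^d} f(x)\ket{x}\bra{x}$ and note that, by the support condition $\supp f\subset \{|x|\leq L/100\}$, $A$ can be regarded interchangeably as a bounded diagonal multiplication operator on $\ell^2(\bbZ^d)$ or on $\ell^2(\bbZ^d_L)$. We then have the identity $\sum_x f(x)|R_{\bbZ^d}(z)_{0x}|^2 = (R_{\bbZ^d}AR_{\bbZ^d}^*)_{00}$. Proposition~\ref{prp:Zd-to-ZdL} gives $|(R_{\bbZ^d}AR_{\bbZ^d}^*)_{00}-(RAR^*)_{00}|\lsim \|f\|_{\infty} e^{-c\lambda^{-10}}$, which is negligible. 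Applying Theorem~\ref{thm:diffusive-profile} to $A$---after normalizing by $\|f\|_\infty$ and using linearity of both sides of~\eqref{eq:RAR-flucs} in $a$---produces
\[
\bigl|(RAR^*)_{00} - (\Id - \lambda^2\tilde{\mcal{K}})^{-1}(\tilde{\mcal{K}}Sf)(0)\bigr| \ldom \eta^{-1}\Phi_d(z)\,\|f\|_\infty,
\]
and since $\|f\|_\infty\leq \|\Ft{f}\|_{L^1}$ this error is absorbed into the right-hand side of~\eqref{eq:combined-compare}.

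Next I would pass from the torus operator $\mcal{G} = (\Id-\lambda^2\tilde{\mcal{K}})^{-1}\tilde{\mcal{K}}$ to $\mcal{G}_{\bbZ^d}$ using the observation from the beginning of Section~\ref{sec:RAR} that $\tilde{K}(z;x)-\tilde{K}_{\bbZ^d}(z;x)$ is exponentially small in $\lambda^{-1}$ for $|x|\ll L$, so that $\mcal{G}Sf(0) = \mcal{G}_{\bbZ^d}Sf(0) + O(e^{-c\lambda^{-1}})$. Proposition~\ref{prp:elliptic-limit} applied to $f$ then yields
\[
\bigl|\mcal{G}_{\bbZ^d}Sf(0) - \lambda^{-2}S\mcal{L}^{-1}f(0)\bigr| \lsim \lambda^{-2}\bigl(\eta^{-1}\|\nabla f\|_{C^0} + \|\Ft{f}\|_{L^1}\bigr).
\]
The defining equation $\mcal{L}u = \lambda^{-2}f$ means $u = \lambda^{-2}\mcal{L}^{-1}f$, so $u(0) = \lambda^{-2}S\mcal{L}^{-1}f(0)$. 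Combining the three displayed approximations by the triangle inequality then gives~\eqref{eq:combined-compare}.

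Since each step is a direct invocation of a previously established result, I do not anticipate a substantive obstacle. The only bookkeeping points to watch are the absorption of $\|f\|_\infty$ into $\|\Ft{f}\|_{L^1}$ when quoting Theorem~\ref{thm:diffusive-profile}, and the verification that the torus and $\bbZ^d$ versions of the kernel $\tilde{K}$ agree well enough on $Sf$ near the origin; both are immediate from the compact support of $f$ well inside the torus together with Proposition~\ref{prp:Zd-to-ZdL}.
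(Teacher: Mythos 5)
Your proposal follows exactly the route the paper takes: chain Theorem~\ref{thm:diffusive-profile} and Proposition~\ref{prp:elliptic-limit}, using Proposition~\ref{prp:Zd-to-ZdL} (and the corresponding exponentially small discrepancy between $\tilde{K}$ and $\tilde{K}_{\bbZ^d}$) to move between the torus and $\bbZ^d$. The paper's own ``proof'' is a one-sentence pointer, and you have filled in the bookkeeping correctly, so the core argument is sound.

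One small slip: you claim the $\eta^{-1}\Phi_d(z)\|f\|_\infty$ error from Theorem~\ref{thm:diffusive-profile} is absorbed because $\|f\|_\infty\leq\|\Ft{f}\|_{L^1}$. That inequality is true but does not accomplish the absorption: it would force the error into $\eta^{-1}\Phi_d(z)\|\Ft{f}\|_{L^1}$, which is still not dominated by the paper's RHS $\lambda^{-2}\|\Ft{f}\|_{L^1}+\eta^{-1}\Phi_d(z)$ in general --- near $\eta\sim\lambda^{2+\kappa_d}$ one has $\eta^{-1}\Phi_d(z)\sim\eta^{-1}\gg\lambda^{-2}$, so $\eta^{-1}\Phi_d(z)\|\Ft{f}\|_{L^1}$ is not controlled by $\lambda^{-2}\|\Ft{f}\|_{L^1}$. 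The honest statement is that your derivation produces $\eta^{-1}\Phi_d(z)\|f\|_\infty$; this matches the paper's $\eta^{-1}\Phi_d(z)$ only when $\|f\|_\infty\lsim 1$, which is indeed how the corollary is applied (and arguably how the corollary should read, since otherwise it is not scale-consistent in $f$). So either note the implicit normalization $\|f\|_\infty\lsim 1$ or carry the $\|f\|_\infty$ factor on that term; the attempted absorption via $\|\Ft{f}\|_{L^1}$ should be dropped.
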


It is useful to scale this as $f=f_0(x/\ell)$ for
$\ell = \alpha \lambda^{-1}\eta^{-1/2}$ and $\alpha$ some free parameter.  When $\alpha\sim 1$,
$\ell$ is at the diffusive scale (the scale of $R(E+i\eta)$),
and when $\alpha\ll 1$, $f$ probes more local averages.
In this case, defining
$\tilde{u}(x) = \lambda^{-2}\ell^{-2} u(\ell x)$, the equation~\eqref{eq:uf-eq} becomes
\begin{equation}
\label{eq:rescaled-elliptic}
-\vartheta(z) \Delta_{\Real^d} \tilde{u} + \alpha^2 m(z)\tilde{u} = f_0.
\end{equation}
By~\eqref{eq:massdiff-limit} the factor $m(z)$ is order $1$.  Therefore in the regime
$\alpha \ll 1$, the mass term is negligible.

We will also need the following simple consequence of Corollary \ref{cor:main-cor}.
\begin{lemma}
\label{lem:radial-cutoff}
Let $\chi\in C_c^\infty(\Real^d)$ be a positive bump function satisfying $\chi \geq 1$ on $B_1$
and $\supp \chi \in B_2$.  Then for each $E\in \R$ with $d(E,\Sigma_d)>\eps$ and $0<r\ll L/100$, there is a continuous function $\phi:\Real_+\to\Real_+$ depending on $\chi$ and $E$ such that, for $z=E+i\eta$ with
$\eta\in (\lambda^{2+\kappa_d-\delta},\lambda^2)$,
\[
|\eta \sum_{x\in\bbZ^d} \chi(x / (r\lambda^{-1}\eta^{-1/2})) |(R_{\bbZ^d}(z))_{0x}|^2 - \phi(r)|
\ldom \Phi_d(z) + \lambda^{-1}\eta^{1/2},
\]
with implicit constants depending on $\chi$.
The function $\phi$ satisfies for small $r<1$
\begin{equation}
\label{eq:phi-smallr}
|\phi(r)| \lsim r^2
(1 + \log(r^{-1})\One_{d=2})
\end{equation}
whereas for large $r$ we have the estimate
\begin{equation}
\label{eq:phi-rho}
|\phi(r) - \rho(E)| \lsim e^{-cr}.
\end{equation}
\end{lemma}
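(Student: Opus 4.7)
The lemma follows by applying Corollary~\ref{cor:main-cor} to the dilated cutoff $f(x) := \chi(x/\ell)$ with $\ell := r\lambda^{-1}\eta^{-1/2}$. The support condition of the corollary holds because $\ell \lsim \lambda^{-4} \ll L = \lambda^{-100}$. Using the rescaling discussed after the corollary (with $\alpha = r$), the function $\tilde u(y) := \lambda^{-2}\ell^{-2} u(\ell y)$ satisfies $-\vartheta(z)\Delta\tilde u + r^2 m(z)\tilde u = \chi$ and $\eta u(0) = r^2\tilde u(0)$. I then define $\phi(r) := r^2\tilde u_0(0)$, where $\tilde u_0$ solves the analogous equation with the limiting constant coefficients $\vartheta_0 := \frac{\pi}{4d}\rho(E)^{-3}\nu(E)$ and $m_0 := \rho(E)^{-1}$ from~\eqref{eq:massdiff-limit}. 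Continuity of $\phi$ in $r$ is immediate from the Fourier formula $\tilde u_0(0) = (2\pi)^{-d}\int \widehat\chi(\xi)/(\vartheta_0|\xi|^2+r^2m_0)\,d\xi$, and elliptic stability under the $O(\lambda^c)$ perturbation of the coefficients yields $|r^2\tilde u(0) - \phi(r)| \lsim \lambda^c$.

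\textbf{Error control.} With $f$ as above, $\|\nabla f\|_{C^0} = \ell^{-1}\|\nabla\chi\|_\infty$ while $\|\widehat f\|_{L^1} = \|\widehat\chi\|_{L^1}$ by scale invariance. Multiplying the error bound of Corollary~\ref{cor:main-cor} by $\eta$ and substituting $\ell$ gives
\[
r^{-1}\lambda^{-1}\eta^{1/2}\|\nabla\chi\|_\infty + \eta\lambda^{-2}\|\widehat\chi\|_{L^1} + \Phi_d(z).
\]
Since $\eta < \lambda^2$, the middle term is dominated by $\lambda^{-1}\eta^{1/2}$, and the $r^{-1}$ in the first term is absorbed into the $\chi$-dependent implicit constant (the bound is of interest only for $r$ bounded below). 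This yields the claimed error $\Phi_d(z) + \lambda^{-1}\eta^{1/2}$.

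\textbf{Asymptotics of $\phi$.} The Fourier representation above allows a direct analysis. For small $r$, in $d\geq 3$ the integrand is dominated by an integrable function, so $\tilde u_0(0)$ stays bounded as $r\to 0$ and $\phi(r) \lsim r^2$. In $d=2$, the logarithmic infrared divergence of $|\xi|^{-2}$ is cut off at $|\xi| \sim r\sqrt{m_0/\vartheta_0}$, giving $\tilde u_0(0) \sim c\log(r^{-1})$ and hence $\phi(r) \lsim r^2\log(r^{-1})$. For large $r$, the Green's function $G_r$ of $-\vartheta_0\Delta + r^2m_0$ is a Yukawa potential, exponentially localized at scale $(r\sqrt{m_0/\vartheta_0})^{-1}$, with total mass $(r^2m_0)^{-1} = \rho(E)/r^2$. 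Taking $\chi \equiv 1$ on $B_1$,
\[
\tilde u_0(0) = \int G_r(y)\chi(y)\,dy = \frac{\rho(E)}{r^2} + \int_{|y|\geq 1} G_r(y)(\chi(y)-1)\,dy = \frac{\rho(E)}{r^2} + O(e^{-cr}/r^2),
\]
since on $\{|y|\geq 1\}$ the kernel $G_r$ is exponentially small. Multiplying by $r^2$ gives $|\phi(r) - \rho(E)| \lsim e^{-cr}$.

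\textbf{Main obstacle.} The principal subtlety is the large-$r$ asymptotic: to extract an exponential (rather than polynomial) rate of convergence, one must use that $\chi$ is \emph{constant} on a neighborhood of the origin so that $\int G_r(y)(\chi(y)-\chi(0))\,dy$ is supported where $G_r$ is exponentially small; this is the natural strengthening of the stated hypothesis $\chi\geq 1$ on $B_1$. The remaining steps---tracking the $r$-dependence through the rescaling and controlling the stability error $|r^2\tilde u(0)-r^2\tilde u_0(0)|$ via~\eqref{eq:massdiff-limit}---are routine.
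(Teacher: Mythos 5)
Your proposal is correct and essentially identical to the paper's argument: both apply Corollary~\ref{cor:main-cor} to the dilated cutoff $f(x)=\chi(x/\ell)$ with $\ell = r\lambda^{-1}\eta^{-1/2}$, identify $\phi(r)$ as an integral of $\chi$ against the Green's function of the limiting constant-coefficient operator $(-\frac{\pi}{4d}\rho^{-3}\nu\,\Delta + \rho^{-1})$, and use the Yukawa-type decay and total-mass identity of that Green's function to extract~\eqref{eq:phi-smallr} and~\eqref{eq:phi-rho}. You are somewhat more careful than the paper at two points, and both observations are justified: the error term coming from $\|\nabla f\|_{C^0}$ in fact carries an $r^{-1}$ factor, which must be absorbed into a constant depending on the lower bound for $r$ (the paper silently drops it), and the exponential rate in~\eqref{eq:phi-rho} requires that $\chi$ be \emph{identically equal to} $1$ on a neighborhood of the origin, not merely $\chi\geq 1$ on $B_1$ as literally written --- otherwise $\phi(r)\to\chi(0)\rho(E)\neq\rho(E)$, or converges only polynomially; this strengthening of the hypothesis is what is actually used in the applications of the lemma. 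The coefficient-stability step $|r^2\tilde u(0)-r^2\tilde u_0(0)|\lsim\lambda^c$, which you make explicit via~\eqref{eq:massdiff-limit}, is left implicit in the paper but is indeed needed since $\phi$ must be $\lambda$-independent.
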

 A consequence of the above bound is there exist constants $c$ and $C$  such that at most
 $1\
$99\

\begin{proof}
Let $\phi(r)$ be the function
\[
\phi(r) = \int_{\Real^d} \chi(x/r) G(x) \diff x,
\]
where $G$ is the Green's function for the operator
$(-\frac{\pi}{4d}\rho^{-3}(E)\nu(E) \Delta_{\Real^d} + \rho^{-1}(E))$.  The bound~\eqref{eq:phi-smallr} follows from a comparison to the Green's function for the Laplacian. Then the bound~\eqref{eq:phi-rho} follows from the fact that $G$ integrates to exactly
$\rho(E)$, and decays exponentially at length scale $\sqrt{\nu(E)}/\rho(E)$.

Now applying
Corollary~\ref{cor:main-cor} with $f(x) = \chi(x/(r \lambda^{-1}\eta^{-1/2}))$, we obtain
\begin{align*}
|\eta \sum_{x\in \bbZ^d} |\chi(x/(r\lambda^{-1}\eta^{-1/2}))
|R_{0x}(E+i\eta)|^2 - \phi(r)|
&\ldom (\lambda^{-1} \eta^{1/2}) \|\nabla \chi\|_{C^0} +
\lambda^{-2}\eta \|\chi\|_{C^0}
+ \Phi_d(z)\\
&\lsim \lambda^{-1} \eta^{1/2}+\Phi_d(z),
\end{align*}
as desired.
\end{proof}

\subsection{Proof of Theorem~\ref{thm:main-prop}}

The key identity behind the proof of Theorem~\ref{thm:main-prop} is the following consequence
of the Plancherel theorem, valid for any $\psi\in \ell^2(\Z^d)$ and $\eta>0$
\begin{equation}
\label{eq:parseval-again}
\int_0^\infty e^{-2\eta t} |e^{-itH}\psi(x)|^2 \diff t
= (2\pi)^{-1}\int_{-\infty}^\infty |R(E+i\eta)\psi (x)|^2 \diff E,
\end{equation}
see, e.g., \cite{reed1978iv}[Sec. XIII.7].
In particular, for any $f\geq 0$ we have
\[
\frac1T\int_0^T \sum_{x\in\Z^d} f(x)|(e^{-itH})_{0,x}|^2
\lsim \frac1T \int_{-\infty}^\infty \sum_{x\in\Z^d} f(x) |R(E+i T^{-1})_{0x}|^2\diff E.
\]
Both~\eqref{eq:averaged-prop-in-ball} and~\eqref{eq:lower-bound-prop} follow from bounding the
right hand side above for appropriate choices of $f$.   In fact, we will show that for
an appropriate choice of constants $c$ and $C$ (depending on the energy $E$ and the error
threshold $\delta$),
\begin{align}
\label{eq:not-in-little-ball}
\eta \int_{-\infty}^\infty\sum_{|x|\leq c\lambda^{-1}\eta^{-1/2}} |R(E+i\eta)_{0x}|^2 \diff E
\leq \frac12 \\
\label{eq:in-big-ball}
\eta \int_{-\infty}^\infty
\sum_{|x|\geq C\lambda^{-1}T^{1/2}} |(e^{-itH})_{0x}|^2 \leq \delta.
\end{align}
Then~\eqref{eq:averaged-prop-in-ball} follows from~\eqref{eq:not-in-little-ball} and the identity
\[
\frac1T\int_0^T \sum_{x\in\bbZ^d} |(e^{-itH})_{0x} |^2 = 1,
\]
whereas~\eqref{eq:in-big-ball} directly implies~\eqref{eq:lower-bound-prop}.

By taking $r\ll 1$ in Lemma~\ref{lem:radial-cutoff}, we find that for any $E$ with $d(E,\Sigma_d)>\eps$,
\begin{align*}
\eta \sum_{|x|\leq c\lambda^{-1}\eta^{-1/2}} |R(E+i\eta)_{0x}|^2 \diff E
\leq \frac12
\end{align*}
with probability at least $1-C\lambda^{10000}$. Similarly, for such $E$, because $\eta\sum_{x\in\Z^d}|(R(z))_{0x}|^2=\rho(E)+O(\lambda^c)$ with high probability by the Ward identity and the local law Theorem~\ref{thm:local-law}
, choosing $r\gg 1$ yields
\begin{align*}
\eta \int_{-\infty}^\infty
\sum_{|x|\geq C\lambda^{-1}T^{1/2}} |(R(E+i\eta))_{0x}|^2 \leq \delta,
\end{align*}
with high probability.
Moreover, the suprema over $E$ with $d(E,\Sigma_d)>\eps$ in both expressions is similarly bounded with probability $1-C\lambda^{1000}$ because the resolvent entries are $\eta^{-2}$-Lipschitz.

With this in mind, the proof of Theorem~\ref{thm:main-prop} now reduces to the following claim:
\begin{lemma}
\label{lem:integral-to-max}
For any positive $f\in\ell^\infty(\bbZ^d)$ and $\eps>0$ we have
\[
\eta \int_{-\infty}^\infty \sum_x f(x) |R(E+i\eta)_{0x}|^2\diff E
\lsim \eps^{1/5} \|f\|_{\ell^\infty} + \Big(\sup_{\substack{E\in [-2d,2d]\\ d(E,\Sigma_d)>\eps}}
\eta \sum_x f(x) |R(E+i\eta)_{0x}|^2 \Big).
\]
\end{lemma}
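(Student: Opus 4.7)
My plan is to apply the Ward identity \eqref{eq:explicit-Ward} to get the pointwise bound
\[
g(E) := \eta\sum_{x\in\bbZ^d} f(x)\,|R(E+i\eta)_{0x}|^2 \leq \|f\|_{\ell^\infty}\,\Impt R_{00}(E+i\eta),
\]
together with the spectral-theorem identity $\int_\Real \Impt R_{00}(E+i\eta)\diff E = \pi$. Setting $G := [-2d,2d]\cap\{E : d(E,\Sigma_d)>\eps\}$ and $B := \Real\setminus G$, the good part is immediately absorbed into the supremum:
\[
\int_G g\,\diff E \leq 4d \cdot \sup_{E\in G} g(E).
\]
What remains is to prove $\int_B \Impt R_{00}(E+i\eta)\diff E \lsim \eps^{1/5}$.

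For $E\in B$ with $|E|\geq 2d+\eps$, I would reduce to the truncated resolvent via Proposition~\ref{prp:Zd-to-ZdL} and use the elementary high-probability bound $\|V\|_{\ell^\infty(\bbZ^d_L)}\lsim\sqrt{\log L}\ll 1/\lambda$ to get $\|R_L(E+i\eta)\|_{\mathrm{op}}\leq 1/d(E,\spec H_L)\lsim 1/|E|$, giving a contribution of order $\eta\ll \eps^{1/5}$. For $E\in B\cap[-2d,2d]$, which is a union of $\eps$-neighborhoods of the finitely many critical values $E^*\in\Sigma_d$, the Poisson representation $\Impt R_{00}(E+i\eta) = \pi (P_\eta\ast\mu_0)(E)$ and Fubini reduce the estimate, modulo terms of order $\eta/\eps$, to bounding the spectral mass $\mu_0([E^*-\eps,E^*+\eps])$ for each such $E^*$. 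The elementary Poisson lower bound
\[
\Impt R_{00}(E^*+i\eps)\;\geq\;\frac{\mu_0([E^*-\eps,E^*+\eps])}{2\eps}
\]
then reduces everything to a bound on $\Impt R_{00}(E^*+i\eps)$ at the bad real part $E^*$.

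The main obstacle is that Theorem~\ref{thm:local-law} is not stated at $E^*\in\Sigma_d$. Since $R_{00}(z)$ is Herglotz on $\bbH$, the Harnack inequality for positive harmonic functions on the upper half-plane gives $\Impt R_{00}(E^*+i\eps)\lsim \Impt R_{00}(z_0)$ with $z_0:=(E^*+2\eps)+i\eps$, whose real part $E^*+2\eps$ is in $G$. At this $z_0$ the imaginary part $\eps$ is of unit order relative to $\lambda$, so the error terms in Theorem~\ref{thm:local-law} are negligibly small; combined with the approximation $\Impt M_{00}\approx\pi\rho$ from Proposition~\ref{pr:massThetaApprox}, this yields $\Impt R_{00}(z_0)\lsim 1$ in $d\geq 3$ and $\Impt R_{00}(z_0)\lsim |\log\eps|$ at the log-singular point $E^*=0$ in $d=2$. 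Summing the resulting bound $\mu_0([E^*-\eps,E^*+\eps])\lsim \eps|\log\eps|$ over the finitely many critical values of $\omega$ yields $\int_B\Impt R_{00}\,\diff E\lsim \eps|\log\eps| \leq \eps^{1/5}$ for $\eps$ sufficiently small, completing the proof. The delicate point that the $\eps^{1/5}$ factor on the right hand side be $\eta$-independent is exactly what forces us to bound $\Impt R_{00}$ at the auxiliary fixed scale $\Impt z=\eps$ rather than at $\Impt z=\eta$, where only the pointwise bound $\Impt R_{00}\leq\eta^{-1}$ is available.
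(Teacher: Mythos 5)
Your proposal is correct, but takes a genuinely different route from the paper. The paper decomposes $\delta_0 = \psi^{\rm good}+\psi^{\rm small}$ using a smooth spectral projection of $H$ at the intermediate scale $\eps^{1/4}$, bounds $\|\psi^{\rm small}\|$ by combining Proposition~\ref{prp:zero-bd} (the spectral-projection comparison with the free Laplacian) with the free density-of-states estimate of Lemma~\ref{lem:free-dos}, and then on the bad energy set uses that $\psi^{\rm good}$ is spectrally supported at distance $\gtrsim\eps^{1/4}$ from those energies, so that $\|R(E+i\eta)\psi^{\rm good}\|\lsim\eps^{-1/4}$ there. You instead reduce at the outset via the Ward identity to $\Impt R_{00}$, exploit the Herglotz/Poisson-kernel structure of the diagonal entry and $\int_\Real\Impt R_{00}\,\diff E=\pi$, reduce the bad-set contribution to the spectral masses $\mu_0$ near the critical values, and control those masses via Harnack at the fixed auxiliary height $\Impt z=\eps$ together with Theorem~\ref{thm:local-law}. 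Both routes work, and both ultimately rest on the same underlying input (the $\ell^2$ spectral comparison of~\cite{BDH}), but the paper uses Proposition~\ref{prp:zero-bd} directly while you reach for the heavier, more downstream Theorem~\ref{thm:local-law}; there is no circularity, and the $\eps$-dependence of the implicit constants is no worse in your version. Your argument is arguably cleaner in that it avoids the two-scale $\eps$ versus $\eps^{1/4}$ cutoff, at the cost of invoking Harnack and a stronger local law.

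A few small points you should make explicit. For the endpoint critical values $E^*=\pm 2d$ the shift $E^*\pm 2\eps$ must be taken \emph{into} the interval $[-2d,2d]$, since $\Sigma_d$ contains $\Real\setminus[-2d,2d]$; the sign therefore depends on $E^*$. The Harnack step is legitimate with a universal constant because the hyperbolic distance between $E^*+i\eps$ and $(E^*\pm 2\eps)+i\eps$ is $\operatorname{arccosh}(3)$, independent of $\eps$. Finally, your Fubini/Poisson reduction in fact produces $\mu_0([E^*-2\eps,E^*+2\eps])$ rather than $\mu_0([E^*-\eps,E^*+\eps])$, modulo an $O(\eta/\eps)$ error; this only changes the scale at which you place the auxiliary point and does not affect the conclusion.
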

\begin{proof}[Proof of Lemma~\ref{lem:integral-to-max}]
Let $\psi_0$ be the Kronecker delta at the origin.  We write
\[
\psi_0 = \Pi^{\rm good} \psi_0 + \Pi^{\rm small}\psi_0
=: \psi^{\rm good} + \psi^{\rm small}.
\]
where $\Pi^{\rm good}$ is a smooth spectral projection to the set
\[
\{E \in [-2d,2d] \mid d(E,\Sigma_d) > \eps^{1/4}\}.
\]
The spectral projection $\Pi^{\rm small}$ is smooth to scale $\eps^{1/4}$, so by Proposition~\ref{prp:zero-bd} we have
\[
\|\Pi^{\rm small}\psi_0 - \Pi^{\rm small}(\Delta)\psi_0\| \lsim (\log \lambda^{-1})^C
\lambda \eps^{-1/8} \ll \eps^{1/5},
\]
where $\Pi^{\rm small}(\Delta_{\Z^d})$ is the corresponding
projection of the free Laplacian.  Then by the density of states estimate in Lemma~\ref{lem:free-dos}, we can conclude that
\[
\|\Pi^{\rm small}\psi_0\| \lsim \eps^{1/4}|\log(\eps^{1/4})| \ll \eps^{1/5}.
\]
It therefore suffices to show that
\[
\eta \sum_x \int_{-\infty}^\infty f(x) |R(E+i\eta)\psi^{\rm good}(x)|^2 \diff E
\lsim
\eps^{1/5}\|f\|_{\ell^\infty}
+ \Big(\sup_{\substack{E\in [-2d,2d]\\ d(E,\Sigma_d)>\eps}}
\eta \sum_x f(x) |R(E+i\eta)_{0x}|^2 \Big).
\]
Now let $A := \Real \setminus \{E \in [-2d,2d] \mid d(E,\Sigma_d) > \eps\}$.  First we estimate
\[
\eta \int_{\Real \setminus A} \|R(E+i\eta)\psi^{\rm good}\|^2 \diff E \lsim \eps^{1/5}.
\]
Indeed,
\begin{align*}
\eta \int_{\Real\setminus A} \|R(E+i\eta)\psi^{\rm good}\|^2 \diff E
\leq &\eta \int_{\Real\setminus[-2d,2d]} \|R(E+i\eta)\psi^{\rm good}\|^2 \diff E \\
&+ \eta \sum_{E_j\in\Sigma_d}\int_{E_j-\eps}^{E_j+\eps} \|R(E+i\eta)\psi^{\rm good}\|^2 \diff E.
\end{align*}
For the infinite integral we use the following bound for $E\in \Real\setminus [-2d,2d]$:
\[
\|R(E+i\eta)\psi^{\rm good}\|^2 \lsim d(E,[-2d+\eps^{1/4},2d-\eps^{1/4}])^{-2},
\]
so that
\[
\eta \int_{\Real\setminus[-2d,2d]} \|R(E+i\eta)\psi^{\rm good}\|^2 \diff E \lsim \eta \eps^{-1/4}
\lsim \eps^{1/5} ,
\]
the latter bound holding for $\eta \lsim \eps^{1/2}$ for example.
For the finite integrals on $E\in [E_j-\eps,E_j+\eps]$ with $E_j\in\Sigma_d$ we use
\[
\|R(E+i\eta) \psi^{\rm good}\| \lsim \eps^{-1/4},
\]
so that
\[
\eta \int_{E_j-\eps}^{E_j+\eps} \|R(E+i\eta)\psi^{\rm good}\|^2 \diff E \lsim \eta \eps^{1/2}.
\]
What remains is to prove the bound
\[
\int_A \eta\sum_x f(x) |R(E+i\eta)\psi^{\rm good}(x)|^2 \diff E  \lsim
\eps^{1/5}\|f\|_{\ell^\infty}
+ \Big(\sup_{\substack{E\in [-2d,2d]\\ d(E,\Sigma_d)>\eps}}
\eta \sum_x f(x) |R(E+i\eta)_{0x}|^2 \Big).
\]
Now again using $\psi_0 = \psi^{\rm good} + \psi^{\rm bad}$ with $\|\psi^{\rm bad}\|\lsim \eps^{1/5}$
we have
\begin{align*}
\label{eq:final-calc}
\int_A \eta\sum_x f(x) |R(E+i\eta)\psi^{\rm good}(x)|^2 \diff E
&\lsim \eps^{1/5}\|f\|_{\ell^\infty} +
\int_A \eta \sum_x f(x) |R(E+i\eta)_{0x}|^2 \diff E. \\
&\lsim \eps^{1/5}\|f\|_{\ell^\infty} +
4d \sup_{E\in A} \eta \sum_x f(x)|R(E+i\eta)_{0x}|^2,
\end{align*}
as desired.

\end{proof}

\subsection{Proof of Theorem~\ref{thm:scaling}}
\label{sec:scaling-pf}

Fix $f_0\in C_c^\infty$, $\kappa<\kappa_d$, and $z\in[-2d,2d]\setminus \Sigma_d$.
Then set $z=E+i\eta$ with $\eta=\lambda^{2+\kappa}$, and
$f(x) = f_0(x/\ell)$ for $\ell = \lambda^c \lambda^{-1}\eta^{-1/2}$
for small $c$ to be specified momentarily.
Then by Corollary~\ref{cor:main-cor} combined with~\eqref{eq:rescaled-elliptic}, we have
\begin{equation}
\label{eq:weevil}
\begin{split}
|\lambda^{-2} \ell^{-2} \sum_x f(x/\ell) |R_{0x}|^2  - \tilde{u}(0)|
&\ldom \lambda^{-4}\ell^{-2}
(\eta^{-1} \ell^{-1} \|\nabla f_0\|_{C^0} + \|\Ft{f_0}\|_{L^1}) \\
&\qquad + \lambda^{-2}\ell^{-2} \eta^{-1}\Phi_d(E+i\eta),
\end{split}
\end{equation}
where $\tilde{u}$ solves
\[
(-\vartheta(E+i\eta) \Delta_{\Real^d} + \lambda^{2c} m(E+i\eta)) \tilde{u} = f_0.
\]
Note that $\tilde{u}$ depends on $\lambda$ implicitly. Plugging in the definition of $\eta$ and $\ell$ in~\eqref{eq:weevil} we obtain
\begin{equation}
\label{eq:f-scaling}
|\lambda^{-2} \ell^{-2} \sum_x f(x/\ell) |R_{0x}|^2  - \tilde{u}(0)|
\ldom  \lambda^{\kappa/2-3c} + \lambda^{-2c} \Phi_d(E+i\eta),
\end{equation}
which is bounded by $\lambda^{\delta'}$ for some $\delta'>0$, provided that $c$ is small enough.

Now $\tilde{u}$ can also be written as follows:
\begin{align*}
\tilde{u}
&= (-\vartheta(z) \Delta_{\Real^d} + \lambda^{2c} m(z))^{-1}f_0 \\
&= \vartheta(z)^{-1}(- \Delta_{\Real^d} + \lambda^{2c}\vartheta(z)^{-1} m(z))^{-1}f_0.
\end{align*}
In $d\geq 3$ we we have that $(-\Delta + \eps)^{-1}f_0 \to -\Delta^{-1} f_0$ pointwise as $\eps\to 0^+$.Therefore also
using~\eqref{eq:massdiff-limit} we have
\[
\lim_{\lambda \to 0} \tilde{u}(0) = c_d \beta_E \int f_0(y) |y|^{2-d}\diff y,
\]
where
\[
\beta_E := \frac{4d}{\pi}\rho^3(E) \nu(E)^{-1}
\]
and $c_d$ is such that\footnote{Alternatively, $c_d^{-1}$ is the surface area of the unit sphere.} $\Delta (c_d|y|^{2-d}) = \delta_0$.  The limit
\[
\lim_{\lambda\to 0} \lambda^{-2} \ell^{-2}
\sum_{x\in\bbZ^d}f(x/\ell) |R_{0x}|^2
= c_d \beta_E \int f_0(y) |y|^{2-d}\diff y
\]
now follows from a standard argument involving the Borel-Cantelli lemma and the continuity of the resolvent in $\lambda$.
In $d=2$ the \textit{gradient} of the Green's function converges pointwise, so the same result holds
provided $f_0=\Div \vec{v}$ and one replaces $|y|^{2-d}$ by $-\log(|y|)$.

\subsection{Proof of Theorem~\ref{thm:deloc}}
\label{sec:deloc-pf}
Theorem~\ref{thm:deloc} follows from Lemma~\ref{lem:radial-cutoff} and the following
deterministic lemma, which is based on ~\cite[Proposition 7.1]{Y3Teq}. Note that the following lemma proves a stronger statement than Theorem \ref{thm:deloc}
as it describes eigenfunctions localized to \textit{specific} regions of space,
as well as specific energies.

For any self-adjoint Hamiltonian $\in \mathcal{B}(\ell^2(\mathbb{Z}^d_L))$, we write
$(\psi_j^A,E_j^A)_{j\in[1,L^d]}$ to be an orthonormal basis of eigenfunctions for $H$ which satisfy $A\psi_j^A = E_j^A\psi_j^A$.
Further, for any $r>0$ and $x_0\in \mathbb{Z}^d_L$
we define the set
\[
\mathfrak{L}(A,r,x_0) := \{E_j^A \in \sigma(A) \mid
\|\psi_{j}^A \|_{\ell^2 (B_{r}(x_0))} \geq 1- r^{-4d}\}.
\]
Note that $\mathfrak{L}(A,r,x_0)$ contains eigenfunctions of
$A$ exponentially localized at scale $\ll r$ near $x_0$.

\begin{lemma}
\label{lem:deterministic-localization-length}
Let $E_0\in \R$, $\eta,\delta>0$ and $x_0\in \mathbb{Z}^d_L$. Suppose for some self-adjoint $A\in \mathcal{B}(\ell^2(\mathbb{Z}^d_L))$
\begin{align*}
\sum_{x\in B_r(x_0)}|G(E+i\eta)_{x_0,x}|^2 &\leq \delta \eta^{-1} \\
\sup_{\substack{|x-x_0|\leq r\\E\in [E_0-\eta,E_0+\eta]}}|G(E+i\eta)_{xx}| &:= D,
\end{align*}
where $G(z)=(A-z)^{-1}$ is the resolvent of $A$.
Then
\[
|\mathfrak{L}(A,r,x_0) \cap [E_0-\eta,E_0+\eta]| \lsim (\delta \eta r^d + r^{-2d}\delta^{-1} \eta) D.
\]
\end{lemma}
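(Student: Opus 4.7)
The plan is to estimate $N := |\mathfrak{L}(A,r,x_0)\cap I|$, where $I := [E_0 - \eta, E_0 + \eta]$, by combining two complementary pieces of information extracted from the hypotheses via the spectral decomposition of $G(z)$ at $z := E_0 + i\eta$.  The argument follows the template of~\cite[Prop.~7.1]{Y3Teq}.  Writing $\{(\psi_j, E_j)\}$ for an orthonormal eigenbasis of $A$ and $\tilde\psi_j := \mathbf{1}_{B_r(x_0)}\psi_j$, I would first note that for each $j \in \mathfrak{L}(A,r,x_0)$ one has $\|\psi_j - \tilde\psi_j\| \leq r^{-2d}$, so the family $\{\tilde\psi_j\}_{j\in\mathfrak{L}(A,r,x_0)}$ is almost orthonormal in $\ell^2(B_r(x_0))$, with Gram matrix within $O(r^{-2d})$ of the identity.

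The first ingredient is a Ward-type inequality: from the spectral expansion
\[
\Impt G(z)_{xx} = \sum_j \frac{\eta|\psi_j(x)|^2}{(E_j - E_0)^2 + \eta^2}
\]
together with the elementary bound $\eta/((E_j - E_0)^2 + \eta^2) \geq (2\eta)^{-1}$ for $E_j \in I$, summing over $x \in B_r(x_0)$ yields $\sum_{j: E_j \in I}\|\tilde\psi_j\|^2 \leq 2\eta r^d D$.  Since $\|\tilde\psi_j\|^2 \geq 1/2$ for each $j\in\mathfrak{L}(A,r,x_0)$, this already produces the baseline bound $N \lesssim \eta r^d D$, which corresponds to the first term of the claim in the regime $\delta \asymp 1$.

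The $\delta$-hypothesis then furnishes a pointwise bound on $|\psi_j(x_0)|$ for $j \in \mathfrak{L}(A,r,x_0) \cap I$.  Writing $\psi_j(x_0) = (E_j - z)\langle \psi_j, G(z)\delta_{x_0}\rangle$ and decomposing $\psi_j = \tilde\psi_j + (\psi_j - \tilde\psi_j)$, Cauchy--Schwarz gives
\[
|\langle \psi_j, G(z)\delta_{x_0}\rangle|
\leq \|\tilde\psi_j\|\,\|\mathbf{1}_{B_r(x_0)}G(z)\delta_{x_0}\| + \|\psi_j - \tilde\psi_j\|\,\|G(z)\delta_{x_0}\|
\leq \sqrt{\delta\eta^{-1}} + r^{-2d}\sqrt{D\eta^{-1}},
\]
using the $\delta$-hypothesis on the inner piece and the full Ward identity $\|G(z)\delta_{x_0}\|^2 = \eta^{-1}\Impt G(z)_{x_0,x_0} \leq D\eta^{-1}$ on the outer piece.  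Multiplying by $|E_j - z| \leq 2\eta$ yields the individual bound $|\psi_j(x_0)|^2 \lesssim \delta\eta + r^{-4d}\eta D$.

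Finally, to produce the full two-term estimate I would combine these via a Bessel-type argument.  Setting $v := \mathbf{1}_{B_r(x_0)} G(z) \delta_{x_0}$, so that $\|v\|^2 \leq \delta\eta^{-1}$, Bessel's inequality against the almost-orthonormal family $\{\tilde\psi_j\}_{j \in \mathfrak{L}(A,r,x_0)\cap I}$ gives $\sum_{j}|\langle \tilde\psi_j, v\rangle|^2 \lesssim \delta\eta^{-1}$, while on the other hand $\langle \tilde\psi_j, v\rangle = \psi_j(x_0)/(E_j - z) + O(r^{-2d}\sqrt{D\eta^{-1}})$ by the splitting above.  Combining with the trivial bound from the first step and balancing between the two competing contributions yields $N \lesssim (\delta\eta r^d + \delta^{-1}r^{-2d}\eta)D$.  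The hard part will be the careful bookkeeping of the two error contributions: the cross terms coming from the $r^{-2d}$ localization error and the $\sqrt{\delta/\eta}$ inner-product bound must be arranged so that the two summands $\delta\eta r^d D$ and $\delta^{-1}r^{-2d}\eta D$ emerge exactly as stated rather than collapsing into a single optimized power.
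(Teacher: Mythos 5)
There is a genuine gap at the final step: your Bessel argument tests the \emph{single} vector $v := \One_{B_r(x_0)}G(z)\delta_{x_0}$ against the almost-orthonormal family $\{\tilde\psi_j\}_{j\in\mathfrak{L}\cap I}$, and what this produces is a bound on $\sum_j |\psi_j(x_0)|^2 / |E_j-z|^2$, i.e. on the aggregate mass of the localized eigenfunctions \emph{at the single site $x_0$}. This cannot be converted into a bound on $N = |\mathfrak{L}(A,r,x_0)\cap I|$, because there is no lower bound available on $|\psi_j(x_0)|$ for an individual $j$: an eigenfunction can be heavily localized in $B_r(x_0)$ and still vanish, or nearly vanish, exactly at $x_0$. (Notice that your first, ``trivial'' step does correctly sum over $x\in B_r(x_0)$, which is precisely why it succeeds in producing a count: there the lower bound $\|\tilde\psi_j\|_{\ell^2(B_r(x_0))}^2 \geq 1/2$, a consequence of the localization hypothesis, stands in for a lower bound on the eigenfunction. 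Your Bessel step has no such substitute.) The phrase ``balancing between the two competing contributions'' papers over this: the two quantities you have in hand both concern $|\psi_j(x_0)|^2$ and cannot be ``balanced'' into a count.

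To make the plan work you would need to run the Bessel computation for \emph{every} $x\in B_r(x_0)$ and sum, using $\sum_{x\in B_r(x_0)}|\psi_j(x)|^2\geq 1/2$ to recover $N$; this is exactly what the paper does, phrased as a two-sided estimate of $\sum_{x\in B_r(x_0)}\|(1-\One_{x_0,r})G\delta_x\|^2$. Once you sum over $x$, the Cauchy--Schwarz split with the free parameter $\alpha$ (your Young's inequality, with $\alpha$ ultimately set to $\delta$) produces the two-term bound; and the apparently quadratic dependence on $N$ that appears in the $r^{-4d}$ error term is resolved by plugging in the trivial bound $N\lesssim D\eta r^d$ once inside that error, not by absorption. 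I would also flag that the summed version implicitly uses $\sum_{y\in B_r(x_0)}|G_{x,y}|^2\leq \delta\eta^{-1}$ for \emph{all} $x\in B_r(x_0)$, not just $x=x_0$; the paper does this too, so you should state (or verify at the point of application) the stronger form of the hypothesis.
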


Note the phase space volume of the region $\{(x,\xi) \mid |x-x_0| \leq r, |\omega(\xi)-E|\leq \eta\}$ is about
$\eta r^d$ so one expects the upper bound
\[
|\mcal{L}(H,r,x_0)\cap [E_0-\eta,E_0+\eta]| \lsim \eta r^d.
\]
In fact, the condition $|G_{xx}(E+i\eta)| \leq D$ implies
$|\mathfrak{L}(A,r,x_0)\cap [E_0-\eta,E_0+\eta]| \lsim D \eta r^d$ since
\begin{equation}
\label{eq:basic-B-bound}
\begin{aligned}
|\mathfrak{L}(A,r,x_0)\cap [E_0-\eta,E_0+\eta]|
& \lsim \sum_{E_j^H\in \mathfrak{L}\cap[E_0-\eta,E_0+\eta]}\sum_{x\in B_{r}(x_0)}|\psi_j^H(x)|^2\\
& \lsim \eta \sum_{x\in B_r(x_0)}\Impt G_{xx}(E_0 + i\eta)\\
& \lsim D \eta r^{d}.
\end{aligned}
\end{equation}

\begin{proof}[Proof of Theorem \ref{thm:deloc} ]

Recall that $H_L=\Delta_L + \lambda V$ and let $\eta=\lambda^{2+\kappa_d-\delta}$, $r=\lambda^{-1}\eta^{-1/2}$, and $E_0\in[-2d,2d]$ with $d(E_0,\Sigma_d)>\eps$.  Furthermore, define
$$\mathfrak{L}(H_L,r):= \{E_j^{H_L}\ |\ \exists\  x_0\in \mathbb{Z}^d_L \text{ such that } \|\psi_j^H\|_{\ell^2(B_r(x_0))}\geq 1-r^{-4d}\}.$$

By Lemma~\ref{lem:radial-cutoff} and Proposition~\ref{prp:Zd-to-ZdL}, we have that
\[
\sum_{x\in B_r(x_0)}|R(E+i\eta)_{x_0,x}|^2
\ldom \lambda^{c\delta} \eta^{-1}
\]
for $\eta > \lambda^{2+\kappa_d-\delta}$ and $r <\lambda^\delta (\lambda^{-1}\eta^{-1/2})$.
Therefore we can apply Lemma~\ref{lem:deterministic-localization-length} to see that with  probability at least $1-\lambda^{N}$ for some large $N$,
\[
|\mcal{L}(H,x_0,r)\cap [E_0-\eps,E_0+\eps]| \lsim \lambda^{c\delta} \eps r^d
\]
for all $x_0\in\bbZ^d_L$.

Now letting $\{x_j\}$ form a $2r$-net of $\bbZ^d_L$, we have
\[
\mathfrak{L}(H,r) = \bigcup_j \mathfrak{L}(H,x_j,5r),
\]
and thus with probability at least $1-\lambda^{N}r/L$ we have
\[
|\mathfrak{L}(H,r)| \lsim \lambda^{c\delta}\eps L^d.
\]
On the other hand, by  Theorem~\ref{prp:zero-bd}, the bound
\[
|\sigma(H) \cap [E-\eps,E+\eps]| \gtrsim \eps L^d
\]
holds with probability $1-\lambda^N$ so long as $\lambda\ll \eps$. Taking $N$ sufficiently large now completes the proof.
\end{proof}

We now conclude with the proof of Lemma~\ref{lem:deterministic-localization-length}.

\begin{proof}[Proof of Lemma \ref{lem:deterministic-localization-length}]
We write $G$ for operator $G(E+i\eta)$, and define the set
$$\mathfrak{B}:= \mcal{L}(H,r,x_0)\cap[E_0-\eta,E_0+\eta],$$
and define the operator
\[
\One_{x_0,r} :=
\sum_{y\in B_r(x_0)}
\ket{y}\bra{y}.
\]
With this definition,
\[
\sum_{x\in B_r(x_0)}
|G(E+i\eta)_{x_0,x}|^2
= (G(E+i\eta) \One_{x_0,r}G(E+i\eta)^*)_{xx}.
\]
The idea is to estimate
$$\sum_{x\in B_{r}(x_0)} \|(1-\One_{x_0,r})G\delta_x\|^2 = \sum_{x\in B_r(x_0)} \braket{x|G(1-\One_{x_0,r})G^*|x},$$
in two ways.
First the Ward identity and our bound on $\braket{x|G \One_{x_0,r} G^*|x}$ implies
\begin{equation*}
\braket{x|G(1-\One_{x_0,r})G^*|x}
= \sum_{y}|G_{xy}|^2 -  \braket{x|G\One_{x_0,r}G^*|x}
\geq \frac{\Impt G_{xx}}{\eta}- \delta \eta^{-1}
\end{equation*}
so that
\begin{equation}
\label{eq:one-bd}
\sum_{x\in B_r(x_0)}
\braket{x|G(1-\One_{x_0,r})G^*|x} \geq \eta^{-1} \sum_{x\in B_r(x_0)} \Impt G_{xx} -  |B_r(x_0)| \delta\eta^{-1}.
\end{equation}

On the other hand letting $z:= E + i\eta$ we have for each $x$ the upper bound
\begin{equation}
\label{eq:two-bd}
\begin{aligned}
\|(1-\One_{x_0,r}) G\delta_x\|^2
&= \|(1-\One_{x_0,r}) \sum_j \frac{1}{E_j-z} \psi_j(x) \psi_j\|^2 \\
&\leq (1+\alpha) \|\sum_{j\not\in\mcal{B}} \frac{1}{E_j-z} \psi_j(x) \psi_j\|^2 \\
&\qquad + (1+\alpha^{-1})\|(1-\One_{x_0,r}) \sum_{j\in\mcal{B}} \frac{1}{E_j-z} \psi_j(x) \psi_j\|^2 \\
&=: S_1(x) + S_2(x),
\end{aligned}
\end{equation}
where $\alpha\in (0,1)$ is a small parameter to be chosen later. We estimate
$S_2$  using the definition of $\mathfrak{B}$:
\begin{align*}
\sum_{x\in B_r(x_0)} S_2(x)
&\leq (1+\alpha^{-1}) |\mathfrak{B}| \sum_{j\in\mcal{B}} \frac{1}{|E_j-z|^2} \sum_{x\in B_r(x_0)}|\psi_j(x)|^2
\|(1-\One_{x_0,r})\psi_j\|^2 \\
&\lsim \alpha^{-1} |\mathfrak{B}| \eta^{-2} r^{-4d}\sum_{j\in\mathfrak{B}}  \sum_{x\in B_r(x_0)}|\psi_j(x)|^2\\\\
& \lsim \alpha^{-1}r^{-3d}\eta^{-2} |\mathfrak{B}|\\
& \lsim D\alpha^{-1}r^{-2d}\eta^{-1} .
\end{align*}
We used \eqref{eq:basic-B-bound} to pass to the last line, and $\sum_{j\in \mathfrak{B}}|\psi_j(x)|^2 \leq 1$ for each $x$ to pass to the second to last line.
We can upper bound $S_1$ using orthogonality of eigenfunctions and the Ward identity to get
\begin{align*}
\sum_{x\in B_r(x_0)} S_1(x)
&\leq (1+\alpha) \sum_{x\in B_r(x_0)}
\| \sum_{j\not\in \mathfrak{B}} \frac{1}{E_j-z} \psi_j(x) \psi_j\|^2 \\
&\leq (1+\alpha) \sum_{x\in B_r(x_0)}
\sum_{j\not\in\mathfrak{B}} \frac{1}{|E_j-z|^2} |\psi_j(x)|^2 \\
&= (1+\alpha) \eta^{-1}\sum_{x\in B_r(x_0)} \Impt G_{xx}
- (1+\alpha) \sum_{j\in \mathfrak{B}} \frac{1}{|E_j-z|^2}\sum_{x\in B_r(x_0)} |\psi_j(x)|^2  \\
&\leq \eta^{-1} \sum_{x\in B_r(x_0)} \Impt G_{xx}
+ CD\alpha \eta^{-1} |B_r(x_0)|
- \frac{1+\alpha}{4}\eta^{-2} |\mathfrak{B}|.
\end{align*}

Combining the above two estimates with \eqref{eq:two-bd} gives
\begin{align*}
\sum_{x\in B_r(x_0)} \|(1-\One_{x_0,r})G\delta_x\|^2
\leq \eta^{-1}\sum_{x\in B_r(x_0)}\Impt G_{xx} - \frac{1+\alpha}{4}\eta^{-2}|\mathcal{B}| + CD\alpha^{-1}\eta^{-1}r^{-2d} + CD \alpha \eta^{-1}r^d.
\end{align*}
Subtracting this from \eqref{eq:one-bd} and rearranging gives
\begin{align*}
\frac{1+\alpha}{4}\eta^{-2}|\mathfrak{B}|\lsim D\alpha^{-1}\eta^{-1}r^{-2d} + (\alpha+\delta) D\eta^{-1}r^d
\end{align*}
Taking $\alpha = \delta$ proves the result.
\end{proof}

\appendix

\section{Elementary estimates for the free resolvent}

In this section we collect some elementary estimates about the resolvent of
the Laplacian on the lattice $(\Delta_{\bbZ^d}-z)^{-1}$ and on the torus $(\Delta_{\bbZ^d_L}-z)^{-1}$
which are used throughout the paper.  Throughout this appendix we write $z=E+i\eta$, and for any $\epsilon>0$ define the domain
$$D_{\epsilon} := \{E+i\eta\ |\ d(E,\Sigma_d)\geq\epsilon,\ \eta>0 \}.$$

\subsection{Comparing $\Delta_{\bbZ^d}$ and $\Delta_{\bbZ^d_L}$}

The Combes-Thomas estimate Theorem 10.5 of~\cite{aizenman2015random} applied to the free Schrodinger operator implies the following exponential decay bound.
\begin{lemma}
\label{lem:exp-decay}
For $|x-y|>C\eta^{-1}$, the resolvent entries satisfy
\[
|\braket{x|(\Delta_{\bbZ^d} - z)^{-1}|y}| \leq C \exp(-c \eta |x-y|).
\]
\end{lemma}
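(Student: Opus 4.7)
The plan is to carry out the standard Combes-Thomas conjugation argument, which is particularly clean here because $\Delta_{\bbZ^d}$ is a bounded nearest-neighbor operator. For a vector $\alpha\in\R^d$ with $|\alpha|\leq 1$, view $e^{\alpha\cdot x}$ as a multiplication operator on $\ell^2(\bbZ^d)$ and consider the similarity transform $e^{\alpha\cdot x}\Delta_{\bbZ^d}e^{-\alpha\cdot x}$. Writing $S_j^\pm$ for the unit shifts in the $j$th coordinate direction and using $(\Delta_{\bbZ^d}f)(x)=\sum_{|y-x|=1}f(y)$, a direct calculation gives
\[
e^{\alpha\cdot x}\Delta_{\bbZ^d}e^{-\alpha\cdot x} - \Delta_{\bbZ^d} = \sum_{j=1}^d\bigl[(e^{-\alpha_j}-1)S_j^+ + (e^{\alpha_j}-1)S_j^-\bigr].
\]
Since each $S_j^\pm$ is unitary, this perturbation has operator norm at most $C_d|\alpha|$ for $|\alpha|\leq 1$, with $C_d$ depending only on the dimension.

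Because $\|(\Delta_{\bbZ^d}-z)^{-1}\|_{2\to 2}\leq \eta^{-1}$, taking $|\alpha|\leq c_d\eta$ for a sufficiently small dimensional constant $c_d$ makes the perturbation of $\Delta_{\bbZ^d}-z$ smaller than $\eta/2$ in norm, so by a Neumann series the conjugated operator is invertible and
\[
\bigl\|e^{\alpha\cdot x}(\Delta_{\bbZ^d}-z)^{-1}e^{-\alpha\cdot x}\bigr\|_{2\to 2}\leq \frac{2}{\eta}.
\]
Taking matrix elements against $\delta_x$ and $\delta_y$ yields $|\braket{x|(\Delta_{\bbZ^d}-z)^{-1}|y}|\leq 2\eta^{-1}e^{-\alpha\cdot(x-y)}$, and optimizing by choosing $\alpha = c_d\eta\,(x-y)/|x-y|$ gives the pointwise bound
\[
|\braket{x|(\Delta_{\bbZ^d}-z)^{-1}|y}|\leq 2\eta^{-1}e^{-c\eta|x-y|}.
\]
On the range $|x-y|>C\eta^{-1}$ with $C$ sufficiently large, a fraction of the exponential decay absorbs the $\eta^{-1}$ prefactor (at the cost of slightly shrinking $c$), yielding the stated form.

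No real obstacle arises: the argument is deterministic, uses only the finite range of $\Delta_{\bbZ^d}$, and is the lattice version of the result cited from \cite{aizenman2015random}. The only points requiring care are the commutator identity and choosing $|\alpha|\lesssim\eta$ so that the Neumann series converges; the prefactor $\eta^{-1}$ comes from the worst-case operator norm of the free resolvent and cannot be improved at this level of generality.
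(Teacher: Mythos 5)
Your proof is the standard Combes--Thomas conjugation, which is precisely what the paper invokes by citing \cite[Theorem 10.5]{aizenman2015random}; in that sense it is essentially the same route, just unpacked. One step deserves a caveat: the concluding claim that ``a fraction of the exponential decay absorbs the $\eta^{-1}$ prefactor'' is not rigorous as stated. Requiring $2\eta^{-1}e^{-c\eta|x-y|}\leq C'e^{-c'\eta|x-y|}$ on $\{|x-y|>C\eta^{-1}\}$ amounts to $2\eta^{-1}\leq C'e^{(c-c')\eta|x-y|}$, and at the boundary $|x-y|=C\eta^{-1}$ the right side is a fixed constant while $\eta^{-1}\to\infty$, so no choice of absolute $C,C',c'$ works uniformly in $\eta$. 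This imprecision is inherited from the lemma's own phrasing---Combes--Thomas genuinely produces the $\eta^{-1}$ prefactor---and it is harmless in every application in the paper (where $\eta>\lambda^{10}$ and $|x-y|\gtrsim L\gtrsim\lambda^{-100}$, so $\eta|x-y|$ is enormous while $\eta^{-1}$ is only polynomially large), but it would be cleaner to keep the bound in the form $|\braket{x|(\Delta_{\bbZ^d}-z)^{-1}|y}|\leq 2\eta^{-1}e^{-c\eta|x-y|}$ and absorb the prefactor only at the point of use. A second, purely cosmetic remark: your restriction $|\alpha|\leq 1$ means the decay rate saturates at $O(1)$ rather than $c\eta$ when $\eta\gtrsim 1$; again immaterial here since the paper only uses the lemma for $\eta\ll 1$.
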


To transfer bounds for the resolvent on $\bbZ^d$ to the resolvent on $\bbZ^d_L$ we use the identity
\[
\braket{x|(\Delta_{\bbZ^d_L}-z)^{-1}|y}
= \sum_{k\in\bbZ^d} \braket{x|(\Delta_{\bbZ^d}-z)^{-1}|(y+kL)}.
\]
Combined with the Combes-Thomas estimate above this implies the following result.
\begin{lemma}
\label{cor:resolvent-truncation}
Suppose $z= E+i\eta$, $\eta<1$ and $L \gg \eta^{-2}$.
Then for $x,y\in\bbZ^d$ such that $|x-y+Lk| > |x-y|$ for all $k\in\bbZ^d\setminus\{0\}$, we have
\[
|\braket{x|(\Delta_{\bbZ^d_L}-z)^{-1}|y} -
\braket{x|(\Delta_{\bbZ^d}-z)^{-1}|y}| \leq \exp(-c \eta L).
\]
In particular it follows that for all $1\leq p\leq \infty$,
\[
C_p^{-1}\|(\Delta_{\bbZ^d_L}-z)^{-1}\ket{0}\|_{\ell^p(\bbZ^d_L)}\leq \|(\Delta_{\bbZ^d}-z)^{-1}\ket{0}\|_{\ell^p(\bbZ^d)}
\leq C_p
\|(\Delta_{\bbZ^d_L}-z)^{-1}\delta_0\|_{\ell^p(\bbZ^d_L)}.
\]
\end{lemma}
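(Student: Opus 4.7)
The starting point is the unfolding identity
\[
\braket{x|(\Delta_{\bbZ^d_L}-z)^{-1}|y} = \sum_{k\in\bbZ^d}\braket{x|(\Delta_{\bbZ^d}-z)^{-1}|(y+kL)},
\]
recorded above the lemma. I would first justify it: the right-hand side converges absolutely by Lemma~\ref{lem:exp-decay}, is invariant under the shifts $y\mapsto y+Lk'$ that define $\bbZ^d_L$, and applying $\Delta_{\bbZ^d_L}-z$ to it produces $\delta_{x,y}$ on $\bbZ^d_L$; uniqueness of the torus Green's function then forces equality.

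For the pointwise bound, the hypothesis $|x-y+Lk|>|x-y|$ for all $k\neq 0$ places $x-y$ in the open Voronoi cell $(-L/2,L/2)^d$ of $L\bbZ^d$, so in particular $|x-y|_\infty<L/2$. Combining this with the coordinatewise estimate $|(x-y+Lk)_j|\geq L|k_j|-L/2$ in any coordinate with $k_j\neq 0$ yields $|x-y+Lk|\geq cL|k|$ for every $k\neq 0$ and a dimensional constant $c>0$. Since $L\gg\eta^{-2}\gg\eta^{-1}$ by the assumption $\eta<1$, every such distance sits above the Combes--Thomas threshold of Lemma~\ref{lem:exp-decay}, giving $|\braket{x|(\Delta_{\bbZ^d}-z)^{-1}|(y+kL)}|\leq Ce^{-c'\eta L|k|}$. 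Summing the resulting geometric series over $k\in\bbZ^d\setminus\{0\}$ produces the claimed $e^{-c\eta L}$ bound.

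For the $\ell^p$ comparison, fix a fundamental domain $F=[-L/2,L/2)^d\cap\bbZ^d$ and identify $\bbZ^d_L$ with $F$. The first claim applied to each $y\in F$ gives a pointwise discrepancy of size at most $e^{-c\eta L}$ between the two Green's functions at $(y,0)$, so the corresponding $\ell^p$-norms differ by at most $L^{d/p}e^{-c\eta L}$. The remaining $\ell^p$-tail $\|(\Delta_{\bbZ^d}-z)^{-1}\ket{0}\|_{\ell^p(\bbZ^d\setminus F)}$ is controlled by a similar exponentially small quantity via a second application of Lemma~\ref{lem:exp-decay}. Both correction terms are $o(1)$ under $L\gg\eta^{-2}$ (for $L$ polynomial in $\eta^{-1}$, which is the regime of interest), and since both $\ell^p$-norms are bounded below by $|\braket{0|(\Delta_{\bbZ^d}-z)^{-1}|0}|\gtrsim 1$ uniformly for $z\in D_\eps$, the additive comparison upgrades to the multiplicative constant $C_p$ in the statement. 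The only mildly delicate step is this last promotion from additive to multiplicative, but it is automatic once the first claim forces the diagonal entries of the two resolvents to essentially agree.
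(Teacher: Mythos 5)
Your proof is correct and takes the same route the paper intends: the lemma appears without a detailed proof, with only the remark that it follows from the unfolding identity and the Combes--Thomas bound of Lemma~\ref{lem:exp-decay}, and your argument is exactly a careful fleshing out of that. Two small points: the parenthetical hedge ``for $L$ polynomial in $\eta^{-1}$'' is unnecessary -- since $L^{d/p}e^{-c\eta L}$ is decreasing in $L$ once $L\gtrsim\eta^{-1}$, the correction is uniformly super-exponentially small over the entire range $L\gg\eta^{-2}$. And you are right that the promotion from an additive to a multiplicative comparison silently uses a lower bound $|(\Delta_{\bbZ^d}-z)^{-1}_{00}|\gtrsim 1$, which holds for $z\in D_\eps$ (via Lemma~\ref{lem:dos-regularity}); this is consistent with the paper's blanket convention that constants may depend on $\eps$, but it is worth making explicit, as the lemma statement does not visibly constrain $E$.
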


\subsection{Density of states estimate}

Next we state a bound for the density of states for the Laplacian on $\bbZ^d$.  To state the bound we
introduce the measure $\gamma$ on $\Real$ which is the pushforward of the (normalized)
Lebesgue measure on $\Torus^d$
by the dispersion relation $\omega:\Torus^d \to\Real$.  That is, $\gamma$ satisfies
\[
\int_{\mathbb{R}} f(\sigma) \diff\gamma= (2\pi)^{-d} \int_{\Torus^d} f(\omega(p)) \diff p.
\]

This density is actually smooth outside of a finite set of points, and moreover is bounded in $d\geq 3$.
\begin{lemma}
\label{lem:free-dos}
The measure $\gamma$ has a density $\gamma = \rho(x)\diff x$ which is supported on $[-2d,2d]$ and
satisfies for $d\geq 2$ the upper bound
\[
|\rho(\sigma)| \leq C (1 + 1_{d=2}\log(|\sigma|^{-1})
\]
and a lower bound on $[-2d+\eps,2d-\eps]$:
\[
\rho(\sigma) \geq c_\eps.
\]
Moreover, $\rho$ is smooth outside of $\Sigma_d$.
\end{lemma}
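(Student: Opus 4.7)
The plan is to represent $\rho$ as a surface integral via the coarea formula, then analyze it using Morse theory at the critical set of $\omega$. Since $\partial_j\omega(\xi)=-2\sin(\xi_j)$, the critical set of $\omega$ is exactly $\{0,\pi\}^d$, and at any such point the Hessian is diagonal with entries $\pm 2$. Hence $\omega$ is a Morse function, and a critical point with $p$ coordinates equal to $\pi$ has critical value $2d-4p$, so the critical values are exactly $\Sigma_{\rm crit}$. The coarea formula then gives
$$
(2\pi)^{-d}\int_{\Torus^d}f(\omega(\xi))\,d\xi=\int_{\Real}f(E)\Bigg((2\pi)^{-d}\!\int_{\omega^{-1}(E)}\!\frac{d\mcal{H}^{d-1}(\xi)}{|\nabla\omega(\xi)|}\Bigg)\,dE,
$$
which identifies the bracketed quantity as $\rho(E)$ and fixes the support as $\omega(\Torus^d)=[-2d,2d]$.

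Smoothness outside $\Sigma_d$ is the easiest part: for $E\notin\Sigma_{\rm crit}$ the level set $\omega^{-1}(E)$ is a smooth compact hypersurface with $|\nabla\omega|$ bounded below, and by the implicit function theorem the family of level sets is $C^\infty$ in $E$, so one may differentiate under the surface integral. Since $\Sigma_d\supset\Sigma_{\rm crit}$, this gives $\rho\in C^\infty(\Real\setminus\Sigma_d)$.

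For the upper bound, my approach will be to use a partition of unity to localize near each critical point. Near $\xi_0$ with value $E_0$ and Morse signature $(p,d-p)$, the Morse lemma provides $C^\infty$ coordinates $y$ in which $\omega=E_0+\sum_{j\le p}y_j^2-\sum_{j>p}y_j^2$, so the local contribution to $\rho(E)$ is comparable, up to smooth positive factors, to the classical van Hove integral
$$
I_{p,d}(s):=\int_{|y|\le 1}\delta\Big(\sum_{j=1}^p y_j^2-\sum_{j=p+1}^d y_j^2-s\Big)\,dy,\qquad s:=E-E_0.
$$
Direct computation (spherical coordinates when $p\in\{0,d\}$, hyperbolic coordinates otherwise) yields $I_{p,d}(s)\lsim|s|_+^{d/2-1}$ at extrema (bounded for $d\ge 2$), $I_{1,2}(s)\lsim 1+\log|s|^{-1}$ at the $d=2$ saddle, and $I_{p,d}(s)\lsim 1$ at any mixed-signature saddle in $d\ge 3$. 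Since a $(1,1)$-saddle with critical value $0$ occurs precisely when $d=2$, the only logarithmic singularity arises at $\sigma=0$ in dimension two, giving $|\rho(\sigma)|\lsim 1+\mathbf{1}_{d=2}\log|\sigma|^{-1}$.

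For the lower bound, I will combine continuity and positivity of $\rho$ on $(-2d,2d)\setminus\Sigma_{\rm crit}$ with the local analysis above: near each interior critical value $\rho$ either blows up to $+\infty$ (the $d=2$, $\sigma=0$ case) or extends continuously to a strictly positive limit, since the regular part of $\omega^{-1}(E_0)$ always contributes a positive surface integral. Compactness of $[-2d+\eps,2d-\eps]$ then yields $\rho\ge c_\eps$. The main technical obstacle will be the precise asymptotic analysis of the van Hove integrals $I_{p,d}$ at mixed-signature saddles, particularly the logarithmic behavior in $d=2$, though these are classical calculations that can be handled by an explicit change of variables.
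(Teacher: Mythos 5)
Your argument is correct and rests on the same scaffolding as the paper's proof: the coarea formula identifies $\rho(\sigma)=(2\pi)^{-d}\int_{X_\sigma}|\nabla\omega|^{-1}\diff\mcal{H}^{d-1}$, smoothness away from $\Sigma_{\rm crit}$ comes from the implicit function theorem, and the lower bound comes from positivity and continuity of $\sigma\mapsto\mcal{H}^{d-1}(X_\sigma)$ together with compactness of $[-2d+\eps,2d-\eps]$. Where you genuinely diverge is the upper bound. You localize with a partition of unity, put $\omega$ in Morse normal form, and then evaluate the resulting van Hove integrals $I_{p,q}(s)\propto\int_{\max(0,s)}^1 u^{p/2-1}(u-s)^{q/2-1}\diff u$ case by case; this produces the sharp local asymptotics near each critical value and isolates the $(1,1)$-saddle at $\sigma=0$ in $d=2$ as the unique source of a logarithm. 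The paper takes a shorter, more global route: from nondegeneracy of the critical set $\mcal{C}_d$ it deduces $|\nabla\omega(p)|^{-1}\lsim\sum_{q\in\mcal{C}_d}|p-q|^{-1}$, then combines this with the uniform ball-growth estimate $\mcal{H}^{d-1}(B_r\cap X_\sigma)\lsim r^{d-1}$ to conclude that $|p-q|^{-1}$ is integrable over $X_\sigma$ when $d\ge 3$, and is log-divergent (only near $\sigma=0$) when $d=2$. Your version buys precision — you actually compute the singularity structure — at the cost of a Morse-lemma change of variables and bookkeeping of the Jacobian factors; the paper's version proves only the stated bound but in a few lines and with no local coordinates. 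Both are valid. If you keep your route, the one step I would make fully explicit is why every mixed-signature saddle in $d\ge 3$ gives a bounded $I_{p,q}$: the point is that the integrand $u^{p/2-1}(u-s)^{q/2-1}$ is integrable near the endpoint as soon as $p+q\ge 3$, which excludes only $p=q=1$, i.e.\ $d=2$.
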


\begin{proof}
By the coarea formula the density $\rho(\sigma)$ is given by
\[
\rho(\sigma) = (2\pi)^{-d} \int_{X_\sigma} \frac{\diff \mcal{H}^{d-1}(p)}{|\nabla \omega(p)|},
\]
where $X_\sigma = \{p\in\Torus^d \mid \omega(p)=\sigma\}$ and $\diff\mcal{H}^{d-1}$ is the $d-1$-dimensional
Hausdorff measure.  It is already immediate from this formula that $\rho$ is smooth outside of $\Sigma_d$,
which is the set of critical values of $\omega$.

Let $\mcal{C}_d\subset\Torus^d$ be the set of critical points of $\omega$, which are
all nondegenerate.  Then $\nabla \omega(p)$ satisfies
\[
|\nabla \omega(p)| \geq c \min_{q\in\mcal{C}_d} |p-q|.
\]
Therefore
\[
\frac{1}{|\nabla \omega(p)|} \leq C \sum_{q\in\mcal{C}_d} |p-q|^{-1}.
\]
Each of the level surfaces $X_\sigma$ satisfies the ball growth condition
\[
\mcal{H}^{d-1}(B_r \cap X_\sigma) \leq Cr^{d-1},
\]
so the singularity of $|\nabla \omega(p)|^{-1}$ is integrable on $X_\sigma$ when $d\geq 3$, meaning we have
\begin{equation}
\label{eq:vh-good}
\rho(\sigma) = \int_{X_\sigma} \frac{dp}{|\nabla e(p)|} \leq C.
\end{equation}
This implies the upper bound for $d\geq 3$.

In dimension $d=2$ there is a logarithmic singularity at $\sigma=0$ (owing to the fact that $|x|^{-1}$ is barely not integrable in one dimension), so instead we replace~\eqref{eq:vh-good} by
\[
\int_{X_\sigma} \frac{dp}{|\nabla \omega(p)|} \leq C \log |\sigma|^{-1}.
\]

For the lower bound, simply notice that $|\nabla \omega(p)| \leq 2d$, so $\rho(\sigma) \geq |X_\sigma|$.
The function $\sigma\mapsto |X_\sigma|$ is positive on $(-2d,2d)$ and continuous so the lower bound follows
from compactness.
\end{proof}

\subsection{Regularity of the resolvent entries}

Lemma~\ref{lem:free-dos} implies some regularity for the entries of the resolvent which we use
to analyze the stability of the self-consistent equation. We have regularity on
the domains $D_\eps$, given by
$$D_{\eps} := \{E+i\eta\ |\ d(E,\Sigma_d)\geq\eps,\ \eta>0 \}.$$
\begin{lemma}
\label{lem:dos-regularity}
The function $\phi_L(z) := (\Delta_{\bbZ^d_L}-z)^{-1}_{00}$ satisfies
\[
\|\phi_L\|_{C^2(D_\eps)} \leq C,
\]
where $C$ depends implicitly on $\eps$ but not on $\lambda$ or $L$.
Moreover,
\begin{align}\label{eq:diagonalResolventLowerBound}
	\Impt\phi_L\gtrsim 1
\end{align}
\end{lemma}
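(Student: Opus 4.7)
The plan is to reduce both statements to the corresponding assertions about the infinite-volume function $\phi_\infty(z) := (\Delta_{\bbZ^d}-z)^{-1}_{00}$, and then transfer to $\phi_L$ via a Combes--Thomas comparison. By Fourier inversion and the coarea formula, $\phi_\infty$ is the Cauchy--Stieltjes transform of the density of states,
\[
\phi_\infty(z) = \int_{\Real} \frac{\rho(\sigma)}{\sigma - z}\, d\sigma,
\]
where $\rho$ is as in Lemma~\ref{lem:free-dos}. To bound $\|\phi_\infty\|_{C^2(\bar D_\eps)}$, I would split $\rho = \rho_1 + \rho_2$, with $\rho_1 \in C_c^\infty(\Real)$ supported in $\{E : d(E,\Sigma_d) > \eps/4\}$ (where $\rho$ is smooth by Lemma~\ref{lem:free-dos}) and $\rho_2$ supported in the complement. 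The integral $\int \rho_2(\sigma)/(\sigma-z)\,d\sigma$ extends analytically across the real axis in $D_\eps$ and is therefore smooth there with bounds depending only on $\eps$. For the smooth compactly supported part, integration by parts yields
\[
\partial_z^k \int \frac{\rho_1(\sigma)}{\sigma-z}\, d\sigma = \int \frac{\rho_1^{(k)}(\sigma)}{\sigma-z}\, d\sigma,
\]
and the Cauchy transform of any smooth compactly supported function extends continuously to $\bar{\bbH}$ with sup-norm controlled by a Sobolev norm of the input. Combining these gives $\|\phi_\infty\|_{C^2(\bar D_\eps)} \lsim 1$.

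The second step is to compare $\phi_L$ to $\phi_\infty$. Poisson summation (equivalently, iterating the argument used in the proof of Proposition~\ref{prp:Zd-to-ZdL}) gives
\[
\phi_L(z) = \phi_\infty(z) + \sum_{n \in \bbZ^d \setminus \{0\}} (\Delta_{\bbZ^d} - z)^{-1}_{0,Ln}.
\]
For $z \in D_\eps$ with $\eta L |n| \gtrsim 1$, Lemma~\ref{lem:exp-decay} bounds each term by $C e^{-c\eta L|n|}$, and the same holds for derivatives via $\partial_z^k (\Delta_{\bbZ^d} - z)^{-1} = k!\,(\Delta_{\bbZ^d} - z)^{-(k+1)}$ combined with Combes--Thomas applied to iterated resolvents. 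Summing gives $\|\phi_L - \phi_\infty\|_{C^2(D_\eps)} \lsim e^{-c\eta L}$, so the $C^2$ bound transfers from $\phi_\infty$ to $\phi_L$.

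The lower bound $\Impt\phi_L \gtrsim 1$ likewise transfers from the corresponding lower bound on $\Impt\phi_\infty$, which follows directly from the Poisson integral representation
\[
\Impt\phi_\infty(E + i\eta) = \int \frac{\eta\,\rho(\sigma)}{(\sigma - E)^2 + \eta^2}\, d\sigma \gtrsim c_\eps \int_{E - \eps/2}^{E + \eps/2} \frac{\eta\, d\sigma}{(\sigma - E)^2 + \eta^2} \gtrsim 1,
\]
using the lower bound $\rho \geq c_\eps$ from Lemma~\ref{lem:free-dos}, with $\eta$ understood to stay bounded (the complementary regime being trivial by the a priori bound $|\phi_L| \leq \eta^{-1}$).

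The main obstacle is that the Combes--Thomas comparison between $\phi_L$ and $\phi_\infty$ is only effective when $\eta L \gtrsim 1$, so the uniform-in-$L$ claim should be interpreted as restricting implicitly to this regime. This is harmless in the applications of the lemma, where $L = \lceil\lambda^{-100}\rceil$ while $\eta$ is always at least a small positive power of $\lambda$, so $\eta L$ is enormous.
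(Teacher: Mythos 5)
Your approach is essentially the same as the paper's: write $\phi_\infty(z)=\int\rho(\sigma)/(\sigma-z)\,\diff\sigma$, exploit the smoothness of $\rho$ away from $\Sigma_d$, and transfer to $\phi_L$ via Combes--Thomas. The paper's own proof is terser --- it simply notes that $\phi_\infty$ is the harmonic extension of $H\rho+i\rho$, that $H\rho$ is smooth wherever $\rho$ is, and silently absorbs the finite-volume comparison into the remark that $L\gg\lambda^{-10}$; it does not even write out the lower bound argument. Your decomposition $\rho=\rho_1+\rho_2$ and integration by parts on the smooth compactly supported piece makes the Hilbert-transform step rigorous, and your Poisson-summation plus Combes--Thomas transfer (including the iterated-resolvent trick for derivatives) fills in what the paper leaves implicit, so this is a genuinely more careful version of the same argument. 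One small quibble: your parenthetical that the large-$\eta$ regime is ``trivial by the a priori bound $|\phi_L|\leq\eta^{-1}$'' is backwards for~\eqref{eq:diagonalResolventLowerBound} --- that upper bound shows $\Impt\phi_L\to 0$ as $\eta\to\infty$, so the lower bound is false there, and the lemma has to be read (as the paper does implicitly) as restricted to $\eta\lsim 1$, the only regime where it is ever applied.
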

\begin{proof}
Because $L\gg \lambda^{-10}$ it suffices to work with $\phi(z) := (\Delta_{\bbZ^d}-z)^{-1}_{00}$, which can be written as
\[
\phi(z) = \int_{\Torus^d} \frac{dp}{\omega(p)-z}
= \int \frac{1}{\sigma-z} \rho(\sigma)\diff \sigma.
\]
This is the analytic continuation to the upper half plane of the function $H\rho + i\rho$, where
$H\rho$ is the Hilbert transform of $\rho$.  Since $\rho$ is smooth away from $\Sigma_d$,
the Hilbert transform $H\rho$ is as well.
\end{proof}

A consequence of Lemma~\ref{lem:dos-regularity} and the Ward identity~\eqref{eq:Ward} is the following estimate for the resolvent.
\begin{corollary}
\label{cor:free-onetwo}
For $z=E+i\eta$ with $d(E,\Sigma_d)>\eps$ and $\eta>0$,
\[
\|(\Delta_{\bbZ^d} - z)^{-1}\|_{1\to 2} \lsim \eta^{-1/2}.
\]
\end{corollary}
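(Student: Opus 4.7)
The plan is to combine the Ward identity with the boundedness of the diagonal resolvent entry given by Lemma~\ref{lem:dos-regularity}. The operator norm $\|T\|_{1 \to 2}$ equals $\sup_{x} \|T \delta_x\|_{\ell^2}$, so it suffices to bound $\|(\Delta_{\bbZ^d} - z)^{-1}\delta_x\|_{\ell^2}$ uniformly in $x$. By translation invariance of $\Delta_{\bbZ^d}$, this supremum is attained (and equal) at $x = 0$.

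Next I would apply the Ward identity~\eqref{eq:Ward} in the form~\eqref{eq:explicit-Ward}, which for the free Laplacian gives
\[
\|(\Delta_{\bbZ^d} - z)^{-1}\delta_0\|_{\ell^2}^2
= \sum_{y \in \bbZ^d} |(\Delta_{\bbZ^d} - z)^{-1}_{0y}|^2
= \frac{\Impt (\Delta_{\bbZ^d} - z)^{-1}_{00}}{\eta}.
\]
Lemma~\ref{lem:dos-regularity} bounds $\|\phi\|_{C^2(D_\eps)} \leq C$ for $\phi(z) = (\Delta_{\bbZ^d} - z)^{-1}_{00}$ (the lemma is stated for $\phi_L$ but its proof reduces to the infinite-volume function $\phi$), and in particular $|\Impt \phi(z)| \leq C$ on $D_\eps$. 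Therefore
\[
\|(\Delta_{\bbZ^d} - z)^{-1}\delta_0\|_{\ell^2}^2 \lsim \eta^{-1},
\]
and taking a square root and combining with translation invariance yields the claim.

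There is essentially no obstacle: the argument is a two-line consequence of Ward plus the already-established regularity of the density-of-states function. The only mild subtlety is checking that Lemma~\ref{lem:dos-regularity} applies to $\phi$ rather than $\phi_L$, which is immediate since the proof of that lemma goes through the Fourier representation on $\Torus^d$.
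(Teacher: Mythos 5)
Your proof is correct and is exactly the argument the paper intends: the corollary is stated as "a consequence of Lemma~\ref{lem:dos-regularity} and the Ward identity," and your combination of translation invariance, the Ward identity in the form $\sum_y |R_{0y}|^2 = \eta^{-1}\Impt R_{00}$, and the boundedness of $\Impt \phi$ on $D_\eps$ is precisely that deduction. Your side remark about $\phi$ versus $\phi_L$ is also well-founded, since the proof of Lemma~\ref{lem:dos-regularity} itself reduces to the infinite-volume function.
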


\subsection{Uniqueness of the self-consistent equation}
Finally we record facts about the equation
\[
\theta = ((\Delta_L - (z+\lambda^2\theta))^{-1})_{00}.
\]
\begin{proposition}
\label{pr:thetaAPriori}
For any $z\in D_{\epsilon}$, there is a unique $\theta(z)\in \bbH$ solving
\begin{align*}
\theta(z)=(\Delta_L-(z+\lambda^2\theta(z))^{-1}_{00}
\end{align*}
Moreover for $\lambda \ll 1$, $\theta(z)$ satisfies
\[
|\Impt(\theta(z)) - \rho(z)| \lsim \lambda^2.
\]
\end{proposition}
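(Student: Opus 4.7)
The plan is to recast the self-consistent equation as a fixed point problem for the map $T:\bbH\to\bbH$ defined by $T(\theta):=\phi_L(z+\lambda^2\theta)$, where $\phi_L(w):=(\Delta_L-w)^{-1}_{00}$. For any $\theta\in\bbH$, the shifted argument $z+\lambda^2\theta$ lies in $\bbH$, and $\phi_L$ is a Herglotz function (as a diagonal matrix element of a self-adjoint resolvent), so $\Impt T(\theta)>0$ and $T$ indeed sends $\bbH$ into itself. Fixed points of $T$ are exactly the sought solutions $\theta(z)$.

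For existence and uniqueness in a bounded region, I would apply Banach's fixed point theorem on the closed set $\overline{B_K}\cap\overline{\bbH}$, with $K$ chosen so that $\lambda^2K\leq\eps/2$. Under this constraint, $\theta\in B_K$ implies $z+\lambda^2\theta\in D_{\eps/2}$, so Lemma~\ref{lem:dos-regularity} gives uniform bounds $|\phi_L|\leq C_\eps$ and $|\phi_L'|\leq C_\eps$. Choosing $K=2C_\eps$ makes $T$ a self-map of this set, and the Lipschitz constant $\lambda^2C_\eps$ makes $T$ a strict contraction once $\lambda\ll_\eps 1$, producing a unique fixed point in the ball. To upgrade uniqueness from this ball to all of $\bbH$, I would establish the a priori bound $|\theta|\leq C_\eps$ for any solution by combining the Nevanlinna inequality $|\phi_L(w)|\leq 1/\Impt w$, the decay of $\phi_L(w)$ for $\Rept w$ far from $[-2d,2d]$, and then the bounds of Lemma~\ref{lem:dos-regularity} once $z+\lambda^2\theta$ has been shown to lie in $D_{\eps/2}$; a short bootstrap confines any solution to the contraction ball.

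The quantitative bound follows by Taylor expansion: writing $\theta=\phi_L(z)+\lambda^2\theta\,\phi_L'(\xi)$ for some $\xi$ on the segment between $z$ and $z+\lambda^2\theta$, and invoking $|\theta|\lsim 1$ and $|\phi_L'(\xi)|\lsim 1$ from the $C^2$ control of Lemma~\ref{lem:dos-regularity}, one obtains $\theta=\phi_L(z)+O(\lambda^2)$. Taking imaginary parts and using the identification of $\Impt\phi_L(z)$ with the Poisson extension of $\rho$ described in Section~\ref{sec:prelims} then completes the estimate.

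The main obstacle will be the a priori bound needed for global uniqueness: the direct Nevanlinna bound $|\theta|\leq 1/\eta$ is too weak to trap $\theta$ in the contraction ball when $\eta$ is very small. The bootstrap sketched above addresses this by first localizing $\Rept(z+\lambda^2\theta)$ near $E$ (using the decay of $\phi_L$ outside the spectrum together with Lemma~\ref{lem:free-dos}) before invoking the refined bounds of Lemma~\ref{lem:dos-regularity}; I expect this step to require the most care, while the contraction and Taylor estimates are otherwise routine.
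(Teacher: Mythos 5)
Your plan shares the Taylor-expansion endgame with the paper (once $|\theta|\lsim 1$ is known, both arguments conclude identically via the $C^1$ bounds of Lemma~\ref{lem:dos-regularity}), but the existence/uniqueness structure is genuinely different, and there is a real gap exactly where you flag it.

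The paper never needs a global a priori bound to prove uniqueness. It writes $\Phi_z$ as an average over $r\in\R$ of the M\"obius maps $w\mapsto (r-z-\lambda^2 w)^{-1}$; each sends $\bbH$ into $\bbH$ and is a strict contraction in the hyperbolic metric because $\Impt z>0$, so $\Phi_z$ has a unique fixed point in $\bbH$ with no bound on $\theta$ required. The bound $|\theta|\lsim 1$ then drops out for free: one checks $\Phi_z$ maps the compact convex set $B_{\eps\lambda^{-2}/2}(0)\cap\{\Impt w\geq\lambda\}$ into itself via Lemma~\ref{lem:dos-regularity}, applies Brouwer to find a fixed point there, and invokes the already-proved uniqueness to identify it with $\theta(z)$. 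Your ordering is reversed: the Banach argument on $\overline{B_K}$ is fine and gives uniqueness \emph{inside} the ball, but you need a global a priori bound $|\theta|\lsim 1$ before you can claim there is no fixed point elsewhere in $\bbH$, and that bound is precisely the content at stake.

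The bootstrap you sketch does not close with the ingredients named. The Nevanlinna bound $|\phi_L(w)|\leq (\Impt w)^{-1}$ gives only $|\theta|\leq(\eta+\lambda^2\Impt\theta)^{-1}$, which does not confine $|\theta|$ to $O(\lambda^{-1})$ on its own. Decay of $\phi_L$ for $\Rept w$ far from $[-2d,2d]$ only rules out $\Rept(z+\lambda^2\theta)$ leaving the spectrum; it does nothing if $\Rept(z+\lambda^2\theta)$ stays in $[-2d,2d]$ but drifts to within $o(\eps)$ of the critical set $\Sigma_d$, where neither Lemma~\ref{lem:free-dos} nor Lemma~\ref{lem:dos-regularity} controls $\phi_L$. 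To actually close the bootstrap in your framework, use the Stieltjes-transform structure: let $\gamma$ be the spectral measure of $\Delta_L$ at the origin (a probability measure on $[-2d,2d]$), write $w=z+\lambda^2\theta$, so $\theta=\int\frac{\diff\gamma(E')}{E'-w}$ and, with $I:=\int\frac{\diff\gamma(E')}{|E'-w|^2}$, Cauchy--Schwarz gives $|\theta|\leq I^{1/2}$. Taking imaginary parts, $\Impt\theta=(\eta+\lambda^2\Impt\theta)I$; since $\Impt\theta,\eta>0$ this forces $\lambda^2 I<1$, whence $|\theta|<\lambda^{-1}$. For $\lambda<\eps/2$ this places $z+\lambda^2\theta$ in $D_{\eps/2}$ and Lemma~\ref{lem:dos-regularity} upgrades to $|\theta|\lsim_\eps 1$. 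With that insertion your route works; otherwise the paper's Herglotz-contraction-then-Brouwer ordering sidesteps the issue entirely and is the cleaner choice.
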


\begin{proof}
Fix $z\in D_{\epsilon}$. First we show existence and uniqueness.
Define the map $\Phi_z:\mathbb{H}\to\mathbb{H}$ by
$$\Phi_z(w) := (\Delta_L - (z+\lambda^2 w))^{-1}_{00} =  \frac{1}{L^d}\sum_{\xi\in (\frac{2\pi}{L}\bbZ)^d/2\pi \bbZ^d} \frac{1}{\sum_{i=1}^d 2\cos(\xi_i) - z- \lambda^2w}.$$
For any $r\in \R$, the mapping $w\mapsto \frac{1}{r-z-\lambda^2 w},$ maps $\mathbb{H}$ to a compact subset of $\mathbb{H}$, and is a strict contraction in the standard hyperbolic distance on $\mathbb{H}$, since $\Impt z>0$.
$\Phi_z$ is an average of such maps, so $\Phi_z$ has a unique fixed point in the upper half plane.

Now we show $|\Impt(\theta(z)) - \rho(z)| \lsim \lambda^{2}$. Note that it suffices to show
$|\theta(z)|\lsim 1$, since then for $\lambda$ small enough we could invoke Lemma \ref{lem:dos-regularity} to conclude that
$$|\theta(z) - (\Delta_L - z)_{00}^{-1}|\leq |(\Delta -(z+\lambda^2\theta(z)))^{-1}_{00}-(\Delta_L-z)^{-1}_{00}|\lsim \lambda^2.$$
To prove $|\theta(z)|\lsim 1$, we consider the action of $\Phi_z$ on the set
$B_{.5\epsilon\lambda^{-2}}(0)\cap \{w\ | \Impt w\geq\lambda\}$. Since $\lambda \ll 1$ Lemma \ref{lem:dos-regularity} implies that for any $w\in B_{.5\epsilon\lambda^{-2}}(0)\cap \{w\ | \Impt w\geq\lambda\}$
$$|\Phi_z(w)|\lsim 1,$$
and
$$\Impt \Phi_z(w)\gtrsim 1.$$
The Brouwer fixed point theorem implies $\Phi_z$ has a fixed point in the set $B_{.5\epsilon\lambda^{-2}}(0)\cap \{w\ | \Impt w\geq\lambda\}$ and so by the uniqueness proven above this fixed point must be $\theta(z)$. Then the bounds above imply $|\theta(z)|\lsim 1.$

\end{proof}

\printbibliography

\end{document}